\newtheorem{assumption}{Assumption}
\newcommand{\law}{\textrm{Law}}
\newcommand{\bfY}{\mathbf{Y}}
\newcommand{\bfX}{\mathbf{X}}
\newcommand{\bfV}{\mathbf{V}}
\newcommand{\bfZ}{\mathbf{Z}}
\newcommand\daniel{}    % placeholder so \long\def can re-define cleanly
\long\def\daniel#1{#1}
\long\def\james#1{#1}
\begin{document}

\title{Divide, Interact, Sample: The Two-System Paradigm}

\begin{comment}
%\author{\name Author One \email one@stat.washington.edu \\
%       \addr Department of Statistics\\
%       University of Washington\\
%       Seattle, WA 98195-4322, USA
%       \AND
%       \name Author Two \email two@cs.berkeley.edu \\
%       \addr Division of Computer Science\\
%       University of California\\
%       Berkeley, CA 94720-1776, USA}
%
\end{comment}
\author[1]{\large{James Chok}}
\author[1]{\large{{Myung Won} Lee}}
\author[2]{\large{Daniel Paulin}}
\author[1]{\large{Geoffrey M. Vasil}}

\affil[1]{\normalsize School of Mathematics and Maxwell Institute for Mathematical Sciences, University of Edinburgh, United Kingdom}
\affil[2]{School of Physical and Mathematical Sciences, Nanyang Technological University, Singapore, \authorcr
  \{\tt james.chok, johnny.myungwon.lee, g.vasil\}@ed.ac.uk \authorcr
  \tt daniel.paulin@ntu.edu.sg}
\makeatletter
\renewcommand{\@starteditor}{}
\renewcommand{\@endeditor}{}
\makeatother
\editor{}

\maketitle

\begin{abstract}%   <- trailing '%' for backward compatibility of .sty file
Mean-field, ensemble-chain, and adaptive samplers have historically been viewed as distinct approaches to Monte Carlo sampling. In this paper, we present a unifying {two-system} framework that brings all three under one roof. In our approach, an ensemble of particles is split into two interacting subsystems that propose updates for each other in a symmetric, alternating fashion. For the memoryless two-system samplers, this cross-system interaction ensures that the finite ensemble has $\rho^{\otimes 2N}$ as its invariant distribution; for finite-adaptive variants, exact stationarity applies after the adaptation phase is frozen. The two-system construction reveals that ensemble-chain samplers can be interpreted as finite-$N$ approximations to an ideal mean-field sampler; conversely, it provides a principled recipe for discretizing mean-field Langevin dynamics into tractable parallel MCMC algorithms. The framework also connects naturally to adaptive single-chain methods: by replacing particle-based statistics with time-averaged statistics from a single chain, one recovers analogous adaptive dynamics in the long-time limit without requiring a large ensemble. We derive novel two-system versions of both overdamped and underdamped Langevin MCMC samplers within this paradigm. \james{Across synthetic benchmarks and real-world posterior inference tasks, these two-system samplers -- which use a single BCSS-2 integrator step per Metropolis--Hastings accept/reject, in contrast to the long-trajectory style of HMC/NUTS -- exhibit substantial performance gains over No-U-Turn Sampler baselines, achieving higher effective sample sizes per gradient evaluation and markedly higher wall-clock throughput. On higher-dimensional posteriors, the adaptive MAKLA-BCSS-2 methods remain stable and achieve substantially better per-gradient efficiency and wall-clock throughput than the NUTS variants in our benchmark suite.}
\end{abstract}

\begin{keywords}
  Adaptive Samplers, Ensemble Chain Samplers, Langevin, Mean-Field Samplers
\end{keywords}

\section{Introduction}
Sampling from high-dimensional probability distributions is pivotal in modern Bayesian statistics \citep{Richardson2011, vandeSchoot2021, Zhang2024}. Analytical formulas for moments or normalizing constants are often infeasible to derive or compute directly in such settings. Monte Carlo sampling provides a practical alternative: by drawing representative samples, one can approximate the expectation of a posterior while incorporating prior information, and also quantify uncertainty via posterior intervals \citep{Jaynes2003, Gelman2013}.

Two broad classes of MCMC algorithms are widely used: (i) \textit{Discrete-time} methods based on the Metropolis-Hastings algorithm, and (ii) \textit{Continuous-time} methods defined by Langevin-type stochastic differential equations (SDEs). Both approaches yield Markov chains whose stationary distribution is the desired target distribution. The increasing availability of parallel computing has made it practical to run multiple chains simultaneously, which in turn has spurred the development of ensemble-based methods that leverage interactions among chains to improve convergence.

Modern sampling strategies often adapt their proposal mechanism using either information from other chains or from the chain's own history:
\daniel{\textit{Mean-field samplers} let each particle use information from the entire ensemble's empirical distribution; this scheme has $\rho$ as its invariant distribution only in the infinite-particle limit. \textit{Ensemble-chain samplers} approximate this idea at finite $N$ using statistics computed from a subset of the other chains; existing constructions either inherit the same $N\to\infty$ requirement or restore finite-$N$ invariance only at $\mathcal{O}(N)$ extra cost per step \citep{nusken2019, inigo_2020}.} \textit{Adaptive samplers} use a single chain and adjust its proposal parameters on the fly based on the chain's past samples (e.g., continually tuning the proposal covariance based on a running estimate). This adaptation breaks the Markov property, but convergence to the target can still be established under suitable conditions.

To unify these perspectives, we introduce the \textit{two-system} approach. This framework divides an ensemble of $N$ particles into two interacting subsystems, with each subsystem using the empirical distribution of the other as the basis for its proposals. For the memoryless two-system construction, this symmetric coupling guarantees finite-$N$ invariance, while finite-adaptive variants recover exact stationarity after the preconditioner is frozen. Starting from a mean-field Langevin SDE, we show that the two-system formulation naturally yields ensemble-chain samplers via finite-particle discretization of the continuous dynamics. By alternating updates between the two subsystems in discrete time (with appropriate Metropolis-Hastings corrections), we obtain parallel MCMC algorithms that preserve detailed balance with respect to $\rho$ and are well-suited for modern multi-core hardware. \daniel{The memoryless Coupled MALA / MAKLA-BCSS-2 samplers (Algorithms~\ref{algo:interacting_mala}--\ref{algo:interacting_makla}) preserve $\rho^{\otimes 2N}$ exactly at every step. For the finite-adaptive variants (Algorithms~\ref{algo:adaptive_mala}--\ref{algo:adaptive_makla}), exact $\rho^{\otimes 2N}$-stationarity applies to the post-adaptation fixed-kernel phase, once the preconditioner has been frozen at $T_{\mathrm{adapt}}$ (Remark~\ref{rmk:finite_adaptation}); during the adaptive warm-up phase the chain is history-dependent.}

This two-system perspective also clarifies the relationship between ensemble and mean-field methods. At each iteration, an ensemble-chain sampler uses finite-sample estimates (e.g., a sample covariance) in place of exact population quantities; as $N \to \infty$, these estimates converge to their mean-field counterparts and the particle distribution approaches $\rho$. In the long-time limit, the particles approximate independent draws from $\rho$, and the empirical statistics converge to the corresponding expectations under $\rho$. Similarly, an adaptive single-chain sampler can be viewed as the $N=1$ analog of a mean-field scheme: instead of averaging over many particles at a given time, it averages over many iterations of one chain. As the number of iterations grows, the running time-average of the statistic converges to the true expectation under $\rho$. In this way, adaptive samplers mirror the effect of ensemble methods, substituting temporal averaging for population averaging.

In summary, this work makes the following contributions:
\begin{itemize}
    \item We propose a unified \textbf{two-system sampling framework} that encompasses ensemble-chain MCMC, mean-field samplers, and adaptive samplers within a single coherent paradigm.
    \item Using this framework, we derive \textbf{novel two-system MCMC algorithms} for both overdamped and underdamped Langevin dynamics. Specifically, two-system versions of the Metropolis-Adjusted Langevin Algorithm (MALA) and a new BCSS-2 instantiation of the Metropolis-Adjusted Kinetic Langevin Algorithm (MAKLA) -- which we refer to throughout as \textbf{MAKLA-BCSS-2}, to distinguish it from the original Verlet-based MAKLA of \citet{BouRabee2024} -- in both ensemble-chain and adaptive forms. \emph{All variants use a single BCSS-2 integrator step per Metropolis--Hastings accept/reject ($L=1$), with persistent momentum (flipped on rejection); we never run multi-step Hamiltonian trajectories.}
    \item \james{Through extensive experiments, we provide \textbf{empirical evidence of substantial performance gains} over strong No-U-Turn Sampler (NUTS) \citep{hoffman2014no} baselines. In particular, adaptive and ensemble-chain MAKLA-BCSS-2 variants improve substantially over their static-preconditioned counterparts on challenging synthetic and real-data targets, with more than an order-of-magnitude reduction in worst-component gradient cost on the banana benchmark. On 45 different posterior models, the MAKLA-BCSS-2 family achieves the best aggregate per-gradient efficiency and wall-clock throughput among the twelve samplers compared, with consistent gains over Hessian-preconditioned NUTS and much larger improvements over default diagonal-mass NUTS variants.}
    \item \daniel{We propose a set of \textbf{practical refinements} -- per-step step-size randomization tuned by a high-acceptance ($\mathrm{acc}\geq 1-h/16$) criterion, a partial-reset adaptation schedule, Hessian-at-MAP rescaling, the higher-order BCSS-2 BABAB inner integrator of \citet{blanes2014} in place of the standard Verlet leapfrog, and a fused log-density / gradient kernel with cached values across consecutive steps -- that together let two-system MAKLA-BCSS-2 samplers compete with NUTS in both per-gradient efficiency and wall-clock throughput in our experiments.}
\end{itemize}

\daniel{\textbf{Paper overview.} Section~\ref{sec:background} reviews the relevant background on Langevin dynamics, mean-field samplers, ensemble-chain methods, and adaptive MCMC. Section~\ref{sec:two_system_mckean_vlasov} introduces the two-system construction for McKean--Vlasov equations, in which two interacting subsystems evolve using information from one another while preserving the desired invariant distribution. Section~\ref{sec:two_system_samplers} applies this framework to sampling, deriving finite-particle ensemble-chain algorithms and explaining how adaptive samplers arise by replacing population averages with time averages. Section~\ref{sec:ensemble_chain_mcmc} develops concrete Metropolis-adjusted two-system MALA and MAKLA-BCSS-2 algorithms, together with practical modifications for step-size randomization, adaptation, and ill-conditioned targets. Section~\ref{sec:simulations_experiments} evaluates these methods on synthetic benchmarks and on $45$ \texttt{posteriordb} posteriors against ensemble, adaptive, and NUTS baselines.}

\textit{Notation}: Throughout the paper, $\rho:\mathbb{R}^d \to [0,\infty)$ denotes the target probability density we wish to sample from. We assume $\rho(x)$ is bounded, absolutely continuous, and that $\log\rho(x)$ is $C^2$-smooth. We let $S^{d}_{++}$ denote the set of real $d\times d$ symmetric positive-definite matrices, $I_d$ the $d\times d$ identity matrix, $\mathcal{P}_2(\mathbb{R}^d)$ the space of probability distributions on $\mathbb{R}^d$ with finite second moments (with densities with respect to Lebesgue measure), and $\mathcal{B}(\mathbb{R}^d)$ the Borel $\sigma$-algebra on $\mathbb{R}^d$.

\section{\daniel{Related Work}}
Our work lies at the intersection of ensemble-chain MCMC, adaptive sampling, and mean-field dynamics. Prior methods have explored various combinations of these ideas, but we provide a systematic two-system framework that unifies and extends these lines of work with minimal computational overhead.

The well-known affine invariant ensemble sampler was proposed by \citet{goodman2010ensemble}. This is an efficient sampler for low-dimensional problems, but it does not scale as well in higher dimensions as Langevin-based methods.

\textbf{Overdamped Langevin Samplers.} \citet{inigo_2020} proposed a mean-field ensemble overdamped Langevin SDE that converges to the correct target only in the infinite-particle (mean-field) limit. \citet{nusken2019} later corrected the finite-$N$ invariant measure using a leave-one-out scheme, but at the cost of $\mathcal{O}(N)$ covariance computations per step. In contrast, our two-system formulation (Section \ref{sec:two_system_mckean_vlasov}) achieves the same correction using only \emph{two} covariance evaluations per step (independent of $N$). We further present finite-particle discretizations (Section \ref{sec:two_systems_continuous_time_samplers}) that preserve the target distribution without requiring the mean-field limit.

\textbf{Underdamped Langevin Samplers.} Recent works have introduced ensemble-based Metropolis-adjusted underdamped samplers that adapt the momentum refresh \citep{leimkuhler2018ensemble, hoffman22a_ghmc, durand23a}, following ensemble Hamiltonian Monte Carlo~\citep{Buchholz2021}. However, these adaptations are limited to changes in the discretization and do not modify the underlying SDE itself. By contrast, our approach directly adapts the continuous-time dynamics via a preconditioned underdamped Langevin process in the two-system framework. Additionally, whereas the above methods adjust the integration step size to cope with ill-conditioning, we mitigate poor conditioning through a randomized step size (Section~\ref{sec:step_size_randomization}) and a one-time preprocessing step (Section~\ref{sec:ill_conditioned_distribution}) that rescales the target based on local curvature.

\daniel{\textbf{Adaptive MCMC.} A separate line of work develops adaptive MCMC schemes that tune the proposal parameter from a single chain's history rather than from cross-chain interactions. Beginning with the adaptive Metropolis algorithm of \citet{Haario2001}, this literature \citep{Atchad2006, roberts2007, Liang2010, laitinen2024invitationadaptivemarkovchain} establishes the two foundational requirements -- geometric ergodicity of the frozen kernel for each fixed parameter value and diminishing adaptation along the parameter trajectory -- under which the resulting non-Markovian process still converges to the target. Practical Hamiltonian-style implementations (notably the dual-averaging step adaptation and three-phase window adaptation of \citet{hoffman2014no}) are a special case of this paradigm and underlie production tools such as \texttt{Stan} \citep{carpenter2017stan}. Our two-system construction is complementary to this line of work: where adaptive MCMC replaces population averaging with temporal averaging in a single chain, the two-system approach replaces population averaging with cross-subsystem averaging in a finite ensemble (Section~\ref{sec:two_system_samplers}); both reduce to the same mean-field limit but offer different trade-offs in finite-$N$ behavior, and our adaptive two-system schemes (Algorithms~\ref{algo:adaptive_mala}--\ref{algo:adaptive_makla}) combine the two by adapting at the subsystem level.} 

\section{Background}\label{sec:background}
As mentioned above, samplers can broadly be categorized into \textit{continuous-time methods} (based on Langevin-type SDEs) and \textit{discrete-time methods} (e.g., the Metropolis-Hastings algorithm). Continuous-time samplers leverage the fact that a Langevin SDE has $\rho(x)$ as its stationary distribution, meaning that as $t\to\infty$, the law of the SDE converges to $\rho(x)$. However, in practice, one must discretize the SDE to simulate it, and such discretizations typically introduce bias and no longer preserve $\rho(x)$ exactly. To correct this bias, the discretized SDE can be used as a proposal within a Metropolis-Hastings framework, yielding a Markov chain that converges to the correct target. More broadly, the Metropolis-Hastings algorithm provides a general recipe for constructing discrete-time samplers that converge to $\rho(x)$. In high-dimensional settings, however, naive proposals can lead to very low acceptance rates, resulting in poor mixing and inefficient exploration of the target.

\subsection{Discrete-Time Samplers}\label{sec:intro_discrete_time_samplers}
The classic Metropolis-Hastings algorithm \citep{Metropolis1953, Hastings1970} constructs a Markov chain $\{X_k\}_{k\ge 0}$ with stationary distribution $\rho(x)$. Each iteration consists of two steps: (i) Given the current state $X_k$, propose a new state $\widetilde X_{k+1} \sim Q_C(\cdot\mid X_k)$, where $Q_C(\cdot\mid X_k)$ is a proposal distribution parameterized by some matrix $C$ (e.g., a covariance); (ii) Accept the proposal with probability 
\begin{equation}
    A(X_k, \widetilde X_{k+1})\ =\ \min\!\left\{1,\ \frac{\rho(\widetilde X_{k+1})}{\rho(X_k)}\,\frac{Q_C(X_k\mid \widetilde X_{k+1})}{Q_C(\widetilde X_{k+1}\mid X_k)}\right\},\notag
\end{equation}
and set $X_{k+1}=\widetilde X_{k+1}$ if accepted (otherwise $X_{k+1}=X_k$). We denote by $P_C$ the Markov transition kernel obtained by composing the proposal step with the Metropolis--Hastings accept--reject correction, and write $X_{k+1}\sim P_C(X_k,\cdot)$. We refer to \citep{roberts_1996_aperiodic, roberts_1996_mala, Robert2004} for more details on conditions ensuring detailed balance (and hence convergence) for the proposal distribution $Q_C$ and acceptance ratio $A$.

\james{ For the remainder of this paper, we only consider the case in which the proposal distribution is parameterized by a covariance, or preconditioning, matrix. Concretely, we assume that the proposal parameter belongs to the admissible set
\begin{equation}\label{eq:admissible_set}
    \Gamma\ :=\ \Gamma_{\varepsilon,K_{\mathrm{cov}}}\ :=\ \big\{\Theta\in S^d_{++}\ :\ \varepsilon\, I_d\preceq \Theta\ \text{and}\ \lVert\Theta\rVert_{\mathrm{op}}\leq K_{\mathrm{cov}}\big\},
\end{equation}
for fixed constants $0<\varepsilon<K_{\mathrm{cov}}<\infty$, where $\lVert\cdot\rVert_{\mathrm{op}}$ denotes the operator (spectral) norm and $A\preceq B$ means $B-A$ is positive semi-definite. To map an empirical covariance $A\in S^d_{+}$ (positive semi-definite) into $\Gamma$, we use the \emph{cap-then-ridge} map
\begin{equation}\label{eq:projection}
    \Pi_{\Gamma}^{\varepsilon,K_{\mathrm{cov}}}(A)\ =\ \varepsilon\, I_d\ +\ \alpha(A)\,A,
    \qquad
    \alpha(A)\ =\
    \begin{cases}
        1, & \lVert A\rVert_{\mathrm{op}}=0,\\[4pt]
        \min\!\left(1,\ \dfrac{K_{\mathrm{cov}}-\varepsilon}{\lVert A\rVert_{\mathrm{op}}}\right), & \lVert A\rVert_{\mathrm{op}}>0.
    \end{cases}
\end{equation}
This first scales the empirical covariance component (only when its operator norm exceeds $K_{\mathrm{cov}}-\varepsilon$) and then adds the ridge $\varepsilon I_d$, in that order. By construction, for any $A\succeq 0$ both bounds hold simultaneously: $\Pi_{\Gamma}^{\varepsilon,K_{\mathrm{cov}}}(A)\succeq\varepsilon I_d$ since $\alpha(A)A\succeq 0$, and
\begin{equation*}
    \big\lVert\Pi_{\Gamma}^{\varepsilon,K_{\mathrm{cov}}}(A)\big\rVert_{\mathrm{op}}
    = \varepsilon + \alpha(A)\lVert A\rVert_{\mathrm{op}}
    \leq \varepsilon + (K_{\mathrm{cov}}-\varepsilon) = K_{\mathrm{cov}},
\end{equation*}
so $\Pi_{\Gamma}^{\varepsilon,K_{\mathrm{cov}}}(A)\in\Gamma$. The map only requires $\lVert A\rVert_{\mathrm{op}}$, which can be computed exactly or approximated in practice by power iteration (each iteration is a single dense matrix-vector product); using a certified upper bound in place of $\lVert A\rVert_{\mathrm{op}}$ preserves the theoretical guarantee. The set $\Gamma$ is compact as a subset of the finite-dimensional space of symmetric $d\times d$ matrices and is contained in $S^d_{++}$; this compactness is the standard regularity used in adaptive-MCMC convergence theory \citep{Atchad2006, roberts2007}, and it is the formulation used in our implementation of Algorithms~\ref{algo:adaptive_mala}--\ref{algo:adaptive_makla}.}

\james{\textit{Mean-field samplers} \citep{clarte2022} modify the proposal distribution so that the parameter $C$ is now a (bounded and Lipschitz continuous) functional of the current law of the chain $C:\mathcal{P}_2(\mathbb{R}^d)\to\Gamma$.} The chain at iteration $k$ then evolves according to the nonlinear Markov transition kernel
\begin{equation}\label{eq:discrete_mean_field_dynamics}
    X_{k+1}\sim P_{C(\mu_k)}(X_k|\cdot),\qquad  \text{where}\qquad  \mu_k = \law(X_k),
\end{equation}
with proposals $\widetilde X_{k+1}\sim Q_{C(\mu_k)}(\cdot \mid X_k)$. Since $\mu_k$ is generally unknown in closed form, one typically approximates it by simultaneously evolving $N$ particles $\bfX_k=\{X_k^i\}_{i=1}^N$. Concretely, the finite-particle approximation then evolves the particles via
\begin{equation}
    X_{k+1}^i\ \sim\ P_{C(\delta_{\bfX_k})}(X_k^i,\, \cdot),\qquad \delta_{\bfX_k}\ =\ \frac{1}{N}\sum_{i=1}^N \delta_{X_k^i},\notag
\end{equation}
where $\delta_x$ denotes a Dirac at $x$, with proposals $\widetilde X_{k+1}^i\sim Q_{C(\delta_{\bfX_k})}(\cdot|X_k^i)$ for $i=1,\ldots, N$. Then, as $N\to\infty$, this particle discretization recovers the mean-field dynamics \eqref{eq:discrete_mean_field_dynamics}, for which $\rho$ is the invariant measure. However, for any fixed finite $N$, the particle system generally does \emph{not} have $\rho$ as its stationary distribution \citep{clarte2022, sprungk2023}. 

\textit{Ensemble-chain methods} evolve $N>1$ interacting Markov chains simultaneously. Again, let  $\bfX_k=\{X_k^i\}_{i=1}^N$ denote the collection of $N$ chains at time $k$. We define a data-dependent statistic based on a designated subset $S \subset \{1,\ldots, N\}$ of ``reference'' particles:
\begin{equation}
    \widehat C(S)\ =\ C(\delta_{\{X_k^i\}_{i\in S}}),\notag
\end{equation}
where $S$ indexes the subset of particles that will be held \emph{fixed} during the next update.  Using this statistic, we propose updates for the complementary set $S^c=\{1,\ldots, N\}\setminus S$ as 
\begin{equation}
    X_{k+1}^i \sim P_{\widehat C(S)}(X_k^i,\, \widetilde X_{k+1}^i), \qquad \text{with}\qquad \widetilde X_{k+1}^i \sim Q_{\widehat C(S)}(\cdot\mid X_k^i), \quad \forall\, i\in S^c,\notag
\end{equation}
and we set $X_{k+1}^i = X_k^i$ for all $i \in S$. Importantly, keeping the particles in $S$ fixed (i.e., excluding self-interaction) is essential to ensure that the joint state $\bfX_k$ has $\rho^{\otimes N}$ as its invariant distribution \citep{nusken2019}. This exclusion principle mirrors the leave-one-out idea in continuous-time ensemble samplers (see Section \ref{sec:intro_continuous_time_samplers}) and guarantees detailed balance for the combined ensemble update.

\james{\textit{Adaptive samplers} use a single chain but update proposal parameters ``on the fly'' based on the chain's history. Concretely, at iteration $k$, we compute
\begin{align*}
    \Theta_{k}\ &=\ \left(1-\frac{1}{k}\right)\Theta_{k-1} +\frac{1}{k}\Pi_{\Gamma}^{\varepsilon,K_{\mathrm{cov}}}\left((X_k-\mu_{k-1})(X_k-\mu_{k-1})^\top\right),\\
    \mu_k\ &=\ \left(1-\frac{1}{k}\right)\mu_{k-1} +\frac{1}{k}X_k,
\end{align*}
so that $\Theta_k$ is a running estimate of the covariance of the historical samples. Although $\Theta$ and $C:\mathcal{P}_2(\mathbb{R}^d)\to\Gamma$ represent the same proposal parameter, we write $\Theta$ to emphasize that the parameter is estimated from the historical samples.} 

Then the next proposal is generated via
\begin{equation*}
    X_{k+1}\sim P_{\Theta_k}(X_k,\,\widetilde X_{k+1}),\qquad\text{with}\qquad \widetilde X_{k+1}\sim Q_{\Theta_k}(\cdot |X_k).
\end{equation*}
Because the proposal now depends on past samples, the resulting process is no longer Markovian. To ensure convergence to $\rho$, two key conditions are typically required:
\begin{itemize}
    \item \textit{Geometric ergodicity.} For every fixed parameter value $\gamma\in \Gamma$,
    \begin{equation*}
    \lVert P_\gamma^n(x,\cdot) - \rho(\cdot)\rVert_{\text{TV}}\ \leq\ c\, r^n,
    \end{equation*} 
    for some constants $c>0$, $0<r<1$. Here $P^n_\gamma$ is the $n$-step transition kernel and $\|\cdot\|_{\text{TV}}$ denotes the total variation norm. Intuitively, this says that if the algorithm is run without adaptation (parameter held fixed at $\gamma$), it should converge geometrically fast to the target.
    \item \textit{Diminishing adaptation.} 
    \begin{equation*}
    \lim_{n\to\infty}\ \sup_{x \in \mathbb{R}^d}\, \lVert P_{\gamma_{n+1}}(x,\cdot) - P_{\gamma_n}(x,\cdot)\rVert_{\text{TV}}\ =\ 0,
    \end{equation*} 
    in probability. This guarantees that the magnitude of adaptation vanishes as the chain progresses.
\end{itemize}
Under these conditions (plus some additional technical assumptions), the adaptive algorithm remains ergodic and satisfies a law of large numbers: for any bounded measurable test function $h:\mathbb{R}^d\to\mathbb{R}$,
\begin{equation}
    \frac{1}{K}\sum_{k=1}^K h(X_k)\ \to\ \int h(x)\,\rho(x)\,dx,\notag
\end{equation}
in probability as $K\to\infty$ \citep{Liang2010}. (A strong law of large numbers can also be established under stronger assumptions on $\rho$ and the kernels \citep{laitinen2024invitationadaptivemarkovchain}.) For more details on adaptive MCMC, we refer the reader to \citep{Liang2010, laitinen2024invitationadaptivemarkovchain}.

\subsection{Continuous-Time Samplers}\label{sec:intro_continuous_time_samplers}
Continuous-time samplers evolve a stochastic process whose stationary law is the target distribution $\rho$. The most common examples are based on the overdamped and underdamped Langevin dynamics. The overdamped Langevin SDE \citep{Langevin1908, Lemons1997} is given by
\begin{equation}\label{eq:overdamped_langevin}
    dX_t = C_0\,\nabla \log\rho(X_t)\, dt \;+\; \sqrt{2\,C_0}\, dW_t\,,
\end{equation}
where $C_0 \in S^d_{++}$ (often $C_0 = I_d$), $X_t\in\mathbb{R}^d$, and $W_t$ is standard $d$-dimensional Brownian motion. The underdamped Langevin (kinetic Langevin) dynamics introduces an auxiliary velocity variable $V_t$ and evolves as
\begin{align}\label{eq:underdamped_langevin}
    \begin{split}
    dV_t &= -\alpha\, V_t\, dt \;+\; \gamma\,C_0^{1/2}\, \nabla \log\rho(X_t)\, dt \;+\; \sqrt{2\alpha \gamma}\, dW_t, \\
    dX_t &= C_0^{1/2}\,V_t\, dt\,,
    \end{split}
\end{align}
with constants $\alpha,\gamma>0$ and initial conditions $X_0, V_0 \in \mathbb{R}^d$.

Under our assumptions on $\rho(x)$, if $\rho$ further satisfies a strong log-concavity condition (i.e., $\log\rho(x)$ has a globally bounded Hessian), then the law of $X_t$ under \eqref{eq:overdamped_langevin} converges to $\rho$, and the law of $(X_t,V_t)$ under \eqref{eq:underdamped_langevin} converges to the product measure $\rho(x)\,g(v)$, where $g(v)$ is the standard Gaussian density on velocities \citep{Pavliotis2014}.

\textit{Mean-Field Sampler (Continuous-Time).} \citet{inigo_2020} proposed a mean-field variant of the overdamped Langevin SDE. This sampler evolves the nonlinear SDE
\begin{equation}\label{eq:mean_field_overdamped}
    dX_t = C(\mu_t)\,\nabla \log\rho(X_t)\, dt \;+\; \sqrt{2\,C(\mu_t)}\, dW_t,
\end{equation}
where $\mu_t = \law(X_t)$ and $C(\mu_t)$ denotes (for instance) the covariance matrix of $\mu_t$. An analogous mean-field extension exists for the underdamped Langevin case. Under appropriate conditions, the law of $X_t$ in \eqref{eq:mean_field_overdamped} still converges to $\rho$ as $t\to\infty$.

As in the discrete-time setting, to simulate \eqref{eq:mean_field_overdamped} one must use a finite-particle approximation, evolving particles according to 
\begin{equation}
    dX_t^i = C(\delta_{\mathbf{X}_t})\,\nabla \log\rho(X_t^i)\, dt \;+\; \sqrt{2\,C(\delta_{\mathbf{X}_t})}\, dW_t^i, \qquad i \in \{1,\ldots, N\},\notag
\end{equation}
with each $W_t^i$ an independent Brownian motion. 

\textit{Ensemble-Chain Sampler (Continuous-Time).} \citet{nusken2019} observed that the naive particle approximation above does not preserve $\rho^{\otimes N}$ as an invariant joint law. They proposed the modified dynamics
\begin{equation}
    dX_t^i\ =\ \widehat C(\{i\}^c)\,\nabla\log\rho(X_t^i)\, dt \;+\; \sqrt{2\,\widehat C(\{i\}^c)}\, dW_t^i\,,\quad\text{for}\quad i=1,\ldots, N,\notag
\end{equation}
where $\{i\}^c=\{1,\ldots, N\}\setminus \{i\}$, and $\widehat C(\{i\}^c)$ denotes the covariance computed over all particles except the $i$th. This exclusion of self-interaction guarantees that $\rho^{\otimes N}$ remains invariant, directly paralleling the discrete-time case.

\textit{Adaptive Samplers (Continuous-Time).} Although less explored, adaptive Langevin dynamics have been proposed in recent work \citep{Kim2021, leimkuhler2025langevinsamplingalgorithminspired}. These methods dynamically modify aspects of the SDE (typically the step size or preconditioner) based on the trajectory's history. In this paper, we focus on discrete-time adaptive samplers, which are more straightforward to analyze theoretically and implement in practice.

\section{A Two-System Approach for McKean--Vlasov Equations}\label{sec:two_system_mckean_vlasov}
We now develop the two-system approach for general continuous-time mean-field SDEs (McKean--Vlasov equations), of which the overdamped and underdamped Langevin equations are special cases. The key idea is to consider two coupled systems of particles such that each system ``drives'' the evolution of the other. In a Langevin setting, this symmetric coupling is constructed to preserve the target distribution and to directly connect the behavior of the finite-$N$ particle system with its $N\to\infty$ mean-field limit.

The mean-field Langevin equation \eqref{eq:mean_field_overdamped} is an example of a more general McKean--Vlasov SDE, which takes the form:
\begin{equation}\label{eq:mckean_vlasov_sde}
    d\overline{X}_t\ =\ b(t, \overline{X}_t, \mu_t)\, dt\ +\ \sigma(t, \overline{X}_t, \mu_t)\, dW_t, \qquad t\in[0,T],
\end{equation}
where $\mu_t = \law(\overline{X}_t)$, and $b:[0,T]\times \mathbb{R}^d \times \mathcal{P}(\mathbb{R}^d)\to \mathbb{R}^d$ and $\sigma:[0,T]\times \mathbb{R}^d \times \mathcal{P}(\mathbb{R}^d)\to \mathbb{R}^{d\times d}$ are given drift and diffusion functions. To ensure existence and uniqueness of strong solutions \citep{sznitman_1991, Carmona2016}, we assume $b$ and $\sigma$ satisfy:
\begin{assumption}[Linear growth]
\label{eq:mckean_vlasov_assumption_1}
For every finite $T>0$, there exists a constant $L>0$ such that
\begin{equation}
    |b(t,x,\mu)|^2+|\sigma(t,x,\mu)|^2\leq L\left(1+|x|^2+\int_{\mathbb{R}^d}|u|^2\mu(du)\right),
\end{equation}
with norms understood in Euclidean and Frobenius.
\end{assumption}
\begin{assumption}[Lipschitz continuity]\label{eq:mckean_vlasov_assumption_2}
    There exists a constant $L>0$ such that for all $t\in[0,T]$, $x_1,x_2\in\mathbb{R}^d$, and $\mu_1,\mu_2\in\mathcal{P}_2(\mathbb{R}^d)$,
    \begin{equation}
    \begin{aligned}
    &\lvert b(t, x_1,\mu_1) - b(t, x_2,\mu_2)\rvert^2\ +\ \lvert\sigma(t, x_1,\mu_1) - \sigma(t, x_2,\mu_2)\rvert^2 \\
    &\qquad\qquad\leq\ L\,\big(\lvert x_1 - x_2\rvert^2 + \mathcal{W}_2^2(\mu_1,\mu_2)\big)\,,
    \end{aligned}
    \end{equation}
    where $\mathcal{W}_2$ denotes the 2-Wasserstein distance.
\end{assumption}

The classical particle approximation to \eqref{eq:mckean_vlasov_sde} is given by the system of $N$ interacting SDEs:
\begin{equation}
    dX_t^j\ =\ b(t, X_t^j, \delta_{\bfX_t})\, dt\ +\ \sigma(t, X_t^j, \delta_{\bfX_t})\, dW_t^j, \qquad j=1,\dots,N,\notag
\end{equation}
with $W_t^1,\ldots,W_t^N$ independent Brownian motions and $\bfX_t = \{X_t^j\}_{j=1}^N$. As $N\to\infty$, this interacting particle system converges to the mean-field equation \eqref{eq:mckean_vlasov_sde} in the sense of \emph{propagation of chaos} \citep{sznitman_1991, chaintron2022}: the particles become asymptotically `independent' and each follows the law of the McKean--Vlasov solution $\overline{X}_t$.

\textbf{The Two-System Approach.}\\
Following the two-system paradigm, we introduce a coupled SDE on an augmented state $Z_t = (X_{t}, Y_{t}) \in \mathbb{R}^{2d}$, where $X_{t}, Y_{t} \in \mathbb{R}^d$. We define
\begin{equation}\label{eq:coupled_mckean_vlasov}
    dZ_t\ =\ \mathbf{b}(t, Z_t, \pi_t)\, dt\ +\ \boldsymbol{\sigma}(t, Z_t, \pi_t)\, dW_t,
\end{equation}
where $\pi_t = \law(Z_t)$ and $W_t$ is a Brownian motion in $\mathbb{R}^{2d}$. The drift and diffusion are constructed as
\begin{equation}
    \mathbf{b}(t, Z, \pi)\ =\ \begin{pmatrix}
        b(t, X,\, p_2\!\circ \pi) \\ 
        b(t, Y,\, p_1\!\circ \pi)
    \end{pmatrix}, 
    \qquad 
    \boldsymbol{\sigma}(t, Z, \pi)\ =\ \begin{pmatrix}
        \sigma(t, X,\, p_2\!\circ \pi) & 0\\
        0 & \sigma(t, Y,\, p_1\!\circ \pi)
    \end{pmatrix},
\end{equation}
with $p_1$ and $p_2$ denoting the projection maps from a distribution on $\mathbb{R}^{2d}$ to its first and second $\mathbb{R}^d$-marginals, respectively. In words, the two-system SDE \eqref{eq:coupled_mckean_vlasov} consists of two processes $X_{t}$ and $Y_{t}$, each evolving according to the original drift/diffusion coefficients $b,\sigma$ evaluated at the empirical law of the \emph{other} system. 

Under Assumptions~\ref{eq:mckean_vlasov_assumption_1} and \ref{eq:mckean_vlasov_assumption_2}, one can show that the coupled SDE \eqref{eq:coupled_mckean_vlasov} inherit the same growth at origin bound and Lipschitz continuity, and thus have a unique strong solution on $[0,T]$ (proof in Appendix~\ref{sec:existence_and_uniquness}). Moreover, if $X_0$ and $Y_0$ are independent and identically distributed with common law $\mu_0$, then $\law(X_t)=\law(Y_t)=\mu_t$ for $0\leq t\leq T$, where $\mu_t$ is the unique solution law of the original McKean--Vlasov SDE \eqref{eq:mckean_vlasov_sde} with initial law $\mu_0$ (proof in Appendix~\ref{sec:convergence_same_distribution}). 

Next, we consider the natural particle approximation of the two-system SDE. We introduce a coupled $2N$-particle system:
\begin{equation}\label{eq:mckean_vlasov_finite_particle_approximation}
    \begin{split}
        dX_t^j\ &=\ b(t, X_t^j, \delta_{\bfY_t})\, dt\ +\ \sigma(t, X_t^j, \delta_{\bfY_t})\, dW_t^{X,j},\\
        dY_t^j\ &=\ b(t, Y_t^j, \delta_{\bfX_t})\, dt\ +\ \sigma(t, Y_t^j, \delta_{\bfX_t})\, dW_t^{Y,j}, \qquad j\in \{1,\ldots, N\},
    \end{split}
\end{equation}
where $\bfX_t = \{X_t^j\}_{j=1}^N$ and $\bfY_t = \{Y_t^j\}_{j=1}^N$. Each particle in one subsystem therefore interacts only through the empirical distribution of the opposite subsystem. This construction preserves the usual mean-field structure while decoupling the empirical measure used to update each subsystem. As shown in Appendix~\ref{sec:propagation_of_chaos}, the approximation~\eqref{eq:mckean_vlasov_finite_particle_approximation} satisfies the standard propagation-of-chaos property as $N\to\infty$.
% By construction, \eqref{eq:mckean_vlasov_finite_particle_approximation} reduces to two independent copies of the original interacting particle system in the limit $N\to\infty$ (Appendix \ref{sec:propagation_of_chaos}). 

\begin{remark}
    Our two-system construction is related to the conditional propagation of chaos used in maximum-likelihood estimation \citep{Carmona2016, johnston2024, kuntz_2023}, where one ensemble $\bfX_t$ approximates an integral that drives the dynamics of a separate process $Y^0_t$. The structural difference is that conditional propagation of chaos couples each $X^i_t$ to its own ensemble (so the $X^i$'s remain mutually correlated), whereas our two-system construction uses the empirical law of $\bfY_t$ to drive $\bfX_t$ and vice versa, yielding conditional independence within each subsystem.
\end{remark}

\section{A Two-System Approach for Samplers}\label{sec:two_system_samplers}
\james{
For the rest of this paper, unless otherwise stated, we consider a restricted class of McKean--Vlasov equations
\begin{equation}\label{eq:restricted_mckean_vlasov_sde}
    d\overline{X}_t\ =\ C(\mu_t)b(\overline{X}_t)\, dt\ +\ \sqrt{2C(\mu_t)}\, dW_t, \qquad t\in[0,T],
\end{equation}
where $\mu_t=\law(\overline{X}_t)$, and $C(\mu)$ is Lipschitz continuous, bounded, and positive definite for all $\mu\in \mathcal{P}_2(\mathbb{R}^d)$. In particular, $b(\overline{X}_t)$ no longer depends on $t$ and the law of $\overline{X}_t$. As such, for $b(x)=-\nabla V(x)$, the invariant distribution takes the form $\rho(x)\propto\exp(-V(x))$. Thus, the two-system framework introduces a pair of interacting Markov chains, where each chain generates proposals based on the empirical law of the other. When applied to a continuous-time mean-field sampler, this coupling preserves $\rho$ as the invariant distribution in both the finite-particle setting and the infinite-particle limit. By adding Metropolis--Hastings acceptance steps to an alternating (staggered) time-discretization of this coupled system, we obtain an ensemble-chain sampler that exactly maintains the target distribution $\rho$.

The continuous-time statements that follow assume an abstract bounded, Lipschitz-on-$(\mathcal{P}_2,\mathcal{W}_2)$, uniformly positive-definite preconditioner $C$. The empirical-covariance preconditioners used in the discrete-time Metropolis-adjusted algorithms of Section~\ref{sec:ensemble_chain_mcmc} need not satisfy this global $\mathcal{W}_2$-Lipschitz condition on all of $\mathcal{P}_2(\mathbb{R}^d)$; their exact $\rho^{\otimes 2N}$-invariance follows instead from conditional detailed balance in the discrete-time kernels (Appendix~\ref{appendix:ensemble_chain_proof}), independently of the continuous-time well-posedness theory.

This connection demonstrates that ensemble-chain samplers approximate expectations under $\rho$ by leveraging particle interactions. In the limit as $N \to \infty$ (and as time $t\to\infty$), the ensemble statistics converge to their mean-field counterparts. This also motivates the use of adaptive samplers, which replace the whole ensemble with a single chain and rely on long-run time averages to recover the same mean-field behavior.}

\subsection{\daniel{Two-System Continuous-Time Samplers}}\label{sec:two_systems_continuous_time_samplers}
To illustrate, consider the mean-field overdamped Langevin example. The two-system construction leads to a coupled SDE on $Z_t = \{X_t, Y_t\}$ (with $X_t, Y_t \in \mathbb{R}^d$):
\begin{equation}\label{eq:overdamped_langevin_two_system}
    \begin{split}
    dX_t\ &=\ C(p_2\!\circ\mu_t)\,\nabla\log\rho(X_t)\, dt \;+\; \sqrt{2\,C(p_2\!\circ \mu_t)}\, dW_t^X,\\
    dY_t\ &=\ C(p_1\!\circ\mu_t)\,\nabla\log\rho(Y_t)\, dt \;+\; \sqrt{2\,C(p_1\!\circ \mu_t)}\, dW_t^Y,
    \end{split}
\end{equation}
for $t\in[0,T]$, where $\mu_t = \law(Z_t)$ and we assume $C(\mu)$ remains bounded and positive definite for all $\mu\in\mathcal{P}(\mathbb{R}^d)$. Under mild regularity conditions on $\nabla \log\rho$, the two-system McKean--Vlasov equation \eqref{eq:overdamped_langevin_two_system} admits a unique strong solution (e.g., if $\log\rho(x)$ is strictly concave outside a large ball). 

Applying the finite-particle approximation from the previous section, we obtain the following $2N$-particle system for the two coupled overdamped processes:
\begin{equation}\label{eq:continuous_cisl}
    \begin{split}
    dX^j_t\ &=\ C(\delta_{\bfY_t})\, \nabla\log\rho(X^j_t)\, dt \;+\; \sqrt{2\,C(\delta_{\bfY_t})}\, dW^{X,j}_t,\\
    dY^j_t\ &=\ C(\delta_{\bfX_t})\, \nabla\log\rho(Y^j_t)\, dt \;+\; \sqrt{2\,C(\delta_{\bfX_t})}\, dW^{Y,j}_t,
    \end{split}
\end{equation}
for $j=1,\dots,N$. By the result of Section \ref{sec:two_system_mckean_vlasov}, as $N\to\infty$, this system converges to the mean-field dynamics \eqref{eq:overdamped_langevin_two_system}. Crucially, the two-system coupling also ensures exact invariance of the target at finite $N$. In particular, as we prove below (Lemma \ref{lemma:continuous_cisl}), $\rho^{\otimes 2N}$ is an invariant density of the coupled $2N$-particle system \eqref{eq:continuous_cisl} for any $N$. In other words, even before taking $N\to\infty$, the two-system construction preserves the desired target distribution by design.

\begin{lemma}\label{lemma:continuous_cisl}
    Assume that $C:\mathcal{P}_2(\mathbb{R}^d)\to S^d_{++}$ is bounded, Lipschitz on $(\mathcal{P}_2(\mathbb{R}^d),\mathcal{W}_2)$, and uniformly positive-definite (so that $S(\bfX,\bfY)$ defined below is uniformly positive-definite on $\mathbb{R}^{2Nd}$), and that $\nabla\log\rho$ is locally Lipschitz with at most linear growth. Then the $2N$-particle system \eqref{eq:continuous_cisl} admits a unique strong solution, and the product density $\rho^{\otimes 2N}$ is a stationary density of the system.
\end{lemma}
\begin{proof}
    Concatenate the two ensembles into $\bfZ=(\bfX,\bfY)\in\mathbb{R}^{2Nd}$ and write the empirical measures $\mu^N_{\bfX}=N^{-1}\sum_j\delta_{X^j}$, $\mu^N_{\bfY}=N^{-1}\sum_j\delta_{Y^j}$. Then \eqref{eq:continuous_cisl} reads 
    $$d\bfZ_t=S(\bfZ_t)\nabla\log\rho^{\otimes 2N}(\bfZ_t)\,dt+\sqrt{2S(\bfZ_t)}\,d\mathbf{W}_t,$$
    with the block-diagonal lift
    \begin{equation*}
        S(\bfX,\bfY) = \begin{pmatrix}
            I_N\otimes C(\mu^N_{\bfY}) & 0\\
            0 & I_N\otimes C(\mu^N_{\bfX})
        \end{pmatrix}.
    \end{equation*}
    Existence and uniqueness follow from standard interacting-particle SDE theory \citep{sznitman_1991, Carmona2016}. Under the boundedness, local regularity, and uniform positive-definiteness hypotheses, the It\^o generator is
    \begin{equation*}
        \mathcal{L} f \;=\; S\,\nabla\log\rho^{\otimes 2N}\!\cdot\!\nabla f \;+\; \mathrm{tr}\!\big(S\,\nabla^2 f\big)
        \;=\; \big(\rho^{\otimes 2N}\big)^{-1}\,\nabla\!\cdot\!\big(\rho^{\otimes 2N}\,S\,\nabla f\big) \;-\; (\nabla\!\cdot\! S)\!\cdot\!\nabla f,
    \end{equation*}
    so if $\nabla\!\cdot\! S=0$ the generator is in symmetric divergence form with respect to $\rho^{\otimes 2N}$, which implies stationarity (see Proposition~4.2 of \citet{nusken2019} and \citealp[Sec.~4]{Pavliotis2014}). The upper-left block of $S$ depends only on $\bfY$ but has $\bfX$-row-indices (and vice versa for the lower-right block), so $\partial_{Z_j}S_{ij}\equiv 0$ for all $i,j$, giving $\nabla\cdot S\equiv 0$ and stationarity of $\rho^{\otimes 2N}$.
\end{proof}

This result shows that the two-system approach provides a principled way to transform a continuous-time mean-field sampler into a finite-particle sampler while maintaining $\rho$ as the invariant measure for any ensemble size.

To extend the construction to the underdamped Langevin \eqref{eq:underdamped_langevin}, introduce two coupled copies $(X_t,V_t)$ and $(Y_t,U_t)$ and denote 
\begin{equation*}
\Pi_t=\law(X_t,V_t,Y_t,U_t),\qquad \mu_t^X=p_X\circ\Pi_t,\quad \mu_t^Y=p_Y\circ\Pi_t
\end{equation*}
as the position marginals. The \emph{two-system kinetic Langevin} is
\begin{equation}\label{eq:two_sys_udl}
\begin{split}
dV_t &= -\alpha V_t\,dt \;+\; \gamma\,C(\mu_t^Y)^{1/2}\nabla\log\rho(X_t)\,dt \;+\; \sqrt{2\alpha\gamma}\,dW_t^V,\\
dX_t &= C(\mu_t^Y)^{1/2} V_t\,dt,\\
dU_t &= -\alpha U_t\,dt \;+\; \gamma\,C(\mu_t^X)^{1/2}\nabla\log\rho(Y_t)\,dt \;+\; \sqrt{2\alpha\gamma}\,dW_t^U,\\
dY_t &= C(\mu_t^X)^{1/2} U_t\,dt,
\end{split}
\end{equation}
again with cross-preconditioning via the other subsystem's \emph{position} law. When $C(\cdot)\equiv C_0$ is constant, \eqref{eq:two_sys_udl} reduces to two independent preconditioned kinetic Langevin processes whose marginal in $X$ has $\rho$ as the position invariant distribution by standard underdamped Langevin theory \citep{Pavliotis2014}. We treat \eqref{eq:two_sys_udl} primarily as a heuristic motivation for the discrete-time MAKLA-BCSS-2-based two-system algorithms of Section~\ref{sec:ensemble_chain_mcmc}; a rigorous well-posedness, finite-particle propagation-of-chaos, and invariance analysis for the underdamped two-system construction (analogous to the overdamped results of Lemmas~\ref{lemma:continuous_cisl} and~\ref{lem:inherited}) is more delicate because the diffusion coefficient is degenerate in $X$, and is left to future work. In the discrete-time setting, exact $\rho$-invariance for the two-system MAKLA-BCSS-2 samplers presented below follows directly from the per-particle Metropolis--Hastings correction together with the cross-coupling structure (see Appendix~\ref{appendix:ensemble_chain_proof}), and does not rely on a continuous-time analogue.

\subsection{\daniel{Two-System Discrete-Time Samplers}}\label{sec:two_systems_discrete_time_samplers}
To obtain a practical MCMC algorithm, we discretize the two-system dynamics in time using an alternating (staggered) time-stepping scheme. Let $h>0$ denote the step size, and suppose that the coupled state $(X_t,Y_t)$ is available at time $t$. One step of the discretized two-system dynamics is given by
\begin{equation}
    \begin{split}
    X_{t+h}\ &=\ X_t \;+\; h\,C(\mu_t^Y)\,\nabla\log\rho(X_t)\;+\; \sqrt{2h\,C(\mu_t^Y)}\, \xi^X_t,\\
    Y_{t+h}\ &=\ Y_t \;+\; h\,C(\mu_{t+h}^X)\,\nabla\log\rho(Y_t)\;+\; \sqrt{2h\,C(\mu_{t+h}^X)}\, \xi^Y_t,
    \end{split}
\end{equation}
where $\mu_t^Y$ is the empirical law of the $Y$-subsystem at time $t$, $\mu_{t+h}^X$ is the empirical law of the updated $X$-subsystem, and $\xi^X_t,\xi^Y_t$ are independent standard Gaussian random variables.

Thus, the $X$-particles are first updated using the current statistics of the $Y$-subsystem. The $Y$-particles are then updated using the newly computed statistics of the $X$-subsystem. This produces the staggered sequence of proposals $(X_t,Y_t)\to (X_{t+h}, Y_t)\to (X_{t+h}, Y_{t+h})$. Each proposal is subsequently corrected by a Metropolis--Hastings accept-reject step, ensuring that the resulting Markov chain preserves the desired target distribution.

In the finite-particle setting, this becomes an ensemble-chain sampler. The state space $\bfZ_t=(\bfX_t,\bfY_t)$ is updated by alternating between the two subsystems:
\begin{align}
    &\bfX_{t+1}\ \sim\ \prod_{i=1}^NP_{C(\delta_{\bfY_t})}(X_t^i,\, \cdot),\qquad \text{with}\qquad \bfY_{t+1}\ \sim\ \prod_{i=1}^NP_{C(\delta_{\bfX_{t+1}})}(Y_t^i,\, \cdot),
\end{align}
yielding $\bfZ_{t+1}=\{\bfX_{t+1}, \bfY_{t+1}\}$. Thus, one ensemble is advanced using the empirical distribution of the other ensemble, after which the roles are reversed. This is precisely the structure of an ensemble-chain MCMC method: each particle group supplies the empirical statistics used to construct proposals for the other group.

Provided that each kernel $P_{C(\mu)}$ satisfies detailed balance with respect to the target density $\rho$, the alternating scheme preserves the product measure $\rho^{\otimes 2N}$; see Appendix~\ref{appendix:ensemble_chain_proof}. Passing to the mean-field limit $N\to\infty$ then recovers the two-system version of the discrete-time mean-field dynamics in~\eqref{eq:discrete_mean_field_dynamics}.

\james{\begin{remark}
    \citet{sprungk2023} remove the finite-particle bias of the discrete-time mean-field samplers of \citet{clarte2022} by applying a Metropolis correction to the full ensemble. Given $\bfX_k$, an ensemble with $N$ particles, they propose $\widetilde{\bfY}_k \sim Q(\bfX_k,\cdot)$ and accept the entire ensemble with probability
    \begin{equation*}
        A(\bfX_k,\widetilde{\bfY}_k)
        =
        \min\left\{
        1,
        \frac{\rho^{\otimes N}(\widetilde{\bfY}_k)}
             {\rho^{\otimes N}(\bfX_k)}
        \frac{Q(\bfX_k \mid \widetilde{\bfY}_k)}
             {Q(\widetilde{\bfY}_k \mid \bfX_k)}
        \right\}.
    \end{equation*}
    This is an $Nd$-dimensional block update: the target distribution $\rho^{\otimes N}({\bfX})$ is a product over all particles, so acceptance typically deteriorates with $N$ unless the step size is reduced. Their Gibbs-type variant updates only a subset of particles, permitting larger steps but reducing movement per iteration; the acceptance ratio still contains a product over the updated subset.

    Our two-system construction removes the finite-particle bias differently. Each subsystem defines proposals for the other, and conditional on the opposite subsystem, particles in the updated subsystem are accepted or rejected individually. Thus, each Metropolis ratio involves a single $d$-dimensional particle update rather than an ensemble-level product, avoiding the acceptance degradation associated with large ensemble sizes.
\end{remark}}

\subsection{Adaptive Samplers}
In the mean-field Langevin examples above, if $C(\mu_t)$ represents the covariance of $\mu_t$, the mean-field dynamics drive $C(\mu_t)$ toward the covariance of the target distribution $\rho$ as $t\to\infty$. Adaptive samplers mimic this behavior using a single chain in the long-time limit. In this sense, adaptive MCMC can be interpreted as a finite-time, single-chain approximation of mean-field ensemble dynamics, with temporal averages replacing averages over particles.

\james{Notably, the ensemble methods considered here can incorporate adaptation. Given an ensemble $\bfX_k$, we update an adaptive parameter via
\begin{equation}\label{eq:one_system_adaptation}
    \begin{split}
        \Theta_k\ &=\ (1-a_k)\Theta_{k-1} + a_k\Pi_{\Gamma}^{\varepsilon,K_{\mathrm{cov}}}\left(\frac{1}{N}\sum_{i=1}^N(X_k^i-\widehat{\mu}_k)(X_k^i-\widehat{\mu}_k)^\top\right),\\
        \widehat\mu_k\ &=\ \frac{1}{N}\sum_{i=1}^NX_k^i,
    \end{split}
\end{equation}
where $N>2$, and $\{a_k\}_{k\geq 0}$ is a sequence such that $0<a_k < 1$ for all $k$, and there exists a $K>0$ such that $\{a_k\}_{k\geq K}$ is monotonically decreasing and satisfies
\begin{equation*}
    \sum_{k\geq K}a_k^2 <\infty,\qquad\sum_{k\geq K}a_k=\infty.
\end{equation*}
This effectively computes an average using all $N$ chains over the $K$ adaptation updates so far. The corresponding ensemble transition kernel
\begin{equation}
    \mathbf{P}_{\Theta_k}(\bfX_k,\,\widetilde \bfX_{k+1})=\prod_{i=1}^NP_{\Theta_k}(X^i_k,\, \widetilde X^i_{k+1})
\end{equation}
inherits geometric ergodicity from the per-particle kernel $P_{\Theta_k}$ whenever the latter is geometrically ergodic uniformly in $\Theta_k\in\Gamma$. Under additional regularity and containment assumptions, such schemes can be analyzed using standard adaptive-MCMC tools \citep{roberts2007, Atchad2006}. In this work, however, we use \emph{finite} adaptation (Remark~\ref{rmk:finite_adaptation}) and rely only on fixed-kernel convergence after the adaptation cutoff, sidestepping the need to verify diminishing adaptation in our experimental protocol.

\begin{remark}
    The classical adaptive MCMC scheme is recovered by taking adaptation weights of the form $a_k=\mathcal{O}(1/k)$, which decrease monotonically. We use the more general notation $a_k$ to allow for the restart scheme of Section~\ref{sec:step_size_randomization}. Under this scheme, the adaptation weights may be non-monotone before the final restart time $K$, but are chosen to decrease monotonically for all $k\geq K$.
\end{remark}

In a two-system adaptive variant, we can separate the adaptation for each subsystem. For example, we could maintain separate estimates for the $X$ and $Y$ groups:
\begin{equation}\label{eq:two_system_adaptation}
    \begin{split}
        \Theta_k^X\ &=\ (1-a_k)\Theta_{k-1}^X + a_k\Pi_{\Gamma}^{\varepsilon,K_{\mathrm{cov}}}\left(\frac{1}{N}\sum_{i=1}^N(X_k^i-\widehat{\mu}_k^X)(X_k^i-\widehat{\mu}_k^X)^\top\right),\\
        \Theta_k^Y\ &=\ (1-a_k)\Theta_{k-1}^Y + a_k\Pi_{\Gamma}^{\varepsilon,K_{\mathrm{cov}}}\left(\frac{1}{N}\sum_{i=1}^N(Y_k^i-\widehat{\mu}_k^Y)(Y_k^i-\widehat{\mu}_k^Y)^\top\right),\\
        \widehat{\mu}_k^X\ &=\ \frac{1}{N}\sum_{i=1}^NX_k^i,\qquad \widehat{\mu}_k^Y\ =\ \frac{1}{N}\sum_{i=1}^NY_k^i.
    \end{split}
\end{equation}
Then $\Theta_k^X$ is used when proposing updates for subsystem $Y$, and $\Theta_k^Y$ when updating subsystem $X$. In this way, each half of the ensemble adapts based on the other half, maintaining cross-system independence at each step.}

\daniel{The two adaptation modes \eqref{eq:one_system_adaptation} and \eqref{eq:two_system_adaptation} are made concrete in Section~\ref{sec:ensemble_chain_mcmc}, which presents Metropolis-adjusted two-system MALA and MAKLA-BCSS-2 algorithms in both the static-ensemble and adaptive forms (Algorithms~\ref{algo:interacting_mala}--\ref{algo:adaptive_makla}).}

\section{Algorithmic Realizations of Two-System Samplers}\label{sec:ensemble_chain_mcmc}
We now describe concrete two-system MCMC algorithms derived from Langevin dynamics. Our focus is on two specific instances: an overdamped sampler based on MALA \citep{parisi1981ula, besag1994mala}, and an underdamped sampler based on a new BCSS-2 instantiation of MAKLA, which we refer to as \emph{MAKLA-BCSS-2} to distinguish it from the original Verlet-based MAKLA of \citet{BouRabee2024}. We emphasize, however, that the two-system approach is general and can be applied to other MCMC proposal families as well (see \citet{clarte2022, sprungk2023} for alternative ensemble samplers).

In the ensemble-chain (multi-chain) setting, convergence to the target $\rho$ is guaranteed provided each Metropolis–Hastings update satisfies detailed balance. We therefore present the Coupled MALA and Coupled MAKLA-BCSS-2 samplers in Algorithms~\ref{algo:interacting_mala} and \ref{algo:interacting_makla}; at each iteration, the preconditioner is the empirical covariance of the opposite subsystem's current ensemble (no Robbins--Monro running average is maintained), so the kernel is memoryless conditional on the joint state $(\bfX,\bfY)$ and the two-system cross-coupling preserves $\rho^{\otimes 2N}$ exactly at finite $N$.

Adaptive samplers require additional care in practice, as their convergence depends on the target distribution's geometry, the adaptation scheme, and the proposal kernels. As a compromise, we present \emph{finite} adaptive samplers in the sense of the following remark.

\begin{remark}[Finite adaptation]\label{rmk:finite_adaptation}
We update the tuning parameter $\Theta_k$ only for $k\leq K_{\max}$, and then fix $\Theta=\Theta_{K_{\max}}$ for all subsequent iterations. For iterations $k\geq K_{\max}$, the algorithm reduces to a standard (non-adaptive) MCMC with kernel $P_{\Theta_{K_{\max}}}$, so classical convergence results apply. In particular, if $P_{\Theta_{K_{\max}}}$ is $\rho$-irreducible and aperiodic, the post-cutoff segment converges to the target and (after a short additional burn-in) can be treated as draws from a stationary chain.

Finite adaptive samplers also offer practical improvements: (i) the post-cutoff segment requires no per-step covariance update or Cholesky factorization, and is therefore much cheaper than the adaptive kernel; (ii) classical fixed-kernel diagnostics (e.g., $\widehat R$) apply directly to the post-cutoff samples; and (iii) it avoids the need to verify the diminishing-adaptation and geometric-ergodicity conditions as part of the experimental protocol.
\end{remark}

We present the finite adaptive two-system MALA and finite adaptive two-system MAKLA-BCSS-2 in Algorithms~\ref{algo:adaptive_mala} and \ref{algo:adaptive_makla}. The analogous one-system versions, which we also use in our experiments, are described below.

\subsection{Assumptions for Finite Adaptive MALA and \james{Finite Adaptive MAKLA-BCSS-2}}
\begin{assumption}\label{assumption:polynomial_form}
    The target can be written as $\rho(x) = h(x)e^{-p(x)}$ or $\rho(x) = h(x)^{-p(x)}$, with $h(x) \ge 0$ and $h, p$ polynomial.
\end{assumption}
The \emph{truncated} MALA kernel $P_C$ has proposal density
\begin{equation}\label{eq:truncated_mala_proposal}
    Q_C(X,Y)\ =\ \mathcal{N}\!\big(Y;\ X + h\, C\, D_\delta(X),\ 2hC\big),
    \quad
    D_\delta(X)\ =\ \frac{\delta}{\max(\delta,\,|\nabla\log\rho(X)|)} \nabla\log\rho(X),
\end{equation}
followed by the Metropolis--Hastings accept--reject step. The drift truncation by $\delta$ in $D_\delta(X)$ prevents proposals from being driven arbitrarily far when $|\nabla\log\rho(X)|$ is large. Under Assumption~\ref{assumption:polynomial_form}, the results of \citet{Atchad2006} show that the truncated MALA kernel with infinite adaptation satisfies the law of large numbers. However, our experimental validity relies only on the post-cutoff fixed-kernel phase (Remark~\ref{rmk:finite_adaptation}), and not a verbatim match between Atchad\'e's setting and our two-system ensemble construction.

{\begin{assumption}\label{assumption:c2_and_lipschitz}
    The target distribution satisfies $\log\rho(x)\in C^2$ and $\nabla \log\rho(x)$ is globally Lipschitz continuous.
\end{assumption}
Using a similar proof to \cite{durand2023}, one can prove that under Assumption~\ref{assumption:c2_and_lipschitz} the MAKLA-BCSS-2 described in Algorithms~\ref{algo:interacting_makla} and \ref{algo:adaptive_makla} is Lebesgue irreducible, aperiodic, and is reversible with respect to $\rho$.

We note that \citet{durand23a} recently proposed a \emph{finite} adaptive scheme based on the MALT sampler.\footnote{The three Metropolis-adjusted underdamped samplers use distinct integrator splittings: \emph{MALT} \citep{durand2023} wraps a Verlet (BAB) leapfrog inside an OBABO outer shell with full momentum refreshment; the original \emph{MAKLA} of \citet{BouRabee2024} uses an OABAO outer shell with persistent momentum (flipped on rejection); and our \emph{MAKLA-BCSS-2} keeps the OBABO outer shell with persistent momentum but replaces the Verlet inner core by the two-stage palindromic BABAB splitting of \citet{blanes2014} (BCSS-2; specified in Algorithm~\ref{algo:interacting_makla}). All numerical results in this paper that we label ``MAKLA'' family use this MAKLA-BCSS-2 instantiation.} However, a key distinction is that our adaptation explicitly preconditions the gradient $\nabla \log\rho(x)$ in the proposal step, whereas the adaptation in \citet{durand23a} preconditions the distribution used for momentum refreshment. By directly preconditioning the gradient updates (Section \ref{sec:ill_conditioned_distribution}), our algorithm aligns the dynamics with the local curvature of $\log\rho(x)$, accelerating exploration along stiff directions and mitigating random-walk behavior. This is particularly important in high dimensions, where poorly scaled gradients can severely slow convergence and reduce effective sample size.}

\daniel{Throughout Algorithms~\ref{algo:interacting_mala}--\ref{algo:adaptive_makla}, $C(\bfX)=\frac{1}{N-1}\sum_j(X^j-\bar{X})(X^j-\bar{X})^\top$ denotes the empirical (sample) covariance of an ensemble $\bfX=\{X^j\}_{j=1}^N$ (a positive semi-definite matrix), and $\Pi_{\Gamma}^{\varepsilon,K_{\mathrm{cov}}}$ is the cap-then-ridge projection of \eqref{eq:projection}, which maps any PSD input into $\Gamma$.}

\begin{algorithm}\caption{Coupled MALA}
\begin{algorithmic}\label{algo:interacting_mala}
\REQUIRE Desired distribution $\rho:\mathbb{R}^d\to[0,\infty)$, starting points $(\bfX_0^s)_j \in \mathbb{R}^{d}$ for $s=0,1$, and $j\in \{1,\ldots, N\}$, number of MCMC samples $M\geq 1$, step size $h>0$, covariance-norm cap $K_{\mathrm{cov}}>0$, and small $\varepsilon>0$ to ensure positive-definite covariance \\
\STATE{$s \gets 0$}
\FOR{$m\gets 1$, $m\leq M$}
    \STATE{$C^{1-s}_m\gets \Pi_{\Gamma}^{\varepsilon,K_{\mathrm{cov}}}(C(\bfX_{m-1}^{1 - s}))\qquad$ (Refresh covariance from the frozen subsystem $1-s$, whose state has not changed at iteration $m$)}
    \STATE{(Accept or reject particles individually)}
    \FOR{$j\in \{1, \ldots, N\}$ (in parallel)}
    \STATE{$(\bfX_m^s)_j \gets \text{ProposeMALA}_{h}((\bfX_{m-1}^s)_j, {C^{1-s}_m})$}
    \ENDFOR
    \STATE{$s\gets 1 - s\qquad$ (Switch systems)}
\ENDFOR
\STATE{}
\STATE{\textbf{Procedure} $\textrm{ProposeMALA}_{h}(X_0, C)$}
    \STATE{$\quad X_1\sim \mathcal{N}(X_0 + h C \nabla\log\rho(X_0), 2hC)\qquad$ (Propose a Langevin trajectory)}
    \STATE{$\quad \alpha \gets \textrm{AcceptRatioMALA}(X_0, X_1, C)$}
    \STATE{$\quad \beta \sim U(0, 1)$}
    \STATE{$\quad$\textbf{if} $\beta > \alpha$ \textbf{then}}
    \STATE{$\quad\quad$\textbf{return} $X_0$ $\qquad \textrm{(Reject)}$}
    \STATE{$\quad$\textbf{end if}}
    \STATE{$\quad$\textbf{return} $X_1$ $\qquad \textrm{(Accept)}$}
\STATE{\textbf{end Procedure}}
\STATE{}
\STATE{\textbf{Procedure} $\textrm{AcceptRatioMALA}(X, Y, C)$}
    \STATE{$\quad \alpha\gets \frac{\rho(Y)Q_{C}(X\mid Y)}{\rho(X)Q_C(Y\mid X)},\qquad\text{where}\ Q_C(y\mid x)\ \textrm{is the p.d.f. of $\mathcal{N}(x + hC\nabla\log\rho(x),2hC)$}$}
    \STATE{$\quad\textbf{return}\ \min(1,\alpha)$}
\STATE{\textbf{end Procedure}}
\end{algorithmic}
\end{algorithm}

\begin{algorithm}\small\caption{Coupled MAKLA-BCSS-2 (we use $L=1$ throughout this paper)}
\begin{algorithmic}\label{algo:interacting_makla}
\REQUIRE Desired distribution $\rho:\mathbb{R}^d\to[0,\infty)$, starting points $(\bfX_0^s)_j, (\bfV_0^s)_j \in \mathbb{R}^{d}$ for $s=0,1$, and $j\in \{1,\ldots, N\}$, number of MCMC samples $M\geq 1$, step size $h>0$, leap-frog steps $L>0$, persistence $\eta\in(0, 1)$, covariance-norm cap $K_{\mathrm{cov}}>0$, and small $\varepsilon>0$ to ensure positive-definite covariance  \\
\STATE{$s \gets 0$}
\FOR{$m\gets 1$, $m\leq M$}
    \STATE{$C^{1-s}_m\gets \Pi^{\varepsilon,K_{\mathrm{cov}}}_{\Gamma}(C(\bfX_{m-1}^{1 - s}))\qquad$ (Refresh covariance from the frozen subsystem $1-s$, whose state has not changed at iteration $m$)}
    \STATE{(Accept or reject particles individually)}
    \FOR{$j\in \{1, \ldots, N\}$ (in parallel)}
    \STATE{$((\bfX_m^s)_j, (\bfV_m^s)_j)\gets \text{ProposeMAKLA}^L_{h,\eta}((\bfX_{m-1}^s)_j, (\bfV_{m-1}^s)_j, {C^{1-s}_m})$}
    \ENDFOR
    \STATE{$s\gets 1 - s\qquad$ (Switch systems)}
\ENDFOR
\STATE{}
\STATE{\textbf{Procedure} $\textrm{ProposeMAKLA}^L_{h,\eta}(X_0, V_0, C)$}
    \STATE{$\quad (X_L, V_L, \Delta)\gets\textrm{OBABABO}^L_{h,\eta}(X_0, V_0, C)\qquad$ (Propose a Langevin trajectory)}
    \STATE{$\quad A \sim U(0, 1)$}
    \STATE{$\quad$\textbf{if} $A > \exp(-\Delta)$ \textbf{then}}
    \STATE{$\quad\quad$\textbf{return} $(X_0, -V_0)$ $\qquad \textrm{(Reject and flip momentum)}$}
    \STATE{$\quad$\textbf{end if}}
    \STATE{$\quad$\textbf{return} $(X_L, V_L)$ $\qquad \textrm{(Accept)}$}
\STATE{\textbf{end Procedure}}
\STATE{}
\STATE{\textbf{Procedure} $\textrm{OBABABO}^L_{h,\eta}(X_0, V_0,  C)$\quad \textit{(partial momentum refresh interleaved with the BCSS-2 BABAB Hamiltonian integrator of \citet{blanes2014})}}
    \STATE{$\quad$\textit{Inputs include the cached values $U_0=-\log\rho(X_0)$ and $\nabla U_0=-\nabla\log\rho(X_0)$ from the previous outer iteration; on the first call they are computed once.}}
    \STATE{$\quad b_1 \gets \tfrac{1}{6}(3-\sqrt{3}),\quad b_2 \gets 1-2b_1,\qquad a_1 \gets \tfrac12,\ a_2 \gets \tfrac12$ \quad \textit{(BCSS-2 palindromic coefficients)}}
    \STATE{$\quad \Delta\gets 0$}
    \STATE{$\quad$ \textbf{for } $i=0;\, i< L$ \textbf{do}}
    \STATE{$\quad\quad$ (O) $V \gets \sqrt{1-\eta}\, V_{i} + \sqrt{\eta}\,\xi_i^1$.\qquad \textit{(partial momentum refresh)}}
    \STATE{$\quad\quad$ \textit{Open the BCSS-2 BABAB Hamiltonian sub-step at the cached $(X_i,\nabla U_i)$:}}
    \STATE{$\quad\quad$ $V^{(0)}\gets V$;\quad $E_0 \gets U_i + \tfrac12\lVert V^{(0)}\rVert^2$ \quad \textit{(start-of-step BCSS-2 energy)}}
    \STATE{$\quad\quad$ (B$_1$) $V^{(1)} \gets V^{(0)} - b_1 h\,C^{1/2}\nabla U_i$. \quad \textit{(uses cached gradient at $X_i$)}}
    \STATE{$\quad\quad$ (A$_1$) $X^{(1)} \gets X_i + a_1 h\, C^{1/2} V^{(1)}$;\quad evaluate $\nabla U^{(1)}=-\nabla\log\rho(X^{(1)})$ \quad \textit{(fresh fused $(\log\{\rho\},\nabla\log\{\rho\})$ call)}.}
    \STATE{$\quad\quad$ (B$_2$) $V^{(2)} \gets V^{(1)} - b_2 h\,C^{1/2}\nabla U^{(1)}$.}
    \STATE{$\quad\quad$ (A$_2$) $X_{i+1} \gets X^{(1)} + a_2 h\,C^{1/2} V^{(2)}$;\quad evaluate $U_{i+1}=-\log\rho(X_{i+1})$, $\nabla U_{i+1}=-\nabla\log\rho(X_{i+1})$ \quad \textit{(cached for next iter)}.}
    \STATE{$\quad\quad$ (B$_1$) $V^{(3)} \gets V^{(2)} - b_1 h\,C^{1/2}\nabla U_{i+1}$.}
    \STATE{$\quad\quad$ $E_1\gets U_{i+1} + \tfrac12\lVert V^{(3)}\rVert^2$;\quad $\Delta \gets \Delta + (E_1 - E_0)$ \quad \textit{(deterministic BABAB energy change)}}
    \STATE{$\quad\quad$ (O) $V_{i+1} \gets \sqrt{1-\eta}\, V^{(3)} + \sqrt{\eta}\,\xi_i^2$.\qquad \textit{(partial momentum refresh)}}
    \STATE{$\quad$\textbf{end for}}
    \STATE{$\quad$\textbf{return} $(X_L, V_L, \Delta)$}
\STATE{\textbf{end Procedure}}
\end{algorithmic}
\end{algorithm}

\begin{algorithm}\caption{Two-System Finite-Adaptive Ensemble MALA}
\begin{algorithmic}\label{algo:adaptive_mala}
\REQUIRE Desired distribution $\rho:\mathbb{R}^d\to[0,\infty)$, starting points $(\bfX_0^s)_j \in \mathbb{R}^{d}$ for $j\in \{1,\ldots, N\}$ and $s=0,1$, number of MCMC samples $M\geq 1$, step size $h>0$, drift truncation $\delta > 0$, covariance-norm cap $K_{\mathrm{cov}} > 0$, small $\varepsilon > 0$, \james{total adaptation iterations $T_{\mathrm{adapt}}\ll M$}, and initial counter value $K_0 \geq 1$ in iterations (default $K_0 = \lceil\tau/(2h)\rceil$, with $\tau$ the first restart interval in diffusion-time units; see Section~\ref{sec:restart_init_counter}) \\
\STATE{$K\gets K_0$\qquad (Adaptive counter, initialized at $K_0$ rather than $1$)}
\FOR{$m\gets 1$, $m\leq M$}
    \STATE{\textit{(At outer iteration $m$, the inner $s$-loop updates subsystem $0$ first, then subsystem $1$. Each refresh uses the most recent state of the opposite subsystem: $\bfX^1_{m-1}$ when $s=0$, $\bfX^0_m$ when $s=1$.)}}
    \FOR{$s\in \{0,1\}$}
        \STATE{$\widetilde{\bfX}^{1-s}\gets\bfX^{1-s}_{m-1}$ if $s=0$, else $\bfX^{1-s}_{m}$ \qquad (most recently updated state of subsystem $1-s$)}
        \STATE{$\widetilde{C}^{1-s}\gets \Pi_{\Gamma}^{\varepsilon,K_{\mathrm{cov}}}(C(\widetilde{\bfX}^{1-s}))\qquad$ (Cap-then-ridge projection \eqref{eq:projection} of the empirical cov of the opposite subsystem)}
        \IF{$m\leq T_{\mathrm{adapt}}$}
        \STATE{$C^{1-s}_m\gets (1-1/K)\,C^{1-s}_{m-1} + (1/K)\,\widetilde{C}^{1-s}\qquad$ (Polyak update of the running cov)}
        \ELSE
        \STATE{$C^{1-s}_m\gets C^{1-s}_{m-1}\qquad$ (Adaptation frozen for $m>T_{\mathrm{adapt}}$)}
        \ENDIF
        \STATE{(Accept or reject particles)}
        \FOR{$j\in \{1, \ldots, N\}$ (in parallel)}
        \STATE{$(\bfX^s_m)_j\gets \text{ProposeTruncatedMALA}_{h,\delta}((\bfX_{m-1}^s)_j, {C^{1-s}_m})$}
        \ENDFOR
        \IF{$m\leq T_{\mathrm{adapt}}$}
        \STATE{$K\gets K + 1$}
        \ENDIF
    \ENDFOR
\ENDFOR
\STATE{}
\STATE{\textbf{Procedure} $\textrm{ProposeTruncatedMALA}_{h,\delta}(X, C)$}
    \STATE{$\quad d\gets \frac{\delta}{\max(\delta,|\nabla\log\rho(X)|)}\nabla \log\rho(X),$}
    \STATE{$\quad Y\sim \mathcal{N}(X + h C d, 2hC)\qquad$ (Propose a Langevin trajectory)}
    \STATE{$\quad \alpha \gets \textrm{AcceptRatioTruncatedMALA}(X, Y, C)$}
    \STATE{$\quad \beta \sim U(0, 1)$}
    \STATE{$\quad$\textbf{if} $\beta > \alpha$ \textbf{then}}
    \STATE{$\quad\quad$\textbf{return} $X$ $\qquad \textrm{(Reject)}$}
    \STATE{$\quad$\textbf{end if}}
    \STATE{$\quad$\textbf{return} $Y$ $\qquad \textrm{(Accept)}$}
\STATE{\textbf{end Procedure}}
\STATE{}
\STATE{\textbf{Procedure} $\textrm{AcceptRatioTruncatedMALA}_{h,\delta}(X, Y, C)$}
    \STATE{$\quad \alpha\gets \frac{\rho(Y)Q_{C}(X\mid Y)}{\rho(X)Q_C(Y\mid X)},\qquad\text{where}\ Q_C(y\mid x)\ \textrm{is the p.d.f. of $\mathcal{N}(x + hCD(x),2hC)$}$}
    \STATE{$\quad\textbf{return}\ \min(1,\alpha)$}
\STATE{\textbf{end Procedure}}
\end{algorithmic}
\end{algorithm}

\begin{algorithm}\caption{Two-System Finite-Adaptive Ensemble MAKLA-BCSS-2 (we use $L=1$ throughout this paper)}
\begin{algorithmic}\label{algo:adaptive_makla}
\REQUIRE Desired distribution $\rho:\mathbb{R}^d\to[0,\infty)$, starting points $(\bfX^s_0)_j, (\bfV^s_0)_j \in \mathbb{R}^{d}$ for $j\in \{1,\ldots, N\}$ and $s\in\{0,1\}$, number of MCMC samples $M\geq 1$, step size $h>0$, leap-frog steps $L>0$, persistence $\eta\in(0, 1)$, covariance-norm cap $K_{\mathrm{cov}} > 0$, small $\varepsilon > 0$, \james{total adaptation iterations $T_\mathrm{adapt}\ll M$}, and initial counter value $K_0 \geq 1$ in iterations (default $K_0 = \lceil\tau/(2h)\rceil$, with $\tau$ the first restart interval in diffusion-time units; see Section~\ref{sec:restart_init_counter}) \\
\STATE{$K\gets K_0$\qquad (Adaptive counter, initialized at $K_0$ rather than $1$)}
\FOR{$m\gets 1$, $m\leq M$}
    \STATE{\textit{(At outer iteration $m$, the inner $s$-loop updates subsystem $0$ first, then subsystem $1$. Each refresh uses the most recent state of the opposite subsystem: $\bfX^1_{m-1}$ when $s=0$, $\bfX^0_m$ when $s=1$.)}}
    \FOR{$s\in\{0, 1\}$}
        \STATE{$\widetilde{\bfX}^{1-s}\gets\bfX^{1-s}_{m-1}$ if $s=0$, else $\bfX^{1-s}_{m}$ \qquad (most recently updated state of subsystem $1-s$)}
        \STATE{$\widetilde{C}^{1-s}\gets \Pi_\Gamma^{\varepsilon,K_{\mathrm{cov}}} (C(\widetilde{\bfX}^{1-s}))\qquad$ (Cap-then-ridge projection \eqref{eq:projection} of the empirical cov of the opposite subsystem)}
        \IF{$m\leq T_{\mathrm{adapt}}$}
        \STATE{$C^{1-s}_m\gets (1-1/K)\,C^{1-s}_{m-1} + (1/K)\,\widetilde{C}^{1-s}\qquad$ (Polyak update of the running cov)}
        \ELSE
        \STATE{$C^{1-s}_m\gets C^{1-s}_{m-1}\qquad$ (Adaptation frozen for $m>T_{\mathrm{adapt}}$)}
        \ENDIF
        \STATE{(Accept or reject particles)}
        \FOR{$j\in \{1, \ldots, N\}$ (in parallel)}
        \STATE{$((\bfX^s_m)_j, (\bfV^s_m)_j)\gets \text{ProposeMAKLA}^L_{h,\eta}((\bfX^s_{m-1})_j, (\bfV^s_{m-1})_j, {C^{1-s}_m})$}
        \ENDFOR
        \IF{$m\leq T_{\mathrm{adapt}}$}
        \STATE{$K\gets K + 1$}
        \ENDIF
    \ENDFOR
\ENDFOR
\end{algorithmic}
\end{algorithm}

\daniel{\textbf{One-system variants.} The one-system Adaptive MALA / MAKLA-BCSS-2 samplers collapse the inner $s$-loop of Algorithms~\ref{algo:adaptive_mala}--\ref{algo:adaptive_makla} to a single $N$-chain ensemble $\bfX_m$ with one running covariance $C_m\in\Gamma$: (i) propose from $(X_{m-1})_j$ using the previous preconditioner $C_{m-1}$; (ii) project the post-step empirical covariance via \eqref{eq:projection}, $\widetilde{C}\leftarrow\Pi_{\Gamma}^{\varepsilon,K_{\mathrm{cov}}}(C(\bfX_m))$; (iii) Polyak-update $C_m\leftarrow(1-1/K)C_{m-1}+(1/K)\widetilde{C}$ (stays in $\Gamma$ by convexity); (iv) increment $K$. The cross-coupling between subsystems is replaced by self-coupling. During $m\leq T_{\mathrm{adapt}}$ this sacrifices exact finite-$N$ invariance for a single shared covariance estimate over all $N$ particles; for $m>T_{\mathrm{adapt}}$ the frozen kernel is a standard fixed-preconditioned MALA / MAKLA-BCSS-2 with $\rho$ as its per-chain invariant distribution by detailed balance. Empirically, the two designs are close, with the one-system variant marginally ahead on most \texttt{posteriordb} models and the two-system variant ahead on a few hard cases (Section~\ref{sec:simulations_experiments}).}

\daniel{The cap-then-ridge projection $\widetilde{C}^{1-s} \leftarrow \Pi_{\Gamma}^{\varepsilon,K_{\mathrm{cov}}}(C(\widetilde{\bfX}^{1-s})) = \varepsilon I_d + \alpha(C(\widetilde{\bfX}^{1-s}))C(\widetilde{\bfX}^{1-s})$ from \eqref{eq:projection} in Algorithms~\ref{algo:adaptive_mala}--\ref{algo:adaptive_makla} (and the analogous $\widetilde{C}$ in the one-system collapse) is not only motivated by theoretical requirements (a compact adaptation parameter set is needed for the convergence proofs of the adaptive-MCMC framework). It also improves numerical stability in practice: clipping the empirical-covariance component \emph{before} adding the ridge and \emph{before} the Polyak update protects the running estimate $C_m$ from rare outlier ensemble configurations, leaving the well-mixed history untouched. We have observed that this leads to better adaptation and noticeably more efficient samplers on some difficult problems with sharp curvature or ill-conditioned posteriors.}

\subsection{Step Size Control via Randomization}\label{sec:step_size_randomization}
The performance of Langevin-based methods is highly sensitive to the choice of step size $h$. Large values of $h$ reduce autocorrelation in the resulting Markov chain, but if $h$ is too large, proposals are frequently rejected, again leading to poor mixing. 

A rejection can be interpreted as evidence that the chain is locally sensitive to $\mathcal{O}(h)$ perturbations, and hence that the discretization error of the underlying SDE is too large. A classical remedy is delayed rejection \citep{Tierney1999, Mira2001, Modi_2024, turok2024}, in which a rejected proposal is followed by one or more subsequent proposals with progressively smaller step sizes. While theoretically attractive, this approach has two key drawbacks: (i) the computational cost of the $k^\text{th}$ proposal scales as $\mathcal{O}(2^k)$, and (ii) it is poorly suited to modern CPU/GPU hardware, where Single Instruction Multiple Data (SIMD) parallelism requires all chains to follow the same instruction path.

As an alternative, we introduce a randomized step size
\begin{equation}
    h = \gamma h_{\max}, \qquad 
    \gamma \sim \beta\, \delta_{x-1} + (1-\beta)\, f(x),
\end{equation}
where $f(x)$ is a distribution on $[0,1]$ with no mass at zero, and $0<\beta<1$, pictorially demonstrated in Fig. \ref{fig:step_size_distribution}. With probability $\beta$, the maximum step size $h_{\max}$ is used; otherwise, a random fraction $c \in (0,1)$ is drawn from $f(x)$, and $ch_{\max}$ is used. This randomized scheme preserves reversibility and can be applied independently or jointly across chains without disturbing invariance under $\rho^{\otimes 2N}$. Crucially, it automatically introduces occasional smaller step sizes when proposals are frequently rejected, thereby stabilizing the dynamics.

The choice of $f$ governs how often smaller steps are taken. A uniform distribution on $[0,1]$ is natural, but does not sufficiently emphasize small step sizes. We therefore adopt a distribution with density $f(x) = 3(1-x)^2$ on $[0,1]$, which biases the sampler toward smaller $h$ values while still allowing exploration with larger steps. \daniel{This density coincides with the law of $1 - U^{1/3}$ for $U\sim\text{Unif}(0,1)$, so a single uniform draw suffices to sample from $f$ at runtime.} The expected step size is then
\begin{equation}
    \mathbb{E}[h]
    = h_{\max}\int_0^1 x \Big( \beta \delta(x-1) + 3(1-\beta)(1-x)^2 \Big)\, dx
    = h_{\max}\left(\beta + \frac{1-\beta}{4}\right).
\end{equation}
\daniel{Throughout the experimental sections we set $\beta = 0.75$, so $\mathbb{E}[h] = 0.8125\,h_{\max}$.}

\daniel{This randomized step size scheme is particularly effective on targets with strong anisotropy that creates regions of large curvature demanding small steps, interspersed with regions of low curvature where larger steps are essential for efficient exploration.}

\daniel{Per-step jitter (rather than only after rejections) also protects against rare individual chains that wander into high-curvature tail regions where the global $h_{\max}$ is too large: occasional small steps fit the local curvature and free trapped chains, at negligible cost on the well-behaved bulk.}

\begin{figure}
    \centering
    \includegraphics[width=0.5\linewidth]{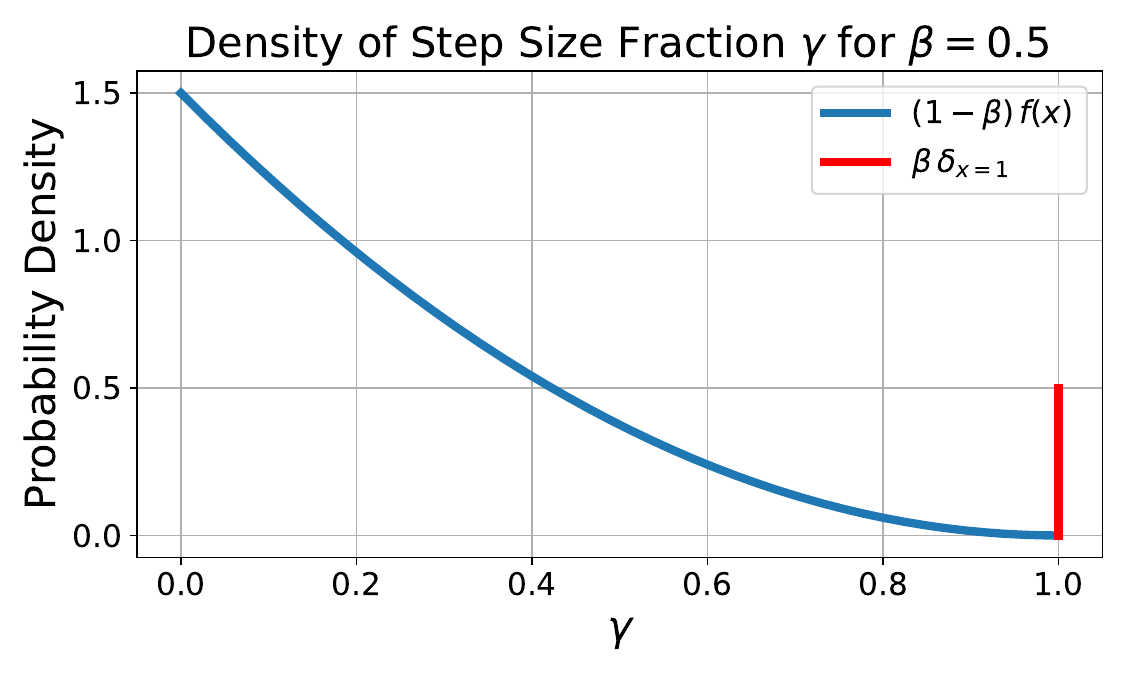}
    \caption{\footnotesize Visualization of the randomized step size distribution. The step size $ h = \gamma h_{\max} $ is drawn from a mixture of a point mass at $ \gamma = 1 $ with weight $ \beta \in (0,1) $, and a continuous component $ f(x) = 3(1-x)^2 $ supported on $ (0,1) $, with weight $ 1 - \beta $. This construction encourages frequent large proposals while allowing occasional small, exploratory steps, improving robustness across varying curvature scales.}
    \label{fig:step_size_distribution}
\end{figure}

\daniel{\textbf{Selecting $h_{\max}$ via a high-acceptance criterion.} The previous discussion specifies the law of the per-step jitter but leaves the maximum step size $h_{\max}$ free. We pick it by a geometric-ladder search: starting from a user-specified $h_0$, we successively shrink $h_{\max} \leftarrow 0.8\,h_{\max}$ until the empirical acceptance rate measured over a short trial run satisfies
\begin{equation}\label{eq:high_acc_criterion}
\mathrm{acc}(h_{\max}) \;\geq\; 1 - h_{\max}/c
\end{equation}
for a tuning constant $c$ (we use $c=16$ throughout the experiments). The criterion has a direct interpretation in diffusion-time units: with $c=16$, the per-step rejection probability is at most $h_{\max}/16$, so over a window of one unit of diffusion time -- which contains $\lceil 1/h_{\max}\rceil$ MAKLA iterations -- the expected number of rejected proposals is at most $1/16$. The chain therefore averages no more than one rejection per $16$ units of diffusion time, meaning the discrete MAKLA trajectory stays very close to a faithful discretization of the underlying kinetic Langevin diffusion (KLD): the Metropolis correction fires only sporadically, just often enough to repair accumulated integrator error and restore exact invariance under $\rho$. This is more aggressive than the constant-acceptance targets typical of MALA-type schemes (around $0.574$ in the optimal-scaling theory of \citet{RobertsRosenthal1998}) and is a deliberate design choice for the kinetic-Langevin family: in the underdamped regime, staying near the true SDE trajectory is the dominant route to fast mixing, and the rare-rejection regime is where that approximation is sharpest.}

\subsection{Restarting Adaptive Schemes}
The adaptive updates we consider in this paper are implemented via a running estimator
\begin{equation*}
\Theta_K \;=\; \left(1-\frac{1}{K}\right)\,\Theta_{K-1} \;+\; \frac{1}{K}\,\theta(X_K),
\end{equation*}
for some statistic $\theta(X_K)$ (covariance, step size, etc.). Since the iteration $K$ increases, later samples have diminishing influence on $\Theta_K$; yet the early samples (drawn far from stationarity) receive disproportionate weight. This mismatch slows the convergence of $\Theta_K$ to $\mathbb{E}_\rho[\theta(X)]$ and creates a negative feedback loop: poor $\Theta_K\Rightarrow$ poor proposals $\Rightarrow$ slow mixing $\Rightarrow$ poor $\Theta_K$. With ensembles sharing a common $\Theta_K$, the problem is exacerbated by sensitivity to initialization.

\daniel{To mitigate this, during the adaptation phase of length $T_{\mathrm{adapt}}$ we restart the counter $K$ on a fixed schedule $\mathcal{R} = \{\tau,\,2\tau,\,\dots,\,\tau_{\max}\}$ with $\tau_{\max} < T_{\mathrm{adapt}}$. At each $t\in\mathcal{R}$, we apply a \emph{partial reset}
$
K \leftarrow \rho\, K,\ \rho \in [0,1],
$
\emph{leaving $\Theta_K$ unchanged.} The rate $\rho$ controls how aggressively the past is forgotten: $\rho=1$ is a no-op; $\rho=0$ zeros the counter so the next observation completely overwrites $\Theta$ via $\gamma = 1$; $\rho=0.5$ (our default) halves $K$ so the next several Robbins--Monro updates receive $\gamma = 1/(\rho K + 1)$ -- roughly twice the size of the un-reset weight $\gamma = 1/(K+1)$ -- and recent samples therefore dominate $\Theta$ much faster than under the natural $\mathcal{O}(1/K)$ decay alone. The role of the restart is precisely this: by reweighting future samples upward, the disproportionate influence of the early, badly mixed samples on $\Theta_K$ is forgotten more quickly than the unmodified Robbins--Monro update permits on its own.}

\daniel{No restarts are performed after $\tau_{\max}$; the running estimator $\Theta_K$ then continues to evolve un-restarted over the remaining $T_{\mathrm{adapt}}-\tau_{\max}$ time units, so the preconditioner that is finally frozen at the end of the adaptation phase (Remark~\ref{rmk:finite_adaptation}) is dominated by samples drawn from a well-mixed chain. In our \texttt{posteriordb} experiments we take an adaptation phase of $T_{\mathrm{adapt}}=5{\,}000$ diffusion-time units, with $\tau_{\max} = 0.5\,T_{\mathrm{adapt}} = 2{\,}500$, ten partial restarts spaced uniformly at $\tau = 0.05\,T_{\mathrm{adapt}} = 250$ diffusion-time units (i.e.\ at $t \in \{250,500,750,\ldots,2500\}$), and $\rho = 0.5$. (The synthetic-target study in Section~\ref{sec:sysnthetic_experiments} uses the shorter $T_{\mathrm{adapt}}=2{\,}000$ with five restarts at $\tau = 0.1\,T_{\mathrm{adapt}} = 200$, since those well-conditioned targets equilibrate within a few hundred diffusion-time units; the larger \texttt{posteriordb} budget was set to give stiff posteriors -- particularly the ODE-based \texttt{one\_\allowbreak comp\_\allowbreak mm\_\allowbreak elim\_\allowbreak abs} -- enough time for the rolling cov to stabilize before being frozen.)}

\daniel{In practice these restarts lead to better-calibrated preconditioners, higher acceptance rates, and improved ESS/Grad during the sampling phase. Since adaptation is stopped at $T_{\mathrm{adapt}}$, the post-adaptation chain is a standard fixed-kernel MCMC sampler as in Remark~\ref{rmk:finite_adaptation}.}

\paragraph{Initializing the Robbins--Monro counter.}\label{sec:restart_init_counter}\daniel{We initialize the counter at $K_0 = \lceil\tau/(2h)\rceil$ iterations -- equivalent to one-half of the first restart interval $\tau$ measured in diffusion-time units -- rather than $K_0 = 1$, so that the initial preconditioner $\Theta_{\mathrm{init}}$ is not overwritten by a single noisy sample on the first iteration. This ``lukewarm'' start is small relative to the cumulative weight by the first restart at $t=\tau$, so the asymptotic Polyak behavior is unchanged; on the synthetic targets, it halves the early-iteration transient compared with $K_0=1$.}

\subsection{Dealing with Ill-Conditioned Distributions}\label{sec:ill_conditioned_distribution}
Posterior distributions arising in practice (e.g., from \texttt{posteriordb} \citep{magnusson2024posteriordb}) are often severely ill-conditioned. For example, in our experiments, the Hessian at the maximum a posteriori at times has eigenvalues with a dynamic range as large as $10^{11}$. This creates an extremely anisotropic gradient field, where the log-density is very steep in some directions and nearly flat in others. 

\textbf{Momentum Refresh:} For samplers that rely on integrating a chain of gradient evaluations within a single proposal (such as MALT, HMC, and MAKLA), multiple integrator steps can easily lead to numerical instability, often producing explosive trajectories with infinite or \texttt{NaN} values. Stability can be restored by dramatically shrinking the step size, but this in turn reduces sample efficiency. A more robust strategy is to \emph{use only a single integrator step} (Verlet leapfrog for MALT, BCSS-2 for our MAKLA-BCSS-2), which permits a larger step size without sacrificing stability. We have adopted this strategy ($L=1$) in all of our experiments.

However, in this regime, the way momentum is refreshed becomes crucial. HMC and MALT employ full-velocity resampling at each proposal, which reduces efficiency by discarding information about the current momentum. For HMC, this resampling is essential to guarantee convergence to the invariant distribution (via the Virial theorem \citep{Clausius1870, Duane1987}). In contrast, MALT and MAKLA approximate underdamped Langevin dynamics, in which momentum refreshment is already built in. Moreover, given any initial position and velocity, the continuous-time process converges to the correct invariant distribution, and the Metropolis correction merely removes discretization bias. Hence, additional momentum resampling is unnecessary, and retaining momentum (with sign reversal on rejection) yields more efficient sampling. For this reason, we adopt MAKLA as the base sampler.\\

\james{
\textbf{Rescaling the Distribution:} Ill-conditioning also implies that different coordinates of the state space live on vastly different scales. The smallest scales, associated with the largest curvature directions, require an excessively small step size $h$ to have a good acceptance rate. To mitigate this, we apply an affine change of variables $x=Az$, for some $A\in S^{d}_{++}$. The induced density for $z$ is
\begin{equation*}
    g(z)\ \propto\ |\det(A)|\,\rho(Az), \qquad \text{for some}\ A\in S^{d}_{++},
\end{equation*}
Since $A$ is fixed, $|\det A|$ is a constant and may be absorbed into the normalising constant; hence, we may sample using the unnormalized density $g(z)$ in place of $\rho(x)$. Given draws $z_i\sim g(z)$, samples from the original target are recovered by the inverse transform $x_i=Az_i$. This reparameterization balances coordinate scales, enables larger stable steps, and reduces numerical instabilities.

In our experiments, we choose $A$ based on the local curvature at the mode $x^*$, which is numerically determined. Specifically, we compute the negative log-density Hessian $H = -\nabla^2 \log\rho(x^*)$ and set
\begin{equation*}
A \;=\; (H + \varepsilon I_d)^{-1/2}, \qquad \text{for small}\qquad \varepsilon>0,
\end{equation*}
so that $A$ is the inverse symmetric square root of the local precision matrix (with $\varepsilon$ providing numerical regularization).

This construction can be viewed as a simple geometric preconditioning strategy. For Langevin or Hamiltonian samplers, working in the $z$-coordinates corresponds to evolving in a geometry that approximately whitens the local quadratic approximation of $-\log \{\rho\}$: it reduces anisotropy, partially decorrelates dimensions, and stabilizes the deterministic drift by improving effective smoothness/conditioning. 

Under standard regularity conditions in Bayesian asymptotics, the posterior becomes approximately Gaussian as the data size grows (Bernstein--von Mises theorem \citep{Vaart1998}), in which case $H^{-1}$ is a reliable global description of posterior covariance, and the fixed rescaling above can yield substantial gains. For strongly non-Gaussian targets (e.g., skewed, heavy-tailed, or multimodal distributions), the Hessian at $x^*$ may be poorly representative away from the mode, and a fixed inverse-Hessian rescaling offers limited improvement. In such regimes, mean-field or adaptive methods provide additional flexibility by learning global scale and correlation structure on the fly, thereby adapting the proposal geometry as the chain explores regions far from $x^*$.}

% \clearpage
\section{Experiments}\label{sec:simulations_experiments}
We conducted a two-part empirical study. In Section \ref{sec:sysnthetic_experiments}, we evaluate MAKLA-BCSS-2 and MALA-family samplers on two controlled synthetic targets with closed-form moments -- a heavy-tailed Student-$t$ in $d=10$ and the curvature-aware Girolami--Calderhead banana in $d=2$ \citep{GirolamiCalderhead2011} -- to compare h-tuning criteria and isolate the regime in which each variant is reliable. In Section \ref{sec:posteriordb}, we move to real-data experiments, benchmarking the most reliable subset (the four MAKLA-BCSS-2 variants with the $\mathrm{acc}\ge 1-h/16$ criterion) on 45 posterior distributions from \texttt{posteriordb} \citep{magnusson2024posteriordb} against a broad set of NUTS variants.

For the adaptive methods, we consider two adaptation modes: (i) \textbf{1sys-adaptive}, which uses the full ensemble of $N$ chains to update a single shared covariance (as in Eqn. \eqref{eq:one_system_adaptation}), and (ii) \textbf{2sys-adaptive}, which splits the ensemble $N$ into two subsystems and adapts each half separately (as in Eqn. \eqref{eq:two_system_adaptation}) to combine long-term adaptation with cross-chain independence. We refer to the variants of Algorithms~\ref{algo:interacting_mala}--\ref{algo:interacting_makla} (which use the opposite subsystem's empirical covariance directly, with no Robbins--Monro averaging) as \emph{Coupled} MALA / MAKLA-BCSS-2 throughout.

\subsection{Synthetic Experiments}\label{sec:sysnthetic_experiments}
We evaluate the MAKLA-BCSS-2 and MALA samplers on two synthetic targets with closed-form moments, chosen to expose complementary failure modes:
\begin{itemize}
\item \textbf{Student-$t$} ($d=10$, $\nu=4$): $\rho(x) \propto \big(1+\tfrac{1}{\nu}x^\top A x\big)^{-(d+\nu)/2}$, with $A$ a fixed SPD matrix whose eigenvalues are linearly spaced in $[10^{-2},10^{2}]$. Heavy tails and anisotropic but globally smooth geometry; the static Hessian-at-MAP preconditioner is exact in the Gaussian limit.
\item \textbf{Banana} ($d=2$): in the form used by \citet{GirolamiCalderhead2011} as the canonical RMHMC benchmark,
\begin{equation*}
\rho(y_1,y_2)\;\propto\;\exp\!\left[-\tfrac{y_1^2}{2\sigma_1^2}-\tfrac{1}{2}\left\{y_2-b(y_1^2-\sigma_1^2)\right\}^2\right],\qquad \sigma_1=10,\ b=0.1.
\end{equation*}
The bend $b(y_1^2-\sigma_1^2)$ produces strong curvature that the Hessian-at-MAP cannot capture (Figure~\ref{fig:banana_density}).
\end{itemize}

\begin{figure}[!ht]
\centering
\includegraphics[width=0.55\linewidth]{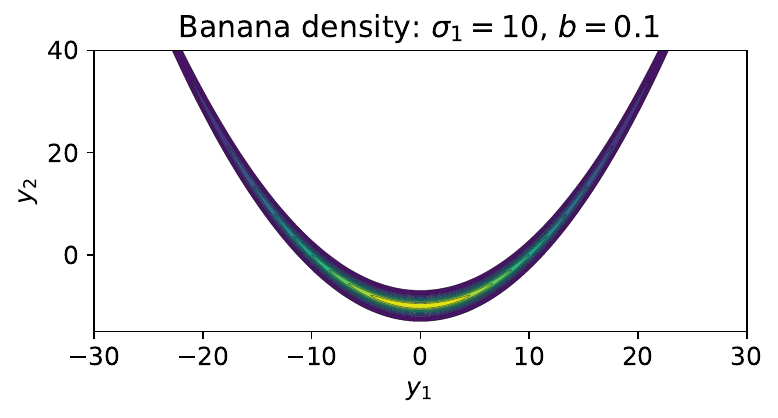}
\caption{\footnotesize Density contours of the Girolami--Calderhead banana \citep{GirolamiCalderhead2011} with $\sigma_1=10$, $b=0.1$, used as a two-dimensional benchmark for curved, non-Gaussian geometry.}
\label{fig:banana_density}
\end{figure}

\textit{Experimental setup.} The bench protocol matches Section~\ref{sec:posteriordb}: $140$ parallel chains across $14$ cores, Hessian-at-MAP preconditioning into rescaled coordinates $Y$, and a diffusion-time budget of $2{\,}000$ for burn-in and $8{\,}000$ for sampling (with a separate $2{\,}000$-unit adaptation phase upstream of the burn-in for the two adaptive MAKLA-BCSS-2 / MALA variants, exactly as in Section~\ref{sec:posteriordb}). The MAKLA-BCSS-2 family uses the BCSS-2 BABAB inner integrator with the high-acceptance ``decay'' criterion $\mathrm{acc}\ge 1-h/16$ of Section~\ref{sec:step_size_randomization} and a geometric ladder $h_{k+1}=0.8 h_k$ starting from $h_0 = 2.4$. The MALA family uses the Roberts--Rosenthal fixed-acceptance criterion $\mathrm{acc}\ge \alpha^*=0.574$ \citep{RobertsRosenthal1998} starting from $h_0=1.0$. The adaptive variants of both families use the truncation caps from Algorithms~\ref{algo:adaptive_mala}--\ref{algo:adaptive_makla}: $K_{\mathrm{cov}} = 10{\,}000$ for the operator-norm cap on the running covariance, and additionally $\delta = 10{\,}000$ for the drift truncation in adaptive MALA. Both act as numerical-stability safety nets. We report only the MAKLA-BCSS-2 decay configuration and the MALA fixed-acc configuration: in pilot experiments, MAKLA-BCSS-2 + fixed-acc rejected too aggressively on the banana ($\widehat{R}_{\max}\ge 1.06$ on every variant), and MALA + decay shrank $h$ far below the MALA-optimal $\mathrm{acc}\approx 0.574$ regime, producing $3$--$1{\,}200\times$ more gradient evaluations per ESS on these targets. The chosen pairings are therefore the natural regime for each family.

\textit{Results.} Table~\ref{tab:synthetic_grad_per_ess} reports the maximum-component gradient evaluations per effective sample, $\max_j(\mathrm{Grad}/\mathrm{ESS}_j)$ (lower is better), with bootstrap standard errors over $200$ resamples of chains. Cells flagged with \dag\ correspond to runs whose chains did not reach $\widehat{R}_{\max}\le 1.05$.

\begin{table}[!ht]
\centering
\caption{\footnotesize Worst-component gradient evaluations per effective sample on the two synthetic targets, with bootstrap standard errors. The MAKLA-BCSS-2 family uses the BCSS-2 BABAB inner integrator (two gradient evaluations per outer step, counted in the Grad totals); the MALA family uses one gradient evaluation per step. \textbf{Bold}: best converged value in column. \dag: chains did not reach $\widehat{R}_{\max}\le 1.05$.}
\label{tab:synthetic_grad_per_ess}
\footnotesize
\setlength{\tabcolsep}{6pt}
\begin{tabular}{l|cc|cc}
\toprule
& \multicolumn{2}{c|}{MAKLA-BCSS-2, decay ($h_0{=}2.4$)} & \multicolumn{2}{c}{MALA, fixed-acc ($h_0{=}1$)} \\
Variant         & banana                                & student-$t$              & banana                              & student-$t$    \\
\midrule
basic           & $2.01\times 10^4 \pm 6{\,}907\,\dag$  & $\phantom{0}4.8\pm 0.6$            & $1.66\times 10^5 \pm 1.6\times 10^4\,\dag$ & $37\pm 5$ \\
1sys-adaptive   & $\mathbf{1{\,}539 \pm 158}$           & $\mathbf{\phantom{0}4.5\pm 0.7}$   & $2.52\times 10^4 \pm 6{\,}062$              & $\mathbf{19\pm 3}$ \\
2sys-adaptive   & $1{\,}996 \pm 328$                    & $\phantom{0}5.3\pm 1.8$            & $\mathbf{2.25\times 10^4 \pm 3{\,}687}$     & $21\pm 4$ \\
coupled         & $2{\,}826 \pm 968$                    & $15.4\pm 5.8$                      & $2.36\times 10^4 \pm 5{\,}954$              & $35\pm 3$ \\
\bottomrule
\end{tabular}
\end{table}

\textit{Two findings.}
\begin{enumerate}
\item \emph{Adaptation is essential where the static Hessian misses the local curvature.} On the banana, basic MAKLA-BCSS-2 fails to converge ($\widehat R_{\max} = 1.03$) at $\sim\!2\times 10^4$ Grad/ESS\textsubscript{worst}; the three in-sampler-adapting MAKLA-BCSS-2 variants converge at $\sim\!1{\,}500$--$2{\,}800$, an order of magnitude better. On Student-$t$ the static Hessian is essentially exact, so basic MAKLA-BCSS-2 ($4.8$) and the 1-/2-sys adaptive variants ($4.5$--$5.3$) are tied; only Coupled MAKLA-BCSS-2 degrades ($15.4$) because its $14\cdot 8d \approx 1100$-particle ensemble is overkill for $d=10$. Adaptation buys an order of magnitude exactly where the global Hessian-at-MAP linearization is locally inaccurate, and nothing where it is accurate.\vspace{-0.6em}
\item \emph{Underdamped beats overdamped.} The MALA fixed-acc variants are $4$--$15\times$ behind their MAKLA-BCSS-2 counterparts on both targets ($19$--$37$ on Student-$t$, $2.3$--$2.5\times 10^4$ on banana), the expected gap between first-order Metropolized drift-diffusion and second-order BCSS-2 underdamped integration.
\end{enumerate}

We therefore benchmark the four MAKLA-BCSS-2 variants (all tuned with the decay criterion $\mathrm{acc}\ge 1-h/16$) against NUTS on $45$ \texttt{posteriordb} posteriors in Section~\ref{sec:posteriordb}.

\subsection{\texttt{posteriordb} Benchmarks: MAKLA-BCSS-2 family vs.\ NUTS variants}
\label{sec:posteriordb}

We benchmarked the MAKLA-BCSS-2 family of samplers against a broad set of No-U-Turn Sampler (NUTS) variants on 45 posterior distributions from \texttt{posteriordb}.
\texttt{posteriordb} contained 47 examples with reference samples; for two of them (\texttt{eight\_\allowbreak schools-\allowbreak eight\_\allowbreak schools\_\allowbreak centered} and \texttt{mcycle\_gp-accel\_gp}), the posterior mode could not be found with any optimizer, and the gradient norm kept increasing during optimization (potentially due to an improper posterior), so we omitted those two from the comparison.

\paragraph{Samplers compared.} We evaluate twelve samplers: four MAKLA-BCSS-2 variants in \texttt{BlackJAX} (static-Hessian, 1- and 2-system Adaptive, Coupled), five \texttt{BlackJAX} NUTS variants (with/without Hessian preconditioning, dual-averaging or window adaptation, diagonal or full mass), and three \texttt{CmdStan} NUTS variants (Hess+DA with the metric frozen at the inverse Hessian at MAP, plus the \texttt{Stan}-default WA diagonal- and full-mass settings without preconditioning).\footnote{The \texttt{CmdStan} family is restricted to three variants because \texttt{Stan}'s window adaptation overwrites any user-supplied initial metric, so the analogues of \texttt{CmdStan} \emph{NUTS, Hessian preconditioned, WA (diag/full mass)} would coincide with the corresponding non-preconditioned variants.} Table~\ref{tab:geometric_means} lists the abbreviated names used throughout.

\paragraph{Experimental settings.} The MAKLA-BCSS-2, 1-system Adaptive, and 2-system Adaptive samplers used $140$ parallel chains during burn-in and sampling (split evenly between subsystems for the two-system variant). Coupled MAKLA-BCSS-2 used $8d$ particles. All MAKLA-BCSS-2 samplers used the velocity-refreshment parameter $\eta=\exp(-\gamma h)$ with $\gamma=0.1$, $L=1$ outer step per proposal with the BCSS-2 BABAB inner core of \citet{blanes2014},\footnote{With cached gradient and log-posterior values across consecutive outer iterations, this costs two $\nabla\log\rho$ and two $\log\rho$ evaluations per step. Switching the inner core to Verlet (BAB) halves this cost, but BCSS-2's larger admissible step size more than compensates; see Section~\ref{sec:ensemble_chain_mcmc}.} and the high-acceptance step-size criterion $\mathrm{acc}\ge 1-h/16$ of Section~\ref{sec:step_size_randomization}.

\daniel{For the two Adaptive MAKLA-BCSS-2 variants, burn-in and sampling are preceded by a dedicated adaptation phase constructing the frozen ensemble covariance $\widehat\Sigma$ used as the sample-phase preconditioner. \textit{Adaptation:} a $20$-chain ensemble (split $10{+}10$ for the two-system variant; $\widehat\Sigma=\tfrac12(\widehat\Sigma^0+\widehat\Sigma^1)$) runs for $T_{\mathrm{adapt}} = 5{\,}000$ diffusion-time units of Algorithm~\ref{algo:adaptive_makla}, with $K_0 = \lceil\tau/(2h)\rceil$ (Section~\ref{sec:restart_init_counter}), ten partial restarts at $\mathcal{R}=\{250,\ldots,2500\}$ ($\rho=0.5$), and operator-norm cap $K_{\mathrm{cov}}=10{\,}000$; the adapt step size comes from a geometric-ladder pre-search ($h_{k+1}=0.8 h_k$, $h_0=2.4$) at the largest $h$ meeting $\mathrm{acc}(h)\ge 1-h/16$. \textit{Refinement:} we absorb $\widehat{\Sigma}$ into $Z=Y\,L^{-T}$ ($L=\mathrm{chol}(\widehat{\Sigma})$) and rerun the ladder against frozen $\widehat\Sigma$ on $140$ chains, walking \emph{up} from $h_{\mathrm{adapt}}$. \textit{Burn-in/sampling:} $140$ chains are bootstrap-resampled from the $20$ adapt-final positions, then run for $5{\,}000\cdot\lceil 1/h^\star\rceil$ burn-in and $30{\,}000\cdot\lceil 1/h^\star\rceil$ sample steps ($2{\,}000/8{\,}000$ for Coupled). Bootstrap initialization seeds chains where adaptation saw $\mathrm{acc}\approx 1-h^\star/16$, avoiding from-MAP starts that hurt ill-conditioned posteriors. Static MAKLA-BCSS-2 skips adaptation/refinement: its $140$ chains start at $\mathrm{MAP}+1$-stdev Gaussian noise in Hessian-rescaled coordinates with $h$ chosen by a single geometric-ladder pass.}

All NUTS variants (both \texttt{BlackJAX} and \texttt{CmdStan}) used $140$ parallel chains, $2\,000$ warm-up iterations, and $8\,000$ sampling iterations, with target acceptance $0.8$ and a tree-depth cap of $10$ doublings.

The mode was found by particle-swarm optimization followed by trust-region Newton-CG from \texttt{scipy}, and the static Hessian preconditioner was computed at this mode. For the single ODE-based posterior (\texttt{one\_\allowbreak comp\_\allowbreak mm\_\allowbreak elim\_\allowbreak abs}), the \texttt{JAX} models use \\ \texttt{jax.experimental.ode.odeint} (Dormand--Prince 5(4)) with relative and absolute tolerances $10^{-12}$, and the Stan model uses \texttt{ode\_bdf} (implicit BDF) with default tolerances ($10^{-10}/10^{-10}/10^{8}$ steps); the tighter \texttt{JAX} setting was needed to suppress mild $\widehat R$ drift on this stiff posterior.

For non-preconditioned NUTS variants, we initialize the $140$ sample chains at the warm-up's final position rather than reset to MAP. All NUTS variants (preconditioned and not) use the \texttt{BlackJAX} default tree-depth cap of $10$ doublings.

We report \emph{worst-case efficiency}: the minimum-across-components effective sample size per gradient evaluation, equivalent to the maximum gradient evaluations per effective sample, denoted $\max_j (\text{Grad}/\text{ESS}_j)$. Lower is better.

\paragraph{ESS estimator and bootstrap SEs.} We treat the $M=140$ chains as approximately independent samplers (a well-justified assumption once burn-in completes and $\widehat R < 1.01$, which we verify in Table~\ref{tab:rhat_failures}). For each component $j$, with per-chain online sample means $\widehat\mu_m^{(j)}$ and biased sample variances $\widehat{s}_m^{(j)2}$, define $\widehat{V}_{\mathrm{within},j} = \tfrac{1}{M}\sum_m \widehat{s}_m^{(j)2}$ and $\widehat{V}_{\mathrm{between},j} = \mathrm{Var}_{m}(\widehat{\mu}_m^{(j)})$ (computed from the law of total variance to avoid catastrophic cancellation when $|\bar\mu_j|^2 \gg \mathrm{Var}_j$). Per-chain effective sample size is then
\[
\widehat{\mathrm{ESS}}_j^{(\mathrm{per-chain})} \;=\; \frac{\widehat{V}_{\mathrm{within},j}+\widehat V_{\mathrm{between},j}}{\widehat{V}_{\mathrm{between},j}},
\]
i.e.\ posterior variance over chain-mean variance. This is the standard i.i.d.-equivalent ESS, justified by treating the $M$ chain means as $M$ independent estimators of the posterior mean. Total ESS is $M$ times the per-chain value; Grad/ESS uses the per-chain total gradient count (one grad/step for Verlet/MALA, two for BCSS-2, $\langle$tree-depth$\rangle \cdot 2$ for NUTS), making it apples-to-apples across integrators. To quantify Monte Carlo uncertainty, we report bootstrap standard errors: $B=200$ resamples of the $M$ per-chain $(\widehat{\mu}_m,\widehat{s}_m^2)$ pairs with replacement, with the bootstrap SD across replicates. Every $X\pm s$ entry in this paper has $X$ from the direct all-$M$-chain estimator and $s$ from this bootstrap; the figure bars are $\pm 1$ bootstrap SD around the direct point estimate.

\paragraph{Efficiency versus dimension (Figure \ref{fig:gradess-posteriordb}).} We plot the maximum gradient evaluations per effective sample (the worst component) on a logarithmic scale, with one panel per sampler. Each panel includes a horizontal dashed line for that sampler's geometric mean across all models.

\hspace*{2em} The four MAKLA-BCSS-2 family methods cluster tightly between $1$ and $10$ for nearly all 45 models, with geometric means of $\max_j(\mathrm{Grad}/\mathrm{ESS}_j)$ in the $1.88$--$2.61$ range under the BCSS-2 inner integrator with the $5{\,}000$-dt adaptation schedule. The Hessian-preconditioned NUTS variants (in both \texttt{BlackJAX} and \texttt{CmdStan}) are competitive on the bulk of the panel, with geometric means between $4.32$ and $6.24$. The non-preconditioned NUTS variants degrade further: $9.64$--$9.99$ for the full-mass settings and $71.06$--$72.76$ for the diagonal-mass settings, with a heavier upper tail on the harder posteriors.

\paragraph{Per-gradient efficiency (Table \ref{tab:geometric_means}).}
The four MAKLA-BCSS-2 variants cluster tightly between $1.88$ and $2.61$ Grad/ESS\textsubscript{worst} -- 2-system Adaptive MAKLA-BCSS-2 leads at $\mathbf{1.88}$ -- while the eight NUTS variants span $4.32$ (\texttt{CmdStan} Hess + DA, the best NUTS) to $72.76$ (\texttt{CmdStan} non-preconditioned, diagonal mass). The best MAKLA-BCSS-2 variant therefore outperforms every NUTS variant by at least $\sim\!2.3\times$ on the geometric mean. Within NUTS, the dominant factor is whether the Hessian is folded in: \texttt{BlackJAX}'s Hessian-preconditioned variants reach $5.43$--$6.24$, while removing Hessian preconditioning costs an extra $1.6$--$13\times$. The two implementations of \emph{NUTS, Hess + DA} agree to within $\sim\!25\%$ ($4.32$ \texttt{CmdStan} vs.\ $5.43$ \texttt{BlackJAX}).

\paragraph{Wall-clock throughput (Table \ref{tab:geometric_means} and Figure \ref{fig:ess-per-sec-posteriordb}).} The four MAKLA-BCSS-2 variants tie at $1.70$--$1.83\!\times\!10^6$ ESS\textsubscript{worst}/sec, with 2-system Adaptive MAKLA-BCSS-2 leading at $\mathbf{1.83\!\times\!10^6}$. The best NUTS variant reaches $3.12\!\times\!10^5$, so the MAKLA-BCSS-2 family delivers a $\sim\!5.5\times$--$5.9\times$ throughput advantage on top of the per-gradient gains. The last two columns of Table~\ref{tab:geometric_means} report the across-model median step size $\widetilde h$ and per-iteration gradient cost $\widetilde N_{\nabla}$: MAKLA-BCSS-2 runs at $\widetilde h\approx 1.5$--$1.9$ with $\widetilde N_{\nabla}=2$, against $\widetilde h\approx 0.10$--$0.98$ and $\widetilde N_{\nabla}\approx 4$--$34$ for the NUTS variants.

\begin{table}[!ht]
    \centering
    \caption{\footnotesize Geometric means across all 45 \texttt{posteriordb} models of the maximum gradient evaluations per effective sample $\max_j (\text{Grad}/\text{ESS}_j)$ (lower is better) and of the total worst-component effective samples per second of sampling wall-clock time summed across all parallel chains (higher is better). The last two columns report the across-model median step size $\widetilde h$ and median per-iteration gradient cost $\widetilde N_{\nabla}$ (for NUTS, $\widetilde N_{\nabla}$ equals the median number of leapfrog steps per NUTS iteration; for MAKLA-BCSS-2, each outer step costs two BCSS-2 fresh gradients). \textbf{Bold} indicates the best value in the first two columns across all twelve samplers. Abbreviations used in the sampler column (and reused in Tables~\ref{tab:rhat_failures}--\ref{tab:hard_models}): \emph{BkJ} = BlackJAX, \emph{CS} = CmdStan, \emph{Hess} = Hessian-preconditioned, \emph{DA} = dual-averaging step adaptation (metric frozen), \emph{WA} = window adaptation, \emph{diag/full} = diagonal or full-dense mass-matrix adaptation.}
    \footnotesize
    \renewcommand{\arraystretch}{0.95}
    \begin{tabular}{l|cccc}
         \toprule
         Sampler & $\max_j (\text{Grad}/\text{ESS}_j)$ & Total $\text{ESS}_{\min}/\text{sec}$ & $\widetilde h$ & $\widetilde N_{\nabla}$\\
         \midrule
         MAKLA-BCSS-2 (static Hessian)      &$1.93$              & $1.78\times 10^{6}$         & $1.92$ & $2.0$ \\
         1-sys Adaptive MAKLA-BCSS-2        &$1.96$              & $1.77\times 10^{6}$         & $1.92$ & $2.0$ \\
         2-sys Adaptive MAKLA-BCSS-2        &$\mathbf{1.88}$     & $\mathbf{1.83\times 10^{6}}$ & $1.92$ & $2.0$ \\
         Coupled MAKLA-BCSS-2               &$2.61$              & $1.70\times 10^{6}$         & $1.54$ & $2.0$ \\
         \midrule
         BkJ NUTS, Hess + DA                 & $5.43$              & $2.68\times 10^{5}$        & $0.98$ & $4.0$  \\
         BkJ NUTS, Hess + WA (diag)          & $6.18$              & $3.12\times 10^{5}$        & $0.71$ & $5.9$  \\
         BkJ NUTS, Hess + WA (full)          & $6.24$              & $2.91\times 10^{5}$        & $0.71$ & $5.9$  \\
         BkJ NUTS, WA (diag)                 & $71.06$             & $2.64\times 10^{4}$        & $0.10$ & $34.3$ \\
         BkJ NUTS, WA (full)                 & $9.99$              & $1.86\times 10^{5}$        & $0.69$ & $6.1$  \\
         \midrule
         CS NUTS, Hess + DA                 & $4.32$              & $1.35\times 10^{5}$         & $0.96$ & $3.7$  \\
         CS NUTS, WA (diag, STAN default)   & $72.76$             & $9.98\times 10^{3}$         & $0.11$ & $32.2$ \\
         CS NUTS, WA (full)                 & $9.64$              & $6.61\times 10^{4}$         & $0.67$ & $6.6$  \\
        \bottomrule
    \end{tabular}
    \label{tab:geometric_means}
\end{table}

\paragraph{Posterior-mean accuracy.} Taking the 1-system Adaptive MAKLA-BCSS-2 estimate ($140$ chains $\times\,30{\,}000/h$ samples) as reference, all twelve samplers agree on coordinate-wise posterior means to within $0.04$ standard deviations on every model ($0.02$ within the MAKLA-BCSS-2 family), and posterior standard deviations agree to within $1\%$. Differences across samplers therefore manifest only as efficiency, not as bias.

\paragraph{Convergence diagnostics (Table \ref{tab:rhat_failures}).} We adopt the conventional Gelman--Rubin threshold $\widehat R < 1.01$ \citep{Gelman2013}. All four MAKLA-BCSS-2 variants converge on every one of the 45 models ($\widehat R_{\max} \le 1.0078$). Among the eight NUTS variants, three pass on all 45 models (BkJ Hess + WA diag, BkJ Hess + WA full, and CS WA diag); the remaining five fail on at most $2$ models, with \texttt{one\_\allowbreak comp\_\allowbreak mm\_\allowbreak elim\_\allowbreak abs} (a stiff ODE posterior) accounting for almost every failure (full per-sampler list in Table~\ref{tab:rhat_failures}).

\begin{table}[!ht]
    \centering
    \caption{\footnotesize Convergence: number of models where the classical Gelman--Rubin diagnostic $\widehat{R}_{\max}$ across all parameters exceeded $1.01$, and the maximum value of $\widehat{R}_{\max}$ across all 45 models. See Table~\ref{tab:geometric_means} for sampler-name abbreviations.}
    \footnotesize
    \renewcommand{\arraystretch}{0.95}
    \begin{tabular}{l|cc|p{6cm}}
         \toprule
         Sampler & \# failed & Max $\widehat{R}_{\max}$ & Failing models ($\widehat{R}_{\max}$)\\
         \midrule
         MAKLA-BCSS-2 (static Hessian)      &$0$ & $1.0078$ & --- \\
         1-sys Adaptive MAKLA-BCSS-2        &$0$ & $1.0057$ & --- \\
         2-sys Adaptive MAKLA-BCSS-2        &$0$ & $1.0021$ & --- \\
         Coupled MAKLA-BCSS-2               &$0$ & $1.0019$ & --- \\
         \midrule
         BkJ NUTS, Hess + DA                 & $2$ & $1.0412$ & \texttt{one\_comp} ($1.041$); \texttt{gp\_pois\_regr} ($1.023$) \\
         BkJ NUTS, Hess + WA (diag)          & $0$ & $1.0096$ & --- \\
         BkJ NUTS, Hess + WA (full)          & $0$ & $1.0070$ & --- \\
         BkJ NUTS, WA (diag)                 & $1$ & $1.0102$ & \texttt{one\_comp} ($1.010$) \\
         BkJ NUTS, WA (full)                 & $1$ & $1.0835$ & \texttt{one\_comp} ($1.084$) \\
         \midrule
         CS NUTS, Hess + DA                 & $1$ & $1.0217$ & \texttt{one\_comp} ($1.022$) \\
         CS NUTS, WA (diag)                 & $0$ & $1.0032$ & --- \\
         CS NUTS, WA (full)                 & $1$ & $1.0267$ & \texttt{one\_comp} ($1.027$) \\
        \bottomrule
    \end{tabular}
    \label{tab:rhat_failures}
\end{table}

\paragraph{Where adaptation and ensemble averaging matter most: three hard models (Table \ref{tab:hard_models}).}
Most of the 45 models are well-conditioned regressions that the static Hessian-at-MAP isotropises adequately, but three non-Gaussian posteriors tell a more revealing story: the funnel \texttt{eight\_\allowbreak schools\_\allowbreak noncentered} ($d{=}10$), the stiff GP-Poisson \texttt{gp\_pois\_regr} ($d{=}13$), and the ODE-based pharmacokinetic \texttt{one\_\allowbreak comp\_\allowbreak mm\_\allowbreak elim\_\allowbreak abs} ($d{=}4$). On all three, in-sampler adaptation or coupling reduces static MAKLA-BCSS-2's Grad/ESS by $\sim 1.7$--$6.5\times$ (Table~\ref{tab:hard_models}); the best adaptive/coupled MAKLA-BCSS-2 variant is $1.3$--$4.6\times$ ahead of the best converged NUTS variant on each. The $5{\,}000$-dt adaptation schedule was motivated by \texttt{one\_\allowbreak comp\_\allowbreak mm\_\allowbreak elim\_\allowbreak abs}: at the shorter $2{\,}000$-dt budget the adapt chains under-mix and $\widehat\Sigma$ misses key anisotropy directions, breaking the bootstrap-init guarantee.

\begin{table}[!ht]
    \centering
    \caption{\footnotesize Worst-component gradient evaluations per effective sample on the three hardest models in our benchmark, with bootstrap standard errors over $200$ resamples. Lower is better. Top block: MAKLA-BCSS-2 family. Middle block: \texttt{BlackJAX} NUTS variants. Bottom block: \texttt{CmdStan} NUTS variants. Cells flagged with \dag\ correspond to runs whose chains did not reach $\widehat{R}_{\max}\le 1.01$. See Table~\ref{tab:geometric_means} for sampler-name abbreviations.}
    \footnotesize
    \renewcommand{\arraystretch}{0.95}
    \begin{tabular}{l|ccc}
        \toprule
        Sampler                            & \texttt{gp\_pois\_regr}& \texttt{eight\_schools\_nc}& \texttt{one\_\allowbreak comp\_\allowbreak mm\_\allowbreak elim\_\allowbreak abs}\\
        & $(d=13)$ & $(d=10)$ & $(d=4)$\\
        \midrule
        MAKLA-BCSS-2 (static Hessian)      &$794.7 \pm 91.4$               & $13.12 \pm 1.12$         & $721.7 \pm 175.9$\\
        1-sys Adaptive MAKLA-BCSS-2        &$127.2 \pm 14.0$               & $\phantom{0}8.48 \pm 1.06$ & $773.7 \pm 264.7$\\
        2-sys Adaptive MAKLA-BCSS-2        &$\mathbf{122.5 \pm 14.6}$      & $\mathbf{7.91 \pm 1.01}$ & $314.7 \pm 58.1$\\
        Coupled MAKLA-BCSS-2               &$202.3 \pm 9.9$                & $\phantom{0}9.44 \pm 0.56$ & $\mathbf{189.7 \pm 17.1}$\\
        \midrule
        BkJ NUTS, Hess + DA                 & $14{\,}125 \pm 9{\,}800\,\dag$  & $35.13 \pm 3.62$         & $3{\,}315 \pm 1{\,}595\,\dag$\\
        BkJ NUTS, Hess + WA (diag)          & $1{\,}373 \pm 151.2$            & $28.03 \pm 3.24$         & $983.8 \pm 253.3$\\
        BkJ NUTS, Hess + WA (full)          & $2{\,}725 \pm 940.2$            & $\phantom{0}9.99 \pm 1.25$ & $906.1 \pm 170.5$\\
        BkJ NUTS, WA (diag)                 & $543.2 \pm 56.1$               & $13.65 \pm 1.77$         & $1{\,}386 \pm 279.7\,\dag$\\
        BkJ NUTS, WA (full)                 & $2{\,}231 \pm 1{\,}453$          & $15.54 \pm 1.38$         & $8{\,}212 \pm 2{\,}002\,\dag$\\
        \midrule
        CS NUTS, Hess + DA                 & $5{\,}026 \pm 1{\,}526$           & $37.45 \pm 3.92$         & $1{\,}732 \pm 493.1\,\dag$\\
        CS NUTS, WA (diag)                 & $4{\,}154 \pm 3{\,}357$           & $13.71 \pm 1.60$         & $425.0 \pm 55.4$\\
        CS NUTS, WA (full)                 & $1{\,}782 \pm 444.7$             & $11.86 \pm 1.34$         & $2{\,}727 \pm 685.0\,\dag$\\
        \bottomrule
    \end{tabular}
    \label{tab:hard_models}
\end{table}

\paragraph{Interpretation.} All four MAKLA-BCSS-2 variants beat every NUTS variant on both geometric-mean Grad/ESS and throughput, with 2-sys Adaptive leading on both axes; the static-vs-adaptive MAKLA-BCSS-2 gap is small in aggregate ($\sim 1\%$) but widens to $1.5$--$4\times$ on the hard models of Table~\ref{tab:hard_models} where the static Hessian linearisation is locally inaccurate. Within NUTS, the dominant lever is Hessian preconditioning ($1.6$--$13\times$ in geometric-mean Grad/ESS): \texttt{BlackJAX} can absorb it (reaching $5.43$--$6.24$ even with windowed mass) while \texttt{CmdStan} cannot, since Phase II overwrites any user-supplied metric, leaving its non-preconditioned variants at $9.64$--$72.76$.

\begin{figure}[!ht] 
    \centering
    \includegraphics[width=0.99\linewidth]{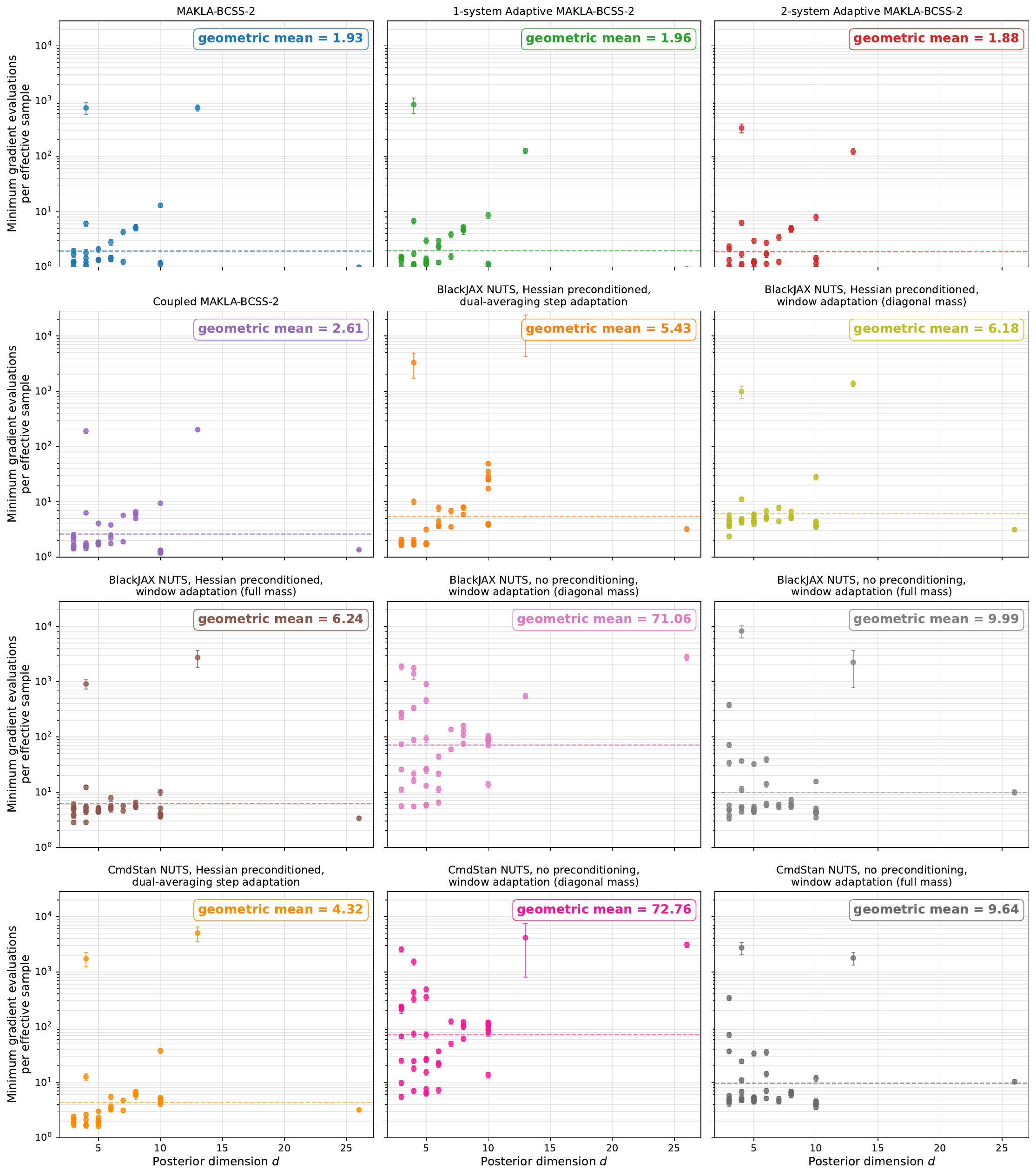}
  \caption{\footnotesize \textbf{Maximum gradient evaluations per effective sample (taken across all parameter components, i.e.\ the worst-mixing component) as a function of posterior dimension, across 45 \texttt{posteriordb} benchmark posteriors, for the twelve samplers compared.} Each dot is one posterior; the $y$-axis is on a logarithmic scale and lower values are better. The horizontal dashed line in each panel is that sampler's geometric mean across all 45 models, with the numerical value annotated in the upper right of the panel. Vertical bars indicate $\pm 1$ bootstrap standard error over $200$ chain resamples. Top row: MAKLA-BCSS-2 family. Middle two rows: \texttt{BlackJAX} NUTS variants. Bottom row: \texttt{CmdStan} NUTS variants.}
  \label{fig:gradess-posteriordb}
\end{figure}

\begin{figure}[!ht]
    \centering
    \includegraphics[width=0.99\linewidth]{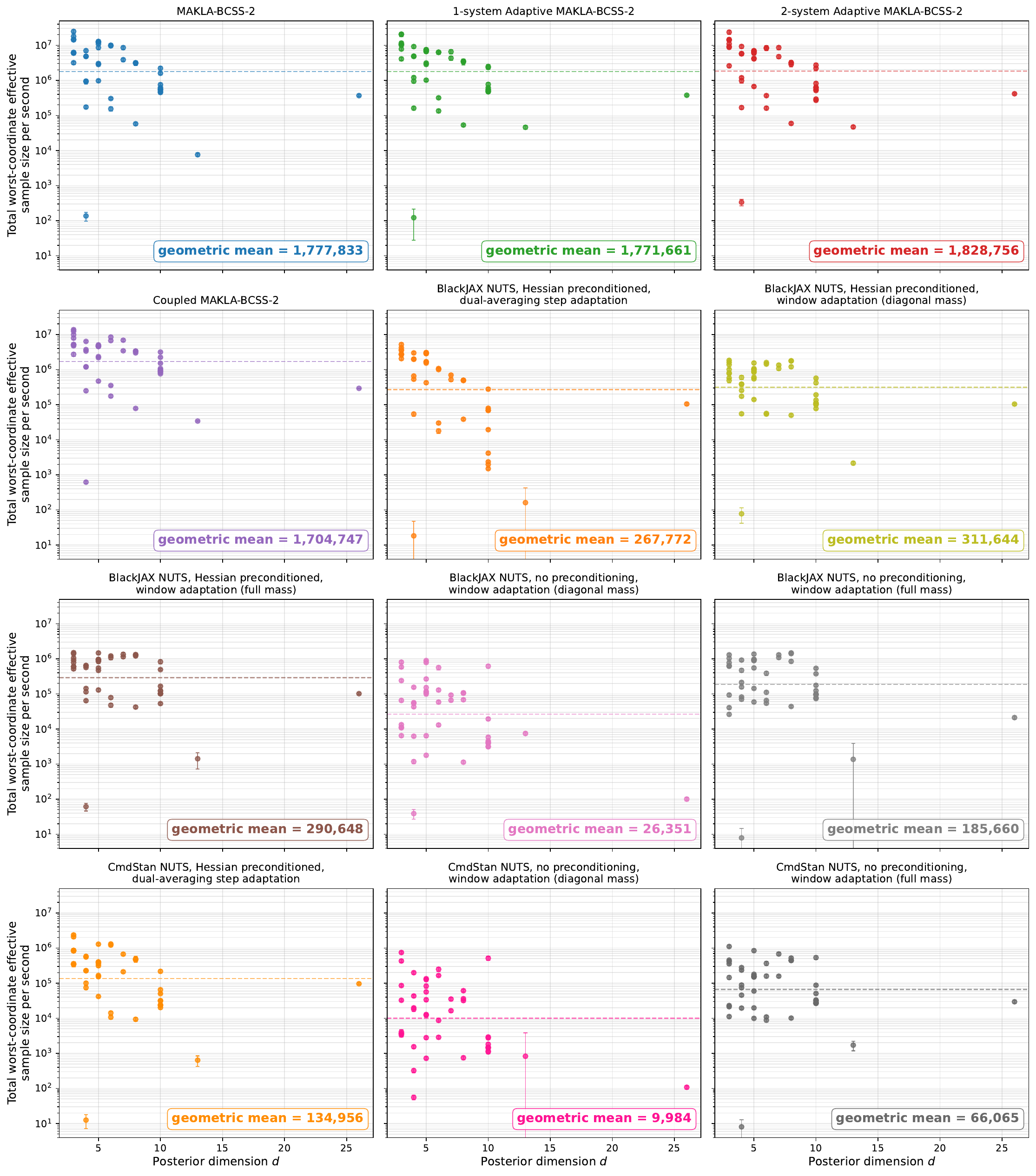}
  \caption{\footnotesize \textbf{Total worst-component effective samples per second of sampling wall-clock time (summed across all $140$ parallel chains) as a function of posterior dimension, across 45 \texttt{posteriordb} benchmark posteriors, for the twelve samplers compared.} Each dot is one posterior; the $y$-axis is on a logarithmic scale and higher values are better. The horizontal dashed line in each panel is that sampler's geometric mean across all 45 models, with the numerical value annotated in the lower right of the panel. Vertical bars indicate $\pm 1$ bootstrap standard error over $200$ chain resamples.}
  \label{fig:ess-per-sec-posteriordb}
\end{figure}

\subsection{Higher-dimensional models}\label{sec:higherdim}

The 45 \texttt{posteriordb} models in Section~\ref{sec:posteriordb} all have $d \le 100$. To probe how the MAKLA-BCSS-2 family scales we evaluated three additional \texttt{posteriordb} models that are \emph{genuinely} high-dimensional, in the sense that no conjugacy relationship exists for the latent structure and no analytic marginalization reduces them to a lower-dimensional problem: a Bayesian generalized linear mixed model \texttt{GLMM\_data\_GLMM1\_model} ($d=237$ unconstrained), a closed-population mark--recapture model \texttt{Mth\_data\_Mth\_model} ($d=394$), and the IRT 2PL latent-regression model \texttt{fims\_Aus\_Jpn\_irt\_2pl\_latent\_reg\_irt} ($d=531$). All three have non-Gaussian, dimension-scaling latent components, so the entire latent vector must be sampled jointly with the population-level parameters.

We compare the eight JAX-based samplers from Section~\ref{sec:posteriordb} but omit Coupled MAKLA-BCSS-2: its $14\cdot 8d$-particle ensemble ($\sim\!4{\,}000$--$8{\,}500$ particles for these models) exceeds available memory. The 1-sys and 2-sys Adaptive MAKLA-BCSS-2 variants instead build $\widehat\Sigma$ from a $20$-chain ensemble via Robbins--Monro averaging, so the per-step covariance need not be full-rank.

Tables~\ref{tab:hd_grad_per_ess}--\ref{tab:hd_ess_per_sec} report worst-component gradient evaluations per effective sample and total worst-component effective samples per second of sampling wall-clock time. All cells use the \texttt{BlackJAX} default tree-depth cap of $10$ doublings. \emph{All eight samplers converge cleanly} ($\widehat R_{\max} \le 1.002$) on $\texttt{GLMM1}$ and $\texttt{Mth}$. On $\texttt{fims\_2pl\_irt}$, all seven adaptive samplers converge ($\widehat R_{\max} \le 1.0004$); only static-Hessian MAKLA-BCSS-2 fails ($\widehat R_{\max} = 1.10$), where the local Hessian-at-MAP is too poor to keep the chains in the typical set without in-sampler adaptation. The dagger $\dag$ marks this single failed cell in the tables below; ESS\textsubscript{min}/sec is reported as $0$ for it. Posterior means agree to within $0.013$ standard deviations across all converged samplers (worst case: GPU Adaptive MAKLA-BCSS-2 on \texttt{Mth}; CPU agreement is within $0.007$), and posterior standard deviations agree to within $1\%$.

\begin{table}[!ht]
    \centering
    \caption{\footnotesize Worst-component gradient evaluations per effective sample on the three higher-dimensional \texttt{posteriordb} models, with bootstrap standard errors over $200$ chain resamples. Lower is better. \textbf{Bold}: best converged value in column. $\dag$: chains did not reach $\widehat R_{\max}\le 1.01$.}
    \footnotesize 
    \begin{tabular}{l|ccc}
        \toprule
        Sampler                            & \texttt{GLMM1}     & \texttt{Mth}              & \texttt{fims\_2pl\_irt} \\
                                           & $(d=237)$          & $(d=394)$                 & $(d=531)$ \\
        \midrule
        MAKLA-BCSS-2 (static Hessian)      &$5.83 \pm 0.42$    & $544.6 \pm 62.9$          & $26{\,}470 \pm 1{\,}871\,\dag$ \\
        1-sys Adaptive MAKLA-BCSS-2        &$\mathbf{4.72 \pm 0.39}$ & $\mathbf{82.98 \pm 8.23}$ & $16.41 \pm 1.67$ \\
        2-sys Adaptive MAKLA-BCSS-2        &$4.90 \pm 0.36$    & $127.2 \pm 13.7$          & $\mathbf{14.81 \pm 1.53}$ \\
        \midrule
        BkJ NUTS, Hess + DA                & $113.9 \pm 8.1$    & $1{\,}566 \pm 192$         & $3{\,}636 \pm 376$ \\
        BkJ NUTS, Hess + WA (diag)         & $9.65 \pm 0.66$    & $248.8 \pm 24.2$           & $747.4 \pm 80.8$ \\
        BkJ NUTS, Hess + WA (full)         & $12.81 \pm 0.96$   & $\mathbf{75.30 \pm 7.29}$  & $2{\,}395 \pm 222$ \\
        BkJ NUTS, WA (diag)                & $11.24 \pm 1.00$   & $116.8 \pm 11.8$           & $279.5 \pm 32.5$ \\
        BkJ NUTS, WA (full)                & $12.98 \pm 1.08$   & $89.3 \pm 12.7$            & $52.9 \pm 5.4$ \\
        \bottomrule
    \end{tabular}
    \label{tab:hd_grad_per_ess}
\end{table}

\begin{table}[!ht]
    \centering
    \caption{\footnotesize Total worst-component effective samples per second of sampling wall-clock time, summed across all $140$ parallel chains (and $20$ for Adaptive MAKLA-BCSS-2's adapt phase, which is amortized separately). Higher is better. \textbf{Bold}: best converged value in column. $\dag$: chains did not reach $\widehat R_{\max}\le 1.01$.}
    \footnotesize 
    \begin{tabular}{l|ccc}
        \toprule
        Sampler                            & \texttt{GLMM1}             & \texttt{Mth}            & \texttt{fims\_2pl\_irt} \\
                                           & $(d=237)$                  & $(d=394)$               & $(d=531)$ \\
        \midrule
        MAKLA-BCSS-2 (static Hessian)      &$68{\,}499 \pm 4{\,}847$     & $194 \pm 24$            & $1.15 \pm 0.08\,\dag$ \\
        1-sys Adaptive MAKLA-BCSS-2        &$\mathbf{85{\,}952 \pm 6{\,}707}$ & $\mathbf{1{\,}251 \pm 126}$ & $1{\,}696 \pm 179$ \\
        2-sys Adaptive MAKLA-BCSS-2        &$77{\,}353 \pm 5{\,}310$     & $817 \pm 92$            & $\mathbf{1{\,}955 \pm 198}$ \\
        \midrule
        BkJ NUTS, Hess + DA                & $2{\,}772 \pm 156$          & $73.6 \pm 9.7$           & $6.6 \pm 0.7$ \\
        BkJ NUTS, Hess + WA (diag)         & $47{\,}201 \pm 2{\,}755$     & $583.6 \pm 58.4$         & $50.3 \pm 5.4$ \\
        BkJ NUTS, Hess + WA (full)         & $19{\,}868 \pm 1{\,}224$     & $1{\,}170 \pm 123$        & $9.0 \pm 0.8$ \\
        BkJ NUTS, WA (diag)                & $54{\,}124 \pm 4{\,}053$     & $\mathbf{1{\,}637 \pm 174}$ & $171.3 \pm 20.7$ \\
        BkJ NUTS, WA (full)                & $26{\,}518 \pm 1{\,}764$     & $792.9 \pm 113.7$         & $470.7 \pm 46.1$ \\
        \bottomrule
    \end{tabular}
    \label{tab:hd_ess_per_sec}
\end{table}

\begin{table}[!ht]
    \centering
    \caption{\footnotesize Tuned step size $h^\star$ and per-iteration gradient cost $N_{\nabla}$ on the three high-dimensional models. For NUTS, $N_{\nabla}$ is the average leapfrog steps per iteration (one gradient per step); for MAKLA-BCSS-2, $N_{\nabla}=2$ per outer step. Coupled MAKLA-BCSS-2 omitted (memory; see Section~\ref{sec:higherdim}). $\dag$: did not reach $\widehat R_{\max}\le 1.01$.}
    \footnotesize 
    \renewcommand{\arraystretch}{0.9}
    \setlength{\tabcolsep}{3pt}
    \begin{tabular}{l|cc|cc|cc}
        \toprule
        & \multicolumn{2}{c|}{\texttt{GLMM1} ($d{=}237$)} & \multicolumn{2}{c|}{\texttt{Mth} ($d{=}394$)} & \multicolumn{2}{c}{\texttt{fims\_2pl\_irt} ($d{=}531$)}\\
        Sampler & $h^\star$ & $N_{\nabla}$ & $h^\star$ & $N_{\nabla}$ & $h^\star$ & $N_{\nabla}$\\
        \midrule
        MAKLA-BCSS-2 (static Hessian)      & $0.79$ & $2.0$  & $0.21$ & $2.0$  & $0.40\,\dag$ & $2.0\,\dag$ \\
        1-sys Adaptive MAKLA-BCSS-2        & $0.79$ & $2.0$  & $0.63$ & $2.0$  & $0.98$       & $2.0$       \\
        2-sys Adaptive MAKLA-BCSS-2        & $0.79$ & $2.0$  & $0.50$ & $2.0$  & $0.98$       & $2.0$       \\
        \midrule
        BkJ NUTS, Hess + DA                 & $0.42$ & $81.0$ & $0.17$ & $200.7$ & $0.24$      & $655.3$    \\
        BkJ NUTS, Hess + WA (diag)          & $0.37$ & $15.0$ & $0.16$ & $31.0$  & $0.017$     & $255.0$    \\
        BkJ NUTS, Hess + WA (full)          & $0.37$ & $15.0$ & $0.33$ & $15.0$  & $0.24$      & $1{\,}009.8$ \\
        BkJ NUTS, WA (diag)                 & $0.38$ & $15.0$ & $0.31$ & $15.0$  & $0.077$     & $63.0$     \\
        BkJ NUTS, WA (full)                 & $0.36$ & $15.0$ & $0.27$ & $16.7$  & $0.26$      & $30.9$     \\
        \bottomrule
    \end{tabular}
    \label{tab:hd_steps}
\end{table}

\paragraph{Interpretation.} Adaptive MAKLA-BCSS-2 is competitive across all three models and dominates on the largest. At $d=237$ (\texttt{GLMM1}), 1-sys Adaptive MAKLA-BCSS-2 leads per-gradient ($4.72$ Grad/ESS) and on throughput ($85{\,}952$ ESS/sec); static-Hessian MAKLA-BCSS-2 and the BkJ Hess + WA variants are within $\sim 3\times$. At $d=394$ (\texttt{Mth}), BkJ NUTS Hess + WA (full) is slightly ahead on per-gradient cost ($75.3$ vs.\ 1-sys MAKLA-BCSS-2's $82.98$) but pays the $O(d^2)$ inverse-mass-matrix matmul per leapfrog step, leaving it behind on throughput; BkJ NUTS WA (diag) wins raw throughput at $1{\,}637$ ESS/sec, with MAKLA-BCSS-2 competitive overall. At $d=531$ (\texttt{fims\_2pl\_irt}), Adaptive MAKLA-BCSS-2 pulls ahead decisively: 2-sys at $14.81$ Grad/ESS is $3.6\times$ ahead of the best NUTS (BkJ NUTS WA full at $52.9$), and at $1{\,}955$ ESS/sec it is $4.1\times$ ahead on throughput. Static MAKLA-BCSS-2 fails at $d=531$ (Hessian-at-MAP too poor without in-sampler adaptation); all NUTS variants converge but are at least $3.6\times$ slower per-gradient than the best Adaptive MAKLA-BCSS-2. Table~\ref{tab:hd_steps} confirms the per-iteration view: MAKLA-BCSS-2's $N_{\nabla}=2$ stays $\sim 16\times$ ($d=237$) to $\sim 130\times$ ($d=531$) below the cheapest comparable NUTS variant.

These three examples show that the Adaptive MAKLA-BCSS-2 family extends the practical range of Bayesian models successfully sampled in our benchmark suite, with a meaningful efficiency margin over every NUTS variant in our panel on the largest models tested.

\paragraph{GPU runs (RTX 5090, float32).} We additionally re-ran the same eight samplers on the three high-dimensional models on an NVIDIA RTX 5090 GPU in single precision, with \texttt{jax.config.update("jax\_default\_matmul\_precision", "highest")} to keep matrix multiplication in float32 and avoid using reduced precision TF32. We also raised the preconditioner regulariser from $\varepsilon = 10^{-6}$ (the fp64 default) to $\varepsilon = 10^{-4}$ for stability. The single-precision setting accelerates the gradient evaluations and Hessian-rescaling matmuls by $\sim 5$--$25\times$ relative to the $14$-core CPU run. Table~\ref{tab:hd_gpu} reports the same two metrics. The two full-covariance NUTS variants on \texttt{GLMM1} (\texttt{nuts\_hess\_wnd\_full} and \texttt{nuts\_van\_wnd\_full}) had to be run in fp64 because cuSOLVER's triangular-solve kernel failed to launch on the $237\times 237$ near-singular mass matrix in fp32; the corresponding entries are fp64 results, and they confirm that the algorithmic behaviour is identical to fp32 wherever both work. One cell -- \texttt{nuts\_hess\_wnd\_full} on \texttt{fims\_2pl\_irt} -- also failed in fp32 and is marked ``--''; bumping that cell to fp64 likewise restores convergence.\footnote{In fp32, \texttt{BlackJAX}'s dual-averaging step-size adaptation underflowed to $h=0$, after which every NUTS proposal was trivially ``accepted'' on zero-length leapfrog trajectories (mean acc.\ $1.0$, tree depth saturated at $10$); $272/532$ coordinates ended with within-chain s.d.\ exactly $0$.} All remaining cells converge cleanly ($\widehat R_{\max} \le 1.0029$ except MAKLA-static on \texttt{fims\_2pl\_irt}, which fails identically to the CPU run at $\widehat R_{\max}=1.11$); posterior means agree to within $0.011$ standard deviations of the CPU reference and standard deviations to within $1\%$.

\begin{table}[!ht]
    \centering
    \caption{\footnotesize \textbf{GPU (RTX 5090, fp32):} Worst-component gradient evaluations per effective sample (\textbf{g/ess}, lower is better) and total worst-component effective samples per second summed across all $140$ chains (\textbf{ess/sec}, higher is better) on the three higher-dimensional \texttt{posteriordb} models, with bootstrap standard errors over $200$ chain resamples. \textbf{Bold}: best converged value in column. $\dag$: chains did not reach $\widehat R_{\max}\le 1.01$. ``--'': $\widehat R_{\max} = \infty$, fp32 step-size underflowed to $0$ during DA warmup.}
    \setlength{\tabcolsep}{3pt}
    \resizebox{\textwidth}{!}{%
    \begin{tabular}{l|cc|cc|cc}
        \toprule
        & \multicolumn{2}{c|}{\texttt{GLMM1} ($d=237$)} & \multicolumn{2}{c|}{\texttt{Mth} ($d=394$)} & \multicolumn{2}{c}{\texttt{fims\_2pl\_irt} ($d=531$)} \\
        Sampler\,\textsuperscript{*} & g/ess & ess/sec & g/ess & ess/sec & g/ess & ess/sec \\
        \midrule
        MAKLA-BCSS-2 (static Hess.) & $5.69 \pm 0.43$              & $472{\,}911 \pm 34{\,}705$            & $461.1 \pm 53.2$       & $4{\,}459 \pm 516$           & $27{\,}249 \pm 2{\,}463\,\dag$ & $19.80 \pm 1.86\,\dag$ \\
        1-sys Adap. MAKLA-BCSS-2    & $5.29 \pm 0.45$              & $508{\,}377 \pm 42{\,}300$            & $\mathbf{70.40 \pm 6.95}$ & $\mathbf{26{\,}076 \pm 2{\,}688}$ & $14.11 \pm 1.50$         & $40{\,}233 \pm 4{\,}115$ \\
        2-sys Adap. MAKLA-BCSS-2    & $\mathbf{4.94 \pm 0.38}$     & $\mathbf{542{\,}993 \pm 40{\,}429}$   & $110.9 \pm 13.4$       & $18{\,}078 \pm 2{\,}221$       & $\mathbf{13.02 \pm 1.13}$ & $\mathbf{46{\,}610 \pm 3{\,}982}$ \\
        \midrule
        BkJ NUTS, Hess + DA             & $35.11 \pm 3.04$             & $9{\,}247 \pm 637$                   & $1{\,}309 \pm 134$       & $670 \pm 73$                 & $4{\,}962 \pm 621$         & $89.4 \pm 12.0$ \\
        BkJ NUTS, H. + WA (diag)      & $9.57 \pm 0.84$              & $154{\,}627 \pm 11{\,}470$            & $178.9 \pm 17.7$       & $7{\,}767 \pm 801$            & $713.4 \pm 66.0$          & $959.8 \pm 93.4$ \\
        BkJ NUTS, H. + WA (full)      & $11.67 \pm 0.74$             & $36{\,}315 \pm 1{\,}714$              & $90.8 \pm 10.1$        & $5{\,}608 \pm 615$            & --                        & -- \\
        BkJ NUTS, WA (diag)             & $9.67 \pm 0.71$              & $185{\,}740 \pm 10{\,}958$            & $129.4 \pm 14.0$       & $14{\,}158 \pm 1{\,}646$       & $225.9 \pm 28.7$          & $3{\,}052 \pm 423$ \\
        BkJ NUTS, WA (full)             & $13.86 \pm 1.29$             & $36{\,}066 \pm 2{\,}645$              & $97.6 \pm 11.1$        & $7{\,}042 \pm 860$            & $42.08 \pm 3.88$          & $11{\,}775 \pm 907$ \\
        \bottomrule
    \end{tabular}}
    \label{tab:hd_gpu}
\end{table}

The GPU results confirm and amplify the CPU picture. Per-gradient costs match the CPU runs to within Monte Carlo noise (compare Tables~\ref{tab:hd_grad_per_ess} and~\ref{tab:hd_gpu}), confirming that fp32 with non-TF32 matmul is statistically equivalent to fp64. Wall-clock throughput is uniformly $\sim 5$--$25\times$ higher on the RTX 5090 than on the $14$-core CPU: 2-sys MAKLA-BCSS-2 on \texttt{fims\_2pl\_irt} reaches $46{\,}610$ ESS/sec (vs.\ $1{\,}955$ on CPU), and the best NUTS variant reaches $11{\,}775$ ESS/sec (vs.\ $471$ on CPU). The MAKLA-BCSS-2 advantage over the best NUTS holds at $\sim 4.0\times$ on \texttt{fims\_2pl\_irt}, similar to CPU, because both sampler families benefit comparably from the hardware speedup.

\section{Conclusion}\label{sec:conclusion}
\james{We have introduced the two-system paradigm, a unified framework for constructing samplers that bridge mean-field dynamics, ensemble-chain MCMC, and adaptive MCMC. The central idea is to split an ensemble into two interacting subsystems, with each subsystem proposing updates using information from the other. This cross-system structure preserves the desired target distribution at finite particle number, while retaining a direct connection to the corresponding mean-field limit. In this way, the framework provides both a principled interpretation of existing ensemble and adaptive methods and a systematic recipe for deriving new parallel samplers.

Within this framework, we developed two-system versions of overdamped and underdamped Langevin samplers, focusing on MALA and on MAKLA-BCSS-2 -- a new BCSS-2 instantiation of MAKLA, distinct from the original Verlet-based MAKLA of \citet{BouRabee2024} -- in both coupled and adaptive forms. The resulting algorithms combine exact Metropolis correction with cross-system preconditioning, allowing particles to exploit ensemble-level information without introducing self-interaction bias. We also introduced practical modifications, including randomized step sizes, restarted adaptation, Hessian-based rescaling, and the higher-order BCSS-2 BABAB inner integrator of \citet{blanes2014} in place of the standard Verlet leapfrog, to improve robustness and per-gradient efficiency on ill-conditioned and non-Gaussian targets.

Empirically, the MAKLA-BCSS-2 family samplers performed strongly across both synthetic benchmarks and real posterior inference problems. On the synthetic targets, the underdamped two-system methods were substantially more efficient than their overdamped counterparts, while adaptation was essential on targets where a static Hessian preconditioner was locally misleading. On 45 \texttt{posteriordb} models, the four MAKLA-BCSS-2 variants achieved geometric-mean Grad/ESS\textsubscript{worst} of $1.92$--$2.61$ -- a factor of $2.3\times$--$\sim 38\times$ better than the eight NUTS variants tested -- and a $\sim 5.6$--$5.8\times$ wall-clock advantage on top of the per-gradient gains, while maintaining comparable posterior mean accuracy and stable convergence diagnostics. On three additional higher-dimensional \texttt{posteriordb} models, the 1-system and 2-system Adaptive MAKLA-BCSS-2 variants are competitive across the panel and dominate decisively on the largest model (by up to $3.5\times$ on per-gradient cost and $4.3\times$ on wall-clock throughput at $d=531$), with both CPU and RTX-5090 GPU runs producing matching posterior estimates.

We release an open-source Python implementation (\texttt{MAKLA\_JAX}\footnote{\url{https://github.com/paulindani/MAKLA\_JAX}}) with scripts to reproduce all figures, including the posterior benchmark suite, to facilitate adoption.

Overall, the two-system construction offers a scalable route to high-throughput Bayesian computation. By combining the invariance guarantees of Metropolis-adjusted samplers, the geometric flexibility of mean-field preconditioning, and the hardware efficiency of short parallel updates, it provides a practical alternative to long-trajectory methods such as NUTS. More broadly, the framework suggests a general design principle for future MCMC algorithms: divide the sampler into interacting subsystems, exchange statistical information across them, and use this interaction to improve exploration without compromising the target distribution.}

\section*{Acknowledgements}
We thank Dr. Anna Lisa Varri, Dr. Amanda Lenzi, Prof. Peter Radchenko, and Dr. Deven Sethi for the helpful conversations while writing this paper. James Chok acknowledges the use of the Isaac Newton Trust (INT) grant G101121 LEAG/929. Daniel Paulin was supported by a Nanyang Technological University Start-up Grant, project number: 024968-00001.

% Manual newpage inserted to improve layout of sample file - not
% needed in general before appendices/bibliography.
%\clearpage
\newpage
\appendix
\section{Detailed Proofs}
In the following proofs, we use the notation that $c$ is a constant that changes from line to line and does not depend on the number of particles $N$.

\subsection{Well-posedness of the two-system McKean--Vlasov SDE}\label{sec:existence_and_uniquness}
\begin{lemma}[Inherited Lipschitz and growth]\label{lem:inherited}
    Under Assumptions~\ref{eq:mckean_vlasov_assumption_1} and \ref{eq:mckean_vlasov_assumption_2}, the two system coefficients $\mathbf{b}$ and $\boldsymbol{\sigma}$ of \eqref{eq:coupled_mckean_vlasov} satisfy, for a constant $C_L$ depending only on $L$,
    \begin{align*}
        &|\mathbf{b}(t,z,\pi)-\mathbf{b}(t,z',\pi')|^2+|\boldsymbol{\sigma}(t,z,\pi)-\boldsymbol{\sigma}(t,z',\pi')|^2\\
        &\qquad\qquad \leq C_L\left\{|z-z'|^2+\mathcal{W}_2^2(\pi,\pi')\right\},
    \end{align*}
    and
    \begin{equation*}
        |\mathbf{b}(t,z,\pi)|^2+|\boldsymbol{\sigma}(t,z,\pi)|^2\leq C_L\left\{1+|z|^2+\int_{\mathbb{R}^{2d}}|u|^2\pi(du)\right\}.
    \end{equation*}
\end{lemma}

\begin{proof}
    Write $z=(z_1,z_2)$, $z'=(z_1',z_2')$, and $\pi_i=p_i\circ\pi$. By the block-diagonal structure of $(\mathbf{b},\boldsymbol{\sigma})$, the Lipschitz bound reduces to summing Assumption~\ref{eq:mckean_vlasov_assumption_2} applied to each coordinate; using $|z-z'|^2=|z_1-z_1'|^2+|z_2-z_2'|^2$ and the $1$-Lipschitz coordinate projection $\mathcal{W}_2(\pi_i,\pi_i')\leq \mathcal{W}_2(\pi,\pi')$ yields the claim with $C_L=4L$. The linear-growth bound follows analogously from Assumption~\ref{eq:mckean_vlasov_assumption_1} applied to each coordinate together with $\int|u_i|^2\pi(du)\leq\int|u|^2\pi(du)$.
\end{proof}

\begin{theorem}[Well-posedness]
    Assume Assumptions~\ref{eq:mckean_vlasov_assumption_1} and \ref{eq:mckean_vlasov_assumption_2}. For every initial law $\pi_0\in \mathcal{P}_2(\mathbb{R}^{2d})$, the two-system McKean--Vlasov SDE \eqref{eq:coupled_mckean_vlasov} has a unique strong solution on every finite time interval $[0,T]$, and
    \begin{equation}
        \mathbb{E}\sup_{0\leq t\leq T}|Z_2|^2 <\infty.
    \end{equation}
\end{theorem}
\begin{proof}
    By Lemma~\ref{lem:inherited}, the coefficients $(\mathbf{b},\boldsymbol{\sigma})$ satisfy global Lipschitz and linear growth in $(z,\pi)$ with respect to $\mathcal{W}_2$. The standard McKean--Vlasov existence and uniqueness theorem therefore applies. For completeness, here is the usual estimate.
    
    For $t\in[0,T]$, Jensen's inequality yields
    \begin{equation}
        \lvert Z_s - \overline{Z}_s\rvert^2\ \leq\ 2t\int^t_0 \lvert\mathbf{b}(Z_s, \pi_s) - \mathbf{b}(\overline{Z}_s, \overline{\pi}_s)\rvert^2\, dr\ +\ 2\bigg{|}\int^t_0 (\boldsymbol{\sigma}(Z_s,\pi_s) - \boldsymbol{\sigma}(\overline{Z}_s, \overline{\pi}_s) dW_s\bigg{|}^2\notag
    \end{equation}

    \begin{comment}
    % Doob's maximal inequality is used on the sigma term to remove the sup, then Ito's isometry is used
    Thus, using Doob's maximal inequality followed by It\^o's isometry, and the Lipschitz assumption
    \begin{align*}
        \mathbb{E}\sup_{0\leq s\leq t}\lvert Z_s - \overline{Z}_s\rvert^2\ &\leq\ 2T\mathbb{E}\sup_{0\leq s\leq t}\int_0^s\lvert\mathbf{b}(r, Z_r, \pi_r) - \mathbf{b}(r, \overline{Z}_r,\overline{\pi}_r)\rvert^2 dr\\
        &\qquad\qquad +\ 2\mathbb{E}\sup_{0\leq s\leq t}\bigg{|}\int^s_0[\boldsymbol{\sigma}(r, Z_r,\pi_r) - \boldsymbol{\sigma}(r, \overline{Z}_r,\overline{\pi}_r)]dW_r\bigg{|}^2\\
        &\leq\ 2T\mathbb{E}\int_0^t\lvert\mathbf{b}(r, Z_r, \pi_r) - \mathbf{b}(r, \overline{Z}_r,\overline{\pi}_r)\rvert^2 dr\\
        &\qquad\qquad +\ 8\mathbb{E}\int^t_0\lvert\boldsymbol{\sigma}(r, Z_r,\pi_r) - \boldsymbol{\sigma}(r, \overline{Z}_r,\overline{\pi}_r)\rvert^2dr\\
        &\leq c\left\{\int^t_0\mathbb{E}\sup_{0\leq r\leq s}\lvert Z_r - \overline{Z}_r\rvert^2 ds + \int^t_0 \mathcal{W}_2^2(\pi_s,\overline{\pi}_s) ds\right\}.
    \end{align*}
    \end{comment}
    
    Thus, using Doob's maximal inequality followed by It\^o's isometry, and the Lipschitz assumption
    \begin{align*}
        \mathbb{E}\sup_{0\leq s\leq t}\lvert Z_s - \overline{Z}_s\rvert^2\ &\leq c\left\{\int^t_0\mathbb{E}\sup_{0\leq r\leq s}\lvert Z_r - \overline{Z}_r\rvert^2 ds + \int^t_0 \mathcal{W}_2^2(\pi_s,\overline{\pi}_s) ds\right\}.
    \end{align*}
    For a particular coupling $(Z_s,\overline{Z}_s)$,
    \begin{equation*}
        \mathcal{W}^2_2(\pi_s,\overline{\pi}_s)\leq \mathbb{E}|Z_s-\overline{Z}_s|^2.
    \end{equation*}
    Thus,
    \begin{equation*}
        \mathbb{E}\sup_{0\leq s\leq t}\lvert Z_s - \overline{Z}_s\rvert^2 \leq c\int^t_0\mathbb{E}\sup_{0\leq r\leq u}\lvert Z_r - \overline{Z}_r\rvert^2du,
    \end{equation*}
    and applying Gr\"onwall's inequality yields pathwise uniqueness. The standard contraction argument on the space of measure flows in $C([0,T],\mathcal{P}_2(\mathbb{R}^{2d})$ to prove uniqueness and existence \citep{Carmona2016, lacker2018, sznitman_1991}. 

    The moment bound follows from the linear-growth estimate in Lemma~\ref{lem:inherited}, Burkholder-Davis-Gundy inequality, and Gr\"onwall's inequality:
    \begin{equation*}
        \mathbb{E}\sup_{0\leq s\leq t}|Z_s|^2\leq c\left(1+\mathbb{E}|Z_0|^2+\int_{0}^t\mathbb{E}\sup_{0\leq r\leq u}|Z_r|^2du\right).
    \end{equation*}
\end{proof}

% https://www.ceremade.dauphine.fr/~mischler/Enseignements/M2evol2018/chap0.pdf
\subsection{Convergence in Wasserstein}\label{sec:convergence_same_distribution}
\begin{corollary}[Marginals solve the original equation]\label{corollary:law}
    Assume Assumption~\ref{eq:mckean_vlasov_assumption_1} and \ref{eq:mckean_vlasov_assumption_2}. Let $Z=(Z^1,Z^2)$ solve the two-system McKean--Vlasov SDE in Eqn.~\eqref{eq:coupled_mckean_vlasov}. If $\law(Z_0^1)=\law(Z_0^2)=\mu_0$, then 
    \begin{equation*}
        \law(Z_t^1)=\law(Z_t^2)=\mu_t,\qquad\text{for}\qquad 0\leq t\leq T,
    \end{equation*}
    where $\mu_t$ is the unique solution law of the original McKean--Vlasov SDE in Eqn.~\eqref{eq:mckean_vlasov_sde} with initial law $\mu_0$.
\end{corollary}
\begin{proof}
    Using Theorem \ref{them:convergence_in_wasserstein}, the result follows as the Wasserstein distance defines a metric on probability measures on $\mathbb{R}^d$.
\end{proof}

\begin{theorem}[Stability in Wasserstein distance]\label{them:convergence_in_wasserstein} 
    Assume Assumption~\ref{eq:mckean_vlasov_assumption_1} and \ref{eq:mckean_vlasov_assumption_2}. Consider the coupled systems of SDEs
    \begin{equation}\label{eq:coupled_mckean_vlasov_12}
        \begin{split}
            &dX_t\ =\ b(t, X_t,\rho_t)\, dt\ +\ \sigma(t, X_t,\rho_t)\, dW_t^X,\\
            &dY_t\ =\ b(t, Y_t,\mu_t)\, dt\ +\ \sigma(t, Y_t,\mu_t)\, dW_t^Y,\\
            &d\overline{X}_t\ =\ b(t, \overline{X}_t, \overline{\mu}_t)\, dt\ +\ \sigma(t, \overline{X}_t, \overline{\mu}_t)\, dW_t^X\\
            &d\overline{Y}_t\ =\ b(t, \overline{Y}_t, \overline{\rho}_t)\, dt\ +\ \sigma(t, \overline{Y}_t, \overline{\rho}_t)\, dW_t^Y
        \end{split}
    \end{equation}
    with $\law(X_t)=\mu_t,\, \law(Y_t)=\rho_t,\, \law(\overline{X}_t)=\overline{\mu}_t,\, \law(\overline{Y}_t)=\overline{\rho}_t$ and initial conditions $X_0, Y_0, \overline{X}_0, \overline{Y}_0$ respectively. Then for $t\in[0,T]$,
    \begin{equation}
        \mathcal{W}_2^2(\mu_t,\overline{\mu}_t)\ +\ \mathcal{W}_2^2(\rho_t,\overline{\rho}_t)\ \leq\  c\Big\{\mathcal{W}_2^2(\mu_0,\overline{\mu}_0)\ +\ \mathcal{W}_2^2(\rho_0,\overline{\rho}_0)\ +\ \mathcal{W}_2^2(\overline{\mu}_0,\overline{\rho}_0)\Big\}.
    \end{equation}
    where $c$ is a constant that depends on $t$, and $\mathcal{W}_2$ is the Wasserstein distance. In other words, if the two system start close (in Wasserstein distance), then they remain close for $t\in[0,T]$.
\end{theorem}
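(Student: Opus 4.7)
The plan is to use a synchronous coupling argument: drive all three SDEs in \eqref{eq:coupled_mckean_vlasov_12} by the same Brownian motion $W_t$, and construct a joint initial random variable $(X_0,Y_0,\overline{X}_0)$ in which $(X_0,\overline{X}_0)$ and $(Y_0,\overline{X}_0)$ are simultaneously optimal $\mathcal{W}_2$-couplings of their marginals. Such a joint law exists by the gluing lemma applied along the common marginal $\overline{\mu}_0$. Since $(X_t,\overline{X}_t)$ and $(Y_t,\overline{X}_t)$ are then admissible (but not necessarily optimal) couplings at each later time, one immediately gets the pathwise upper bound
$$\mathcal{W}_2^2(\mu_t,\overline{\mu}_t)\ +\ \mathcal{W}_2^2(\rho_t,\overline{\mu}_t)\ \leq\ \mathbb{E}|X_t-\overline{X}_t|^2\ +\ \mathbb{E}|Y_t-\overline{X}_t|^2,$$
so it suffices to control the right-hand side.

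Next, I would estimate the pathwise differences by Itô, Doob's maximal inequality, and the Lipschitz assumption (Assumption \ref{eq:mckean_vlasov_assumption_2}), mirroring the computation already carried out in Section \ref{sec:existence_and_uniquness}. Writing $X_t-\overline{X}_t$ as the sum of the initial gap, a drift integral, and a stochastic integral, applying Cauchy--Schwarz in time to the drift and Itô's isometry combined with Doob's inequality to the martingale part, and then the Lipschitz bound on $b$ and $\sigma$, yields
$$\mathbb{E}\sup_{0\leq s\leq t}|X_s-\overline{X}_s|^2\ \leq\ c\,\mathbb{E}|X_0-\overline{X}_0|^2\ +\ c\int_0^t\Big[\mathbb{E}\sup_{0\leq r\leq s}|X_r-\overline{X}_r|^2\ +\ \mathcal{W}_2^2(\rho_s,\overline{\mu}_s)\Big]\, ds,$$
with the analogous bound for $Y_t-\overline{X}_t$ containing $\mathcal{W}_2^2(\mu_s,\overline{\mu}_s)$. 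The key observation that closes the feedback is $\mathcal{W}_2^2(\rho_s,\overline{\mu}_s)\leq\mathbb{E}\sup_{0\leq r\leq s}|Y_r-\overline{X}_r|^2$ and its analogue, since $(Y_r,\overline{X}_r)$ and $(X_r,\overline{X}_r)$ are admissible couplings of the relevant time-marginals.

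Finally, I would set $\Phi(t):=\mathbb{E}\sup_{0\leq s\leq t}|X_s-\overline{X}_s|^2+\mathbb{E}\sup_{0\leq s\leq t}|Y_s-\overline{X}_s|^2$, add the two integral inequalities, and obtain
$$\Phi(t)\ \leq\ c\Big[\mathbb{E}|X_0-\overline{X}_0|^2+\mathbb{E}|Y_0-\overline{X}_0|^2\Big]\ +\ c\int_0^t\Phi(s)\, ds.$$
Gr\"onwall's inequality then gives $\Phi(t)\leq ce^{ct}\big[\mathbb{E}|X_0-\overline{X}_0|^2+\mathbb{E}|Y_0-\overline{X}_0|^2\big]$, and the optimality of the initial couplings replaces the expectations by $\mathcal{W}_2^2(\mu_0,\overline{\mu}_0)+\mathcal{W}_2^2(\rho_0,\overline{\mu}_0)$. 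Combined with the synchronous-coupling upper bound, this is the desired estimate.

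The main obstacle is the cross-system feedback: because $X$'s coefficients depend on $\law(Y)=\rho$ and $Y$'s coefficients depend on $\law(X)=\mu$, neither single-chain Gr\"onwall closes on its own, and one must sum the two inequalities before invoking Gr\"onwall. A secondary technical point is producing a single joint initial law in which \emph{both} pairs are $\mathcal{W}_2$-optimal, which is why the gluing lemma is invoked. Once these ingredients are set up, the remaining steps are the routine Itô--BDG--Gr\"onwall combination, with square-integrability of all processes guaranteed by the existence result of Section \ref{sec:existence_and_uniquness}.
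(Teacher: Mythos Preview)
Your proposal is correct and follows essentially the same synchronous-coupling/Lipschitz/Gr\"onwall strategy as the paper. The only organizational differences are that the paper applies Gr\"onwall twice (once to each single-chain inequality with the cross-Wasserstein term still inside the integral, then again after summing), whereas you sum first and apply Gr\"onwall once; and the paper handles the initial coupling by ``taking the infimum'' at the end rather than invoking the gluing lemma upfront---your formulation is arguably cleaner on that point, since it makes explicit why both pairs $(X_0,\overline{X}_0)$ and $(Y_0,\overline{X}_0)$ can be taken optimal simultaneously.
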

\begin{proof}
    For $t\in [0,T]$, It\^o's isometry, Jensen's inequality, and the Lipschitz assumption yields
    \begin{align*}
        \mathbb{E}\lvert X_t - \overline{X}_t\rvert^2\ &\leq\  c\mathbb{E}\lvert X_0 - \overline{X}_0\rvert^2\ +\  c\int^t_0\Big\{\mathbb{E}\lvert X_s - \overline{X}_s\rvert^2\ +\ \mathcal{W}_2^2(\rho_s, \overline{\mu}_s)\Big\}\, ds.
    \end{align*}
    By Gr\"onwall's inequality,
    \begin{align*}
        \mathbb{E}\lvert X_t - \overline{X}_t\rvert^2\ &\leq\  c\mathbb{E}\lvert X_0 - \overline{X}_0\rvert^2\ +\  c\int^t_0\mathcal{W}_2^2(\rho_s, \overline{\mu}_s)\, ds.
    \end{align*}
    Taking the infimum of both sides and applying the triangle inequality yields
    \begin{align*}
        \mathcal{W}_2^2(\mu_t,\overline{\mu}_t)&\leq c\mathcal{W}_2^2(\mu_0,\overline{\mu}_0)+c\int^t_0\mathcal{W}_2^2(\rho_s,\overline{\mu}_s)ds\\
        &\leq c\mathcal{W}^2_2(\mu_0,\overline\mu_0)+c\int^t_0\left(\mathcal{W}_2^2(\rho_s,\overline{\rho}_s)+\mathcal{W}_2^2(\overline \rho_s,\overline{\mu}_s)\right)ds,
    \end{align*}
    with a similar inequality holding for $\mathcal{W}_2^2(\mu_s,\overline{\rho}_s)$. Thus,
    \begin{align*}
        \mathcal{W}_2^2(\mu_t,\overline{\mu}_t)+\mathcal{W}_2^2(\rho_t,\overline{\rho}_t)&\leq c(\mathcal{W}_2^2(\mu_0,\overline{\mu}_0)+\mathcal{W}_2^2(\rho_0,\overline{\rho}_0)) \\
        &\qquad +c\int_{0}^t\left\{\mathcal{W}_2^2(\mu_s,\overline{\mu}_s)+\mathcal{W}_2^2(\rho_s,\overline{\rho}_s)\right\}ds+c\int^t_0\mathcal{W}_2^2(\overline{\mu}_s, \overline{\rho}_s)ds.
    \end{align*}

    To control the last term, we can compare $\overline{X}$ and $\overline{Y}$ under a synchronous Brownian coupling to give
    \begin{equation*}
        \mathcal{W}_2^2(\overline{\mu}_t,\overline{\rho}_t)\leq c\mathcal{W}_2^2(\overline{\mu}_0,\overline{\rho}_0)+c\int_0^t\mathcal{W}_2^2(\overline{\mu}_s,\overline{\rho}_s)ds.
    \end{equation*}
    Thus, Gr\"onwall's inequality gives
    \begin{equation*}
        \mathcal{W}_2^2(\overline{\mu}_t,\overline{\rho}_t)\leq c\mathcal{W}_2^2(\overline{\mu}_0,\overline{\rho}_0).
    \end{equation*}
    Substituting this into the previous inequality gives,
    \begin{align*}
        \mathcal{W}_2^2(\mu_t,\overline{\mu}_t)+ \mathcal{W}_2^2(\rho_t,\overline{\rho}_t)&\leq c\left\{\mathcal{W}_2^2(\mu_0,\overline{\mu}_0)+\mathcal{W}_2^2(\rho_0,\overline{\rho}_0) + \mathcal{W}_2^2(\overline{\mu_0},\overline{\rho}_0)\right\} \\
        &\qquad +c\int_{0}^t(\mathcal{W}_2^2(\mu_s,\overline{\mu}_s)+\mathcal{W}_2^2(\rho_s,\overline{\rho}_s))ds.
    \end{align*}
    Another application of Gr\"onwall's inequality gives
    \begin{align*}
        \mathcal{W}_2^2(\mu_t,\overline{\mu}_t) + \mathcal{W}_2^2(\rho_t,\overline{\rho}_t)&\leq c\left\{\mathcal{W}_2^2(\mu_0,\overline{\mu}_0)+\mathcal{W}_2^2(\rho_0,\overline{\rho}_0) + \mathcal{W}_2^2(\overline{\mu_0},\overline{\rho}_0)\right\}.
    \end{align*}
\end{proof}

\subsection{Propagation of Chaos}\label{sec:propagation_of_chaos}
Before proving the propagation of chaos, we first state an important Lemma which measures how `far' independent and identically distributed samples from a distribution, $\mu$, are away from $\mu$.
\begin{lemma}[Lemma 1.9 \citet{Carmona2016}]\label{lemma:carmona_1.9}
    Let $\mu\in\mathcal{P}_2(\mathbb{R}^d)$, $\boldsymbol{\xi}_N=\{\xi_1,\ldots \xi_N\}$ be a sequence of independent random variables with common law $\mu$. Then for each $N\geq 1$, we have
    \begin{equation}
        \mathbb{E}\mathcal{W}_2^2(\delta_{\boldsymbol{\xi}_N}, \mu)\ \leq\ 4\int_{\mathbb{R}^d}\lvert{x}\rvert^2\, \mu(dx),\qquad\text{and}\qquad \lim_{N\to\infty}\mathbb{E}\mathcal{W}_2^2(\delta_{\boldsymbol{\xi}_N},\mu)\ =\ 0.
    \end{equation}
\end{lemma}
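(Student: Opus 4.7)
The plan is to prove the two claims separately, using that distances to a Dirac are explicit and that $\mathcal{W}_2$ convergence on $\mathcal{P}_2(\mathbb{R}^d)$ admits a clean characterisation.

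For the uniform bound, I would use the triangle inequality for $\mathcal{W}_2$ together with the elementary estimate $(a+b)^2\le 2a^2+2b^2$:
\begin{equation}
\mathcal{W}_2^2(\delta_{\boldsymbol{\xi}_N},\mu)\ \le\ 2\,\mathcal{W}_2^2(\delta_{\boldsymbol{\xi}_N},\delta_0)\ +\ 2\,\mathcal{W}_2^2(\delta_0,\mu).
\end{equation}
Since $\mathcal{W}_2^2(\nu,\delta_0)=\int|x|^2\nu(dx)$ whenever one marginal is a Dirac (the coupling is unique), the right-hand side equals $\frac{2}{N}\sum_{i=1}^N|\xi_i|^2 + 2\int|x|^2\mu(dx)$. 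Taking expectations and using that the $\xi_i$ have common law $\mu$ yields the claimed bound $4\int|x|^2\mu(dx)$.

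For the convergence statement, I would first obtain almost-sure convergence $\mathcal{W}_2(\delta_{\boldsymbol{\xi}_N},\mu)\to 0$, then upgrade it to $L^1$ convergence via uniform integrability. The a.s. step combines three ingredients: (i) Varadarajan's theorem, which gives narrow convergence $\delta_{\boldsymbol{\xi}_N}\to\mu$ almost surely; (ii) the classical strong law of large numbers, which gives $\frac{1}{N}\sum_{i=1}^N|\xi_i|^2\to\int|x|^2\mu(dx)$ almost surely; and (iii) the standard characterisation (see, e.g., Villani's \emph{Optimal Transport}, Theorem~6.9) that on $\mathcal{P}_2(\mathbb{R}^d)$ a sequence converges in $\mathcal{W}_2$ iff it converges narrowly and its second moments converge. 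Combining (i)--(iii) gives $\mathcal{W}_2^2(\delta_{\boldsymbol{\xi}_N},\mu)\to 0$ almost surely.

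To pass from a.s.\ convergence to convergence of expectations, I would apply Vitali's convergence theorem using the dominator from the first part, $D_N := \frac{2}{N}\sum_{i=1}^N|\xi_i|^2 + 2\int|x|^2\mu(dx)$. By the $L^1$ strong law, $\frac{1}{N}\sum_{i=1}^N|\xi_i|^2\to\int|x|^2\mu(dx)$ in $L^1$, so the sequence $(D_N)_N$ is uniformly integrable. Since $0\le \mathcal{W}_2^2(\delta_{\boldsymbol{\xi}_N},\mu)\le D_N$, the bound is uniformly integrable as well, and Vitali yields $\mathbb{E}\mathcal{W}_2^2(\delta_{\boldsymbol{\xi}_N},\mu)\to 0$.

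The main obstacle is the characterisation in step (iii): unlike narrow convergence, $\mathcal{W}_2$ convergence is sensitive to mass escaping to infinity, so one genuinely needs the tightness at infinity supplied by the joint convergence of the second moments. The rest of the argument is book-keeping --- the triangle-type inequality for the squared distance, Varadarajan, and the two strong laws --- but isolating a uniformly integrable dominator (rather than a pointwise one) is what makes the final passage to the limit work without any additional moment assumption beyond $\mu\in\mathcal{P}_2(\mathbb{R}^d)$.
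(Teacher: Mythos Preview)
The paper does not prove this lemma; it is stated as Lemma~1.9 of Carmona (2016) and used as a black box in the propagation-of-chaos argument. There is therefore no in-paper proof to compare against.

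Your argument is correct. The uniform bound via the triangle inequality through $\delta_0$ is clean and lands exactly on the constant $4$. For the limit, combining Varadarajan, the SLLN for $\frac{1}{N}\sum_i|\xi_i|^2$, and the Villani characterisation of $\mathcal{W}_2$-convergence is the standard route and works as stated. The passage to expectations via the dominator $D_N$ is also fine: since $\bar Y_N:=\frac{1}{N}\sum_i|\xi_i|^2$ is nonnegative, converges a.s.\ to its (finite) mean, and has constant expectation, Scheff\'e's lemma (or equivalently the a.s.\ $+$ matching-first-moment criterion for nonnegative sequences) gives $\bar Y_N\to\mathbb{E}|\xi_1|^2$ in $L^1$, hence $(D_N)$ is uniformly integrable, and domination transfers UI to $\mathcal{W}_2^2(\delta_{\boldsymbol{\xi}_N},\mu)$.

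One small remark: you invoke the ``$L^1$ strong law'' to get $L^1$ convergence of $\bar Y_N$, but the SLLN by itself only delivers a.s.\ convergence. The $L^1$ upgrade requires the extra step just mentioned (constant expectation plus Scheff\'e, or a direct UI argument for averages of i.i.d.\ integrable variables). This is a naming issue rather than a gap in the logic.
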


\begin{theorem}[Propagation of Chaos]
    Under assumptions (\ref{eq:mckean_vlasov_assumption_1}) and (\ref{eq:mckean_vlasov_assumption_2}), the coupled $2N$-particle system (\ref{eq:mckean_vlasov_finite_particle_approximation}) approaches the mean-field limit as $N\to\infty$ in the following sense: for the coupled SDEs,
    \begin{equation}\label{eq:appendix_mean_field_approximation}
        \begin{split}
            dX^i_t\ &=\ b(t, X^i_t, \delta_{\bfY_t})\, dt\ +\ \sigma(t, X^i_t,\delta_{\bfY_t})\, dW^{X,i}_t\\
            dY^i_t\ &=\ b(t, Y^i_t, \delta_{\bfX_t})\, dt\ +\ \sigma(t, Y^i_t,\delta_{\bfX_t})\, dW^{Y,i}_t\\
            d\overline{X}^i_t\ &=\ b(t, \overline{X}^i_t, \overline{\mu}_t)\, dt\ +\ \sigma(t, \overline{X}^i_t,\overline{\mu}_t)\, dW^{X,i}_t,\\
            d\overline{Y}^i_t\ &=\ b(t, \overline{Y}^i_t, \overline{\rho}_t)\, dt\ +\ \sigma(t, \overline{Y}^i_t,\overline{\rho}_t)\, dW^{Y,i}_t,
        \end{split}
    \end{equation}
    for $i\in\{1,\ldots, N\}$, with initial conditions $X^i_0=Y^i_0=\overline{X}^i_0=\overline{Y}^i_0=\xi_i$, where $\xi_i$ are i.i.d. with law $\mu_0$, we have
    \begin{equation}
        \lim_{N\to\infty}\sup_{1\leq i\leq N}\mathbb{E}\left(\sup_{0\leq s\leq T}\lvert X^i_s - \overline{X}^i_s\rvert^2 + \sup_{0\leq s\leq T}\lvert Y^i_s - \overline{Y}^i_s\rvert^2\right)\ =\ 0.
    \end{equation}
\end{theorem}

\begin{proof}This proof follows Theorem 1.10 from \citet{Carmona2016}. By the Burkholder--Davis--Gundy
inequality, It\^o's isometry, Jensen's inequality, and the Lipschitz assumption 
    \begin{align*}
        \mathbb{E}\sup_{0\leq s\leq t}\lvert X^i_s - \overline{X}^i_s\rvert^2\ 
        &\leq c\int^t_0\mathbb{E}|X_s^i-\overline X^i_s|^2ds + c\int^t_0\mathbb{E}\mathcal{W}_2^2(\delta_{\bfY_s},\overline\mu_s)ds.
    \end{align*}
    
    Since the initial laws of $\overline X^i$ and $\overline Y^i$ agree, uniqueness in law for the McKean--Vlasov equation implies $\overline{\mu}_s=\overline\rho_s$ for $s\in[0,T]$. Thus, applying the triangle inequality yields
    \begin{equation*}
        \mathcal{W}_2^2(\delta_{\bfY_s},\overline\mu_s)=\mathcal{W}_2^2(\delta_{\bfY_s},\overline\rho_s)\leq 2\mathcal{W}_2^2(\delta_{\bfY_s},\delta_{\overline\bfY_s})+2\mathcal{W}_2^2(\overline\rho_s,\delta_{\overline\bfY_s}).
    \end{equation*}
    The natural copuling between $\delta_{\bfY_s}$ and $\delta_{\overline\bfY_s}$ gives
    \begin{equation*}
        \mathcal{W}^2_2(\delta_{\bfY_s},\delta_{\overline\bfY_s})\leq \frac{1}{N}\sum_{i=1}^N|Y^i_s-\overline Y^i_s|^2\leq \sup_{1\leq i\leq N}|Y^i_s-\overline Y^i_s|^2.
    \end{equation*}
    Thus,
    \begin{align*}
        \mathbb{E}\sup_{0\leq s\leq t}\lvert X^i_s - \overline{X}^i_s\rvert^2\ 
        &\leq c\int^t_0\sup_{1\leq j\leq N}\mathbb{E}\left(\sup_{0\leq r\leq s}|X_r^j-\overline X^t_r|^2+\sup_{0\leq r\leq s}|Y_r^j-\overline Y^t_r|^2\right)ds \\
        &\qquad\qquad+ c\int^t_0\mathbb{E}\mathcal{W}_2^2(\delta_{\overline\bfY_s},\overline\rho_s)ds,
    \end{align*}
    with a similar inequality holding for $\mathbb{E}\sup_{0\leq s\leq t}|Y^i_s-\overline Y^i_s|^2$. Adding these two inequalities yields
    \begin{align*}
        &\mathbb{E}\sup_{0\leq s\leq t}\lvert X^i_s - \overline{X}^i_s\rvert^2+\mathbb{E}\sup_{0\leq s\leq t}\lvert Y^i_s - \overline{Y}^i_s\rvert^2\\
        &\qquad\qquad\leq c\int^t_0\sup_{1\leq j\leq N}\mathbb{E}\left(\sup_{0\leq r\leq s}|X_r^j-\overline X^t_r|^2+\sup_{0\leq r\leq s}|Y_r^j-\overline Y^t_r|^2\right)ds \\
        &\qquad\qquad\qquad+ c\int^t_0\Big\{\mathbb{E}\mathcal{W}_2^2(\delta_{\overline\bfY_s},\overline\rho_s)+\mathbb{E}\mathcal{W}_2^2(\delta_{\overline\bfX_s},\overline\mu_s)\Big\}ds,
    \end{align*}
    Taking the supremum of the left hand side and applying Gr\"onwall's inequality gives
    \begin{align*}
        \sup_{1\leq j\leq N}\mathbb{E}\left(\sup_{0\leq r\leq s}|X_r^j-\overline X^t_r|^2+\sup_{0\leq r\leq s}|Y_r^j-\overline Y^t_r|^2\right)ds\leq c\int^t_0 \Big\{\mathbb{E}\mathcal{W}_2^2(\delta_{\overline\bfX_s},\overline\mu_s)+\mathbb{E}\mathcal{W}_2^2(\delta_{\overline\bfY_s},\overline\rho_s)\Big\}ds.
    \end{align*}
    For each fixed $s\in[0,T]$, the random variables $\overline X^1_s,\dots,\overline X^N_s$ are i.i.d. with common law $\overline\mu_s$, and the random variables $\overline Y^1_s,\dots,\overline Y^N_s$ are i.i.d. with common law $\overline\rho_s$. Therefore, by Lemma~\ref{lemma:carmona_1.9},
    \begin{equation}
        \mathbb{E}\mathcal{W}_2^2(\delta_{\overline\bfX_s},\overline\mu_s)\to0,\qquad \mathbb{E}\mathcal{W}_2^2(\delta_{\overline\bfY_s},\overline\rho_s)\to0, \quad \text{as} \quad N\to\infty.
    \end{equation}
    Moreover, the linear growth assumption and the finite second moment of the initial law imply the standard moment bound
    \begin{equation}
        \sup_{0\leq s\leq T} \left( \mathbb E|\overline X^i_s|^2 + \mathbb E|\overline Y^i_s|^2 \right)<\infty.
    \end{equation}
    Thus Lemma~\ref{lemma:carmona_1.9} also gives the domination
    \begin{equation}
        \mathbb E\mathcal W_2^2(\delta_{\overline{\mathbf X}_s},\overline\mu_s) + \mathbb E\mathcal W_2^2(\delta_{\overline{\mathbf Y}_s},\overline\rho_s) \leq c \left( \mathbb E|\overline X^i_s|^2 + \mathbb E|\overline Y^i_s|^2 \right),
    \end{equation}
    which is integrable over $[0,T]$. By dominated convergence
    \begin{equation*}
        \int^t_0 \left\{\mathbb{E}\mathcal{W}_2^2(\delta_{\overline\bfX_s},\overline\mu_s)+\mathbb{E}\mathcal{W}_2^2(\delta_{\overline\bfY_s},\overline\rho_s)\right\}ds\to0,
    \end{equation*}
    which proves the result.
\end{proof}

\subsection{Ensemble chain MCMC preserves the correct invariant measure}\label{appendix:ensemble_chain_proof}

\begin{theorem}[Block MH with a frozen subset preserves $\rho^{\otimes N}$]
Let $\rho$ be a probability density on $\mathbb{R}^d$ and let $\pi := \rho^{\otimes N}$ be the product target on $(\mathbb{R}^d)^N$. Fix a nonempty subset $S \subset \{1,\ldots, N\}$ and write $S^c = \{1,\ldots, N\}\setminus S$. For a current state $x=(x_S, x_{S^c}) \in (\mathbb{R}^d)^N$, consider a block proposal that keeps the coordinates in $S$ fixed and proposes new values for the coordinates in $S^c$ with density
\begin{equation*}
q_S(y_{S^c} \mid x_{S^c};\, x_S), \qquad y_{S^c} \in (\mathbb{R}^d)^{|S^c|},
\end{equation*}
which may depend arbitrarily on the frozen block $x_S$. Define the joint proposal kernel on $(\mathbb{R}^d)^N$ by
\begin{equation*}
Q_S(x,dy) \;=\; \delta_{x_S}(dy_S)\; q_S(y_{S^c}\mid x_{S^c};\, x_S)\,dy_{S^c}.
\end{equation*}
Accept the proposed move $x \to y=(x_S,y_{S^c})$ with MH probability
\begin{equation*}
\alpha_S(x,y) \;=\; \min\!\left\{1,\;
\frac{\pi(y)\, q_S(x_{S^c}\mid y_{S^c};\, x_S)}
     {\pi(x)\, q_S(y_{S^c}\mid x_{S^c};\, x_S)}\right\}.
\end{equation*}
Let $K_S$ be the resulting MH transition kernel:
\begin{equation*}
K_S(x,dy)\;=\;Q_S(x,dy)\,\alpha_S(x,y)\;+\;\left\{1-\int Q_S(x,dz)\,\alpha_S(x,z)\right\}\delta_x(dy).
\end{equation*}
Then $K_S$ is reversible with respect to $\pi$, i.e.
\begin{equation*}
\pi(dx)\,K_S(x,dy)\;=\;\pi(dy)\,K_S(y,dx),
\end{equation*}
and therefore $\pi$ is invariant for $K_S$.
\end{theorem}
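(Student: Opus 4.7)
The plan is the standard Metropolis--Hastings detailed-balance argument, carefully adapted to the fact that $Q_S(x,\cdot)$ has a singular (Dirac) component on the frozen block. It suffices to prove the reversibility identity $\pi(dx)\,K_S(x,dy) = \pi(dy)\,K_S(y,dx)$; invariance then follows by integrating in $x$. Split $K_S = K^{\mathrm{acc}}_S + K^{\mathrm{rej}}_S$, where $K^{\mathrm{acc}}_S(x,dy) = Q_S(x,dy)\alpha_S(x,y)$ and $K^{\mathrm{rej}}_S(x,dy) = r(x)\delta_x(dy)$ with $r(x) = 1 - \int Q_S(x,dz)\alpha_S(x,z)$. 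The rejection part is trivially symmetric because both $\pi(dx)r(x)\delta_x(dy)$ and $\pi(dy)r(y)\delta_y(dx)$, tested against an arbitrary symmetric $f(x,y)$, reduce to $\int f(x,x)r(x)\pi(dx)$. So the only content of the proof is detailed balance for the accepted-move part.

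For that part, the first step is to observe that $Q_S(x,dy) = \delta_{x_S}(dy_S)\,q_S(y_{S^c}\mid x_{S^c};x_S)\,dy_{S^c}$ is concentrated on the fiber $\{y:\,y_S = x_S\}$. Hence both sides of
\[
\pi(dx)\,Q_S(x,dy)\,\alpha_S(x,y) \;=\; \pi(dy)\,Q_S(y,dx)\,\alpha_S(y,x)
\]
vanish unless $x_S = y_S$; on the common fiber $\{y_S = x_S\}$ they reduce to the density-level identity
\[
\pi(x)\,q_S(y_{S^c}\mid x_{S^c};x_S)\,\alpha_S(x,y) \;=\; \pi(y)\,q_S(x_{S^c}\mid y_{S^c};x_S)\,\alpha_S(y,x),
\]
read as densities in $(x_{S^c},y_{S^c})$ with the shared block $x_S$ held fixed.

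The key simplification is that, because $\pi = \rho^{\otimes N}$ factorises and $x_S = y_S$, the target ratio collapses to $\pi(y)/\pi(x) = \prod_{i\in S^c}\rho(y_i)/\rho(x_i)$, depending only on the moving coordinates. Writing $a = \pi(x)q_S(y_{S^c}\mid x_{S^c};x_S)$ and $b = \pi(y)q_S(x_{S^c}\mid y_{S^c};x_S)$, the acceptance probabilities are $\alpha_S(x,y) = \min(1,b/a)$ and $\alpha_S(y,x) = \min(1,a/b)$, and the elementary identity $a\min(1,b/a) = \min(a,b) = b\min(1,a/b)$ yields the density-level detailed balance. Integrating the detailed-balance identity against $1$ in $x$ gives $\int \pi(dx)K_S(x,dy) = \pi(dy)$, proving $\pi$-invariance.

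The only delicate point, and hence the main (mild) obstacle, is the measure-theoretic bookkeeping of the mixed discrete-continuous structure: one must justify the formal ``cancellation'' of the two Dirac factors $\delta_{x_S}(dy_S)$ and $\delta_{y_S}(dx_S)$ appearing on opposite sides. This is handled rigorously by testing both sides of the proposed identity against an arbitrary bounded measurable $f(x,y)$, using Fubini to integrate out the frozen block first and reducing cleanly to the density statement on the fiber $\{y_S = x_S\}$; no regularity assumption on $q_S$ beyond measurability is needed.
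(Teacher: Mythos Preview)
Your proposal is correct and follows essentially the same route as the paper: split $K_S$ into the accepted-move and rejection parts, note the rejection part is trivially symmetric, and for the accepted part use the identity $a\min(1,b/a)=\min(a,b)=b\min(1,a/b)$ on the fiber $\{y_S=x_S\}$. You are slightly more explicit than the paper about the measure-theoretic handling of the Dirac factors (testing against $f$), and your remark about the product factorisation of $\pi$ is true but not actually needed for this theorem---the argument works for any target $\pi$ once the acceptance ratio is defined as stated.
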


\begin{proof}
Write $x=(x_S,x_{S^c})$ and $y=(y_S,y_{S^c})$. Because the block $S$ is frozen, any proposed $y$ satisfies $y_S=x_S$. For such $x,y$,
\begin{equation*}
\pi(x)\,Q_S(x,dy)\,\alpha_S(x,y)
= \pi(x)\,\delta_{x_S}(dy_S)\, q_S(y_{S^c}\mid x_{S^c};x_S)\,
\min\!\left\{1,\frac{\pi(y)\, q_S(x_{S^c}\mid y_{S^c};x_S)}
{\pi(x)\, q_S(y_{S^c}\mid x_{S^c};x_S)}\right\}.
\end{equation*}
Using $\min\{a,b\}=\min\{b,a\}$ and noting that $x_S=y_S$ implies the same conditioning argument $x_S$ appears in both forward and reverse proposal densities, we obtain the standard MH symmetry:
\begin{equation*}
\pi(x)\,Q_S(x,dy)\,\alpha_S(x,y)
=\pi(y)\,Q_S(y,dx)\,\alpha_S(y,x).
\end{equation*}
Integrating both sides over measurable sets yields detailed balance for the ``move'' part. The ``stay'' part (the probability mass at $y=x$) matches on both sides by construction, completing detailed balance:
\begin{equation*}
\pi(dx)\,K_S(x,dy)=\pi(dy)\,K_S(y,dx).
\end{equation*}
Hence $\pi$ is invariant for $K_S$.
\end{proof}

\begin{corollary}[Parallel independent updates]
Fix a nonempty frozen $S\subset \{1,\ldots, N\}$, write $S^c=\{1,\ldots, N\}\setminus S$, and condition on $x_S$. Suppose the per-coordinate proposals for $i\in S^c$ factorize as
\begin{equation*}
q_{S}(y_{S^c}\mid x_{S^c};x_S)\;=\;\prod_{i\in S^c} q_{S,i}(y_i\mid x_i; x_S),
\end{equation*}
and define the single-site MH kernels $K_{S,i}$ on $\mathbb{R}^d$ (conditional on $x_S$) by
\begin{align*}
K_{S,i}(x_i,dy_i)
&= q_{S,i}(y_i\mid x_i;x_S)\,\alpha_i(x_i,y_i;x_S)\,dy_i\\
&\qquad\qquad + \left\{1-\int q_{S,i}(z\mid x_i;x_S)\,\alpha_i(x_i,z;x_S)\,dz\right\}\delta_{x_i}(dy_i),
\end{align*}
with acceptance probability
\begin{equation*}
\alpha_i(x_i,y_i;x_S)\;=\;\min\left\{1,\,
\frac{\rho(y_i)\,q_{S,i}(x_i\mid y_i;x_S)}
     {\rho(x_i)\,q_{S,i}(y_i\mid x_i;x_S)}\right\}.
\end{equation*}
Then, with $\pi=\rho^{\otimes N}$, the joint kernel
\begin{equation*}
\widetilde K_S(x,dy)\;=\;\delta_{x_S}(dy_S)\,\prod_{i\in S^c} K_{S,i}(x_i,dy_i),
\end{equation*}
is \emph{reversible} with respect to $\pi$ (hence $\pi$-invariant).
\end{corollary}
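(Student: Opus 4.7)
The plan is to establish detailed balance $\pi(dx)\widetilde{K}_S(x,dy) = \pi(dy)\widetilde{K}_S(y,dx)$ directly, exploiting the product structure of both $\pi$ and $\widetilde{K}_S$ over the indices in $S^c$, while handling the frozen block $S$ via the Dirac factor $\delta_{x_S}(dy_S)$.

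As a first step, I would verify that for each $i \in S^c$, with the frozen block $x_S$ viewed as a fixed external parameter, the single-site kernel $K_{S,i}(\cdot\,,\cdot\,;x_S)$ is reversible with respect to $\rho$ on $\mathbb{R}^d$. This is the classical one-coordinate Metropolis-Hastings identity: the acceptance probability $\alpha_i(x_i,y_i;x_S)$ is precisely the ratio needed so that
\begin{equation}
\rho(x_i)\,dx_i\,K_{S,i}(x_i,dy_i) \;=\; \rho(y_i)\,dy_i\,K_{S,i}(y_i,dx_i),
\end{equation}
with the stay-at-rest masses on both sides automatically matching. The dependence on $x_S$ is inert here because the same $x_S$ appears on both sides of the identity.

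Next, I would combine these per-coordinate reversibilities with the factorization $\pi(dx) = \prod_{i \in S}\rho(x_i)dx_i \cdot \prod_{i \in S^c}\rho(x_i)dx_i$ and the product form of $\widetilde{K}_S$. Multiplying $\pi(dx)$ by $\widetilde{K}_S(x,dy)$ and applying the single-site reversibility to each $i \in S^c$ converts $\prod_{i\in S^c}\rho(x_i)dx_i\,K_{S,i}(x_i,dy_i)$ into $\prod_{i\in S^c}\rho(y_i)dy_i\,K_{S,i}(y_i,dx_i)$. On the support of $\delta_{x_S}(dy_S)$ the identities $x_S = y_S$ and $\rho(x_i) = \rho(y_i)$ for $i \in S$ hold, and the symmetric relation $\prod_{i\in S}dx_i\,\delta_{x_S}(dy_S) = \prod_{i\in S}dy_i\,\delta_{y_S}(dx_S)$ lets me swap the roles of $x$ and $y$ on the frozen block, yielding $\pi(dy)\widetilde{K}_S(y,dx)$.

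The main subtlety, and the one place I expect to write carefully, is the manipulation of $\delta_{x_S}(dy_S)$: it is symmetric on the diagonal $\{x_S = y_S\}$, but one must justify the exchange with the Lebesgue measures $\prod_{i\in S}dx_i \leftrightarrow \prod_{i\in S}dy_i$ rigorously (e.g.\ by testing both sides against a bounded measurable function $f(x,y)$ and reducing to an integral over the diagonal). Once that is done, reversibility of $\widetilde{K}_S$ with respect to $\pi$ follows, and $\pi$-invariance is immediate by integrating out $y$. Note that reversibility here does \emph{not} follow from the previous block theorem, because the single-site scheme uses \emph{per-site} accept/reject decisions rather than a single joint MH ratio; the argument instead relies on per-site detailed balance combined with the conditional independence of the proposals given $x_S$.
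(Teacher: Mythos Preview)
Your proposal is correct and follows essentially the same approach as the paper: establish single-site detailed balance $\rho(x_i)\,K_{S,i}(x_i,dy_i)=\rho(y_i)\,K_{S,i}(y_i,dx_i)$ for each $i\in S^c$ (with $x_S$ treated as a fixed parameter), then multiply these identities together with the product form of $\pi$ and use the $x_S=y_S$ constraint enforced by the Dirac factor to swap the frozen block. Your observation that the result does not follow directly from the preceding block-MH theorem (because of per-site rather than joint accept/reject) is a correct and worthwhile clarification.
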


\begin{proof}
For each $i\in S^c$ and fixed $x_S$, the single-site MH kernel $K_{S,i}$ is reversible with respect to $\rho$:
\begin{equation*}
\rho(x_i)\,K_{S,i}(x_i,dy_i)\;=\;\rho(y_i)\,K_{S,i}(y_i,dx_i).
\end{equation*}
Using $\pi(dx)=\prod_{j=1}^N \rho(x_j)\,dx_j$,
\begin{equation*}
\begin{aligned}
\pi(dx)\,\widetilde K_S(x,dy)
&= \left\{\prod_{j\in S}\rho(x_j)\,dx_j\right\}\,\delta_{x_S}(dy_S)\,
   \prod_{i\in S^c}\left\{\rho(x_i)\,dx_i\,K_{S,i}(x_i,dy_i)\right\} \\
&= \left\{\prod_{j\in S}\rho(y_j)\,dy_j\right\}\,\delta_{y_S}(dx_S)\,
   \prod_{i\in S^c}\left\{\rho(y_i)\,dy_i\,K_{S,i}(y_i,dx_i)\right\} \\
&= \pi(dy)\,\widetilde K_S(y,dx),
\end{aligned}
\end{equation*}
using single-site detailed balance in the middle equality and the fact $x_S=y_S$ under the delta. Hence $\widetilde K_S$ is reversible w.r.t.\ $\pi$.
\end{proof}

 \section{\daniel{Further details on \texttt{posteriordb} experiment}}\label{appendix:posteriordb}
 The 45 \texttt{posteriordb} posteriors used in Section \ref{sec:posteriordb} are listed below as \emph{model--dataset (unconstrained dimension)}, with dimension as defined by the \texttt{Stan} model \citep{carpenter2017stan, magnusson2024posteriordb}.

\begingroup
\scriptsize
\sloppy
\setlength{\emergencystretch}{8em}
\noindent
arK-\allowbreak{}arK (7),
arma-\allowbreak{}arma11 (4),
bball\_\allowbreak{}drive\_\allowbreak{}event\_\allowbreak{}0-\allowbreak{}hmm\_\allowbreak{}drive\_\allowbreak{}0 (6),
bball\_\allowbreak{}drive\_\allowbreak{}event\_\allowbreak{}1-\allowbreak{}hmm\_\allowbreak{}drive\_\allowbreak{}1 (6),
diamonds-\allowbreak{}diamonds (26),
earnings-\allowbreak{}earn\_\allowbreak{}height (3),
earnings-\allowbreak{}log10earn\_\allowbreak{}height (3),
earnings-\allowbreak{}logearn\_\allowbreak{}height (3),
earnings-\allowbreak{}logearn\_\allowbreak{}height\_\allowbreak{}male (4),
earnings-\allowbreak{}logearn\_\allowbreak{}interaction (5),
earnings-\allowbreak{}logearn\_\allowbreak{}interaction\_\allowbreak{}z (5),
earnings-\allowbreak{}logearn\_\allowbreak{}logheight\_\allowbreak{}male (4),
eight\_\allowbreak{}schools-\allowbreak{}eight\_\allowbreak{}schools\_\allowbreak{}noncentered (10),
garch-\allowbreak{}garch11 (4),
gp\_\allowbreak{}pois\_\allowbreak{}regr-\allowbreak{}gp\_\allowbreak{}pois\_\allowbreak{}regr (13),
gp\_\allowbreak{}pois\_\allowbreak{}regr-\allowbreak{}gp\_\allowbreak{}regr (3),
hmm\_\allowbreak{}example-\allowbreak{}hmm\_\allowbreak{}example (4),
hudson\_\allowbreak{}lynx\_\allowbreak{}hare-\allowbreak{}lotka\_\allowbreak{}volterra (8),
kidiq-\allowbreak{}kidscore\_\allowbreak{}interaction (5),
kidiq-\allowbreak{}kidscore\_\allowbreak{}momhs (5),
kidiq-\allowbreak{}kidscore\_\allowbreak{}momiq (4),
kidiq-\allowbreak{}kidscore\_\allowbreak{}momhsiq (3),
kidiq\_\allowbreak{}with\_\allowbreak{}mom\_\allowbreak{}work-\allowbreak{}kidscore\_\allowbreak{}interaction\_\allowbreak{}c (5),
kidiq\_\allowbreak{}with\_\allowbreak{}mom\_\allowbreak{}work-\allowbreak{}kidscore\_\allowbreak{} interaction\_\allowbreak{}c2 (5),
kidiq\_\allowbreak{}with\_\allowbreak{}mom\_\allowbreak{}work-\allowbreak{}kidscore\_\allowbreak{}interaction\_\allowbreak{}z (5),
kidiq\_\allowbreak{}with\_\allowbreak{}mom\_\allowbreak{}work-\allowbreak{}kidscore\_\allowbreak{}mom\_\allowbreak{}work (5),
kilpisjarvi\_\allowbreak{}mod-\allowbreak{}kilpisjarvi (8),
low\_\allowbreak{}dim\_\allowbreak{}gauss\_\allowbreak{}mix-\allowbreak{}low\_\allowbreak{}dim\_\allowbreak{}gauss\_\allowbreak{}mix (5),
mesquite-\allowbreak{}logmesquite (8),
mesquite-\allowbreak{}logmesquite\_\allowbreak{}logva (5),
mesquite-\allowbreak{}logmesquite\_\allowbreak{}logvas (8),
mesquite-\allowbreak{}logmesquite\_\allowbreak{}logvash (7),
mesquite-\allowbreak{}logmesquite\_\allowbreak{}logvolume (3),
mesquite-\allowbreak{}mesquite (8),
nes1972-\allowbreak{}nes (10),
nes1976-\allowbreak{}nes (10),
nes1980-\allowbreak{}nes (10),
nes1984-\allowbreak{}nes (10),
nes1988-\allowbreak{}nes (10),
nes1992-\allowbreak{}nes (10),
nes1996-\allowbreak{}nes (10),
nes2000-\allowbreak{}nes (10),
one\_\allowbreak{}comp\_\allowbreak{}mm\_\allowbreak{}elim\_\allowbreak{}abs-\allowbreak{}one\_\allowbreak{}comp\_\allowbreak{}mm\_\allowbreak{}elim\_\allowbreak{}abs (4),
sblrc-\allowbreak{}blr (6),
sblri-\allowbreak{}blr (6).
\endgroup

% \vskip 0.2in

\end{document}